\let\cal\mathcal  
\newtheorem{thm}{Theorem} \newtheorem{lem}[thm]{Lemma}
\newtheorem{clm}[thm]{Claim} \newtheorem{cor}[thm]{Corollary}
\newtheorem{probl}{Open Problem}
\theoremstyle{remark} \newtheorem{remark}{Remark}
\newtheorem{dfn}{Definition} 
\newtheorem{obs}{Observation}
\newcommand{\Cite}[2]{{#1~}\cite{#2}} 
\let\geq\geqslant
\let\leq\leqslant 
\newcommand{\gf}[1]{GF(#1)} 
\renewcommand{\deg}[1]{d(#1)}
\newcommand{\length}[1]{l(#1)}
\newcommand{\dg}[1]{d_{#1}} 
\newcommand{\size}[1]{\mathrm{Size}(#1)}
\newcommand{\henv}[1]{ {#1}_{\mathrm{he}} } 
\newcommand{\lenv}[1]{{#1}_{\mathrm{le}} } 
\newcommand{\lmin}[1]{ {#1}_{\mathrm{min}} }
\newcommand{\lmax}[1]{ {#1}_{\mathrm{max}} }
\newcommand{\overbar}[1]{\mkern
  1.5mu\overline{\mkern-1.5mu#1\mkern-1.5mu}\mkern 1.5mu}
\renewcommand{\underbar}[1]{\mkern
  1.5mu\underline{\mkern-1.5mu#1\mkern-1.5mu}\mkern 1.5mu}
\newcommand{\lsat}[1]{ {\underbar{#1}} } 
\newcommand{\hsat}[1]{{\overbar{#1}} } 
\newcommand{\Lmin}[1]{ {\pol{#1}}_{\mathrm{min}} }
\newcommand{\Lmax}[1]{ {\pol{#1}}_{\mathrm{max}} }
\newcommand{\pr}{\times} \newcommand{\su}{+}
\newcommand{\nulis}{\mathsf{0}} 
\newcommand{\vienas}{\mathsf{1}}
\newcommand{\eq}{\rightleftharpoons} 
\newcommand{\A}{\mathbf{S}}
\newcommand{\B}{\mathbf{B}} 
\newcommand{\R}{\mathbf{A}}
\newcommand{\Mult}[2]{{#1}_{\mathrm{lin}}(#2) }
\renewcommand{\mathbf}[1]{\mathrm{#1}} 
\newcommand{\M}{\mathbf{Min}^-}
\newcommand{\Min}{\mathbf{Min}}
\newcommand{\Max}{\mathbf{Max}} 
\newcommand{\MAX}{\mathbf{Max}^-}
\newcommand{\RR}{\mathbb{R}} 
\newcommand{\NN}{\mathbb{N}}
\newcommand{\ZZ}{\mathbb{Z}} 
\newcommand{\f}{\mathcal{F}}
\newcommand{\F}{\mathsf{F}} 
\newcommand{\G}{\mathsf{G}}
\renewcommand{\H}{\mathsf{H}} 
\newcommand{\bound}[1]{\mu(#1)}
\newcommand{\bbound}[1]{\mathrm{sep}(#1)}
\newcommand{\pol}[1]{\hat{#1}}
\newcommand{\PERM}[1]{\mathrm{PER}_{#1}}
\newcommand{\HC}[1]{\mathrm{HC}_{#1}}
\newcommand{\STCONN}[1]{\mathrm{STCON}_{#1}}
\newcommand{\SSTCONN}[1]{f_{#1}}
\newcommand{\CONN}[1]{\mathrm{CONN}_{#1}}
\newcommand{\MP}[1]{\mathrm{MP}_{#1}}
\newcommand{\APSP}[1]{\mathrm{APSP}_{#1}}
\newcommand{\ST}[1]{\mathrm{ST}_{#1}}
\newcommand{\Clique}[1]{\mathrm{CLIQUE}_{#1}}
\newcommand{\tR}[1]{\mathrm{TR}_{#1}}
\newcommand{\tRP}[1]{\mathrm{TR}^{\ast}_{#1}}
\newcommand{\Prod}[1]{\mathrm{pol}(#1)}
\newcommand{\ext}[1]{\mathrm{ext}(#1)}
\newcommand{\Ext}[2]{\mathrm{ext}_{#2}(#1)}
\newcommand{\ddeg}[2]{\#_{#2}(#1)}
\newcommand{\gate}{u}
\newcommand{\match}[1]{m(#1)}
\newcommand{\up}[1]{\overline{#1}} 
\newcommand{\ff}[3]{#1_{#2}^{[#3]}}
\newcommand{\factor}[1]{K_{#1}}
\newcommand{\Factor}[2]{K_{#1}(#2)} 
\newcommand{\weight}[1]{\mu(#1)} 
\newcommand{\wweight}{\mu}
\newcommand{\depth}[1]{\mathrm{Depth}[#1]}
\newcommand{\Der}[2]{\partial #1/\partial #2}
\begin{document}

\title{Lower Bounds for Tropical Circuits and Dynamic
  Programs\thanks{Research supported by the DFG grant SCHN~503/6-1.}}

\author{Stasys Jukna\thanks{University of Frankfurt, Institute of
    Computer Science, D-60054 Frankfurt, Germany. Affiliated with
    Vilnius University, Institute of Mathematics and Informatics,
    Vilnius, Lithuania. Email: jukna@thi.informatik.uni-frankfurt.de}}

\maketitle

\begin{abstract}
  Tropical circuits are circuits with Min and Plus, or Max and Plus
  operations as gates.  Their importance stems from their intimate
  relation to dynamic programming algorithms.  The power of tropical
  circuits lies somewhere between that of monotone boolean circuits
  and monotone
  arithmetic circuits. In this paper we survey known and present some new  lower bounds arguments for tropical circuits, and hence, for dynamic programs.\\[2ex]
  {\bf Keywords:} Tropical circuits, dynamic programming, monotone
  arithmetic circuits, lower bounds.
\end{abstract}

\section{Introduction}

Understanding the power and limitations of fundamental algorithmic
pa\-ra\-digms---such as greedy or dynamic programming---is one of the
basic questions in the algorithm design and in the whole theory of
computational complexity. In this paper we focus on the dynamic
programming paradigm.

Our starting point is a simple observation that many dynamic
programming algorithms for optimization problems are just recursively
constructed \emph{circuits} over the corresponding semirings. Each
such circuit computes, in a natural way, some polynomial over the
underlying semiring. Most of known dynamic programming algorithms
correspond to circuits over the $(\min,+)$ or $(\max,+)$ semirings,
that is, to \emph{tropical circuits}.\footnote{There is nothing
  special about the term ``tropical''.  Simply, this term is used in
  honor of Imre Simon who lived in Sao Paulo (south tropic).  Tropical
  algebra and tropical geometry are now intensively studied topics in
  mathematics. } Thus, lower bounds for tropical circuits show the
limitations of dynamic programming algorithms over the corresponding
semirings.

The power of tropical circuits (and hence, of dynamic programming)
lies somewhere between that of monotone boolean circuits and monotone
arithmetic circuits:
\[
\mbox{monotone boolean \ $\leq$\ tropical\ $\leq$\ monotone
  arithmetic}
\]
and the gaps may be even exponential (we will show this in
Section~\ref{sec:gaps}).

Monotone \emph{boolean} circuits are most powerful among these three
models and, for a long time, only linear lower bounds were known for
such circuits.  First super-polynomial lower bounds for the $k$-clique
function $\Clique{}$ and the perfect matching function $\PERM{}$ were
proved by \Cite{Razborov}{razb1,razb2} by inventing his method of
approximations. At almost about the same time, explicit exponential
lower bounds were also proved by
\Cite{Andreev}{andr1,andr2}. \Cite{Alon and Boppana}{AB87} improved
Razborov's lower bound for $\Clique{}$ from super-polynomial until
exponential. Finally, \Cite{Jukna}{Juk99} gave a general and easy to
apply lower bounds criterium for monotone boolean and real-valued
circuits, yielding strong lower bounds for a row of explicit boolean
functions. These lower bounds hold for tropical circuits as well.

On the other hand, monotone \emph{arithmetic} circuits are much easier
to analyze: such a circuit cannot produce anything else but the
monomials of the computed polynomial, no ``simplifications'' (as
$x^2=x$ or $x+xy=x$) are allowed here. Exponential lower bounds on the
monotone arithmetic circuit complexity were proved already by
\Cite{Schnorr}{schnorr} (for $\Clique{}$), and \Cite{Jerrum and
  Snir}{jerrum} (for $\PERM{}$ and some other polynomials). A
comprehensive survey on arithmetic (not necessarily monotone) circuits
can be found in the book by \Cite{Shpilka and Yehudayoff}{shpilka}.

In this paper we summarize our knowledge about the power of tropical
circuits.  As far as we know, no similar attempt was undertaken in
this direction after the classical paper by \Cite{Jerrum and
  Snir}{jerrum}.  The main message of the paper is that not only
methods developed for monotone \emph{boolean} circuits, but
(sometimes) even those for a much weaker model of monotone
\emph{arithmetic} circuits can be used to establish limitations of
dynamic programming. Although organized as a survey, the paper
contains some new results, including:
\begin{enumerate}
\item A short and direct proof that tropical circuits for optimization
  problems with \emph{homogeneous} target polynomials are not more
  powerful than monotone arithmetic circuits
  (Theorem~\ref{thm:homog}).  This explains why we do not have
  efficient dynamic programming algorithms for optimization problems
  whose target sums all have the same length.  In the case of
  $\Min$-semirings, this was proved by \Cite{Jerrum and Snir}{jerrum}
  using the Farkas lemma.

\item A new and simple proof of \Cite{Schnorr's}{schnorr} lower bound
  on the size of monotone arithmetic circuits computing so-called
  ``separated'' polynomials (Theorem~\ref{thm:schnorr}).  A polynomial
  $f$ is \emph{separated} if the product of any two of its monomials
  contains no third monomial of $f$ distinct from these two ones.

\item A new and simpler proof of \Cite{Gashkov and
    Sergeev's}{gashkov,GS} lower bound on the size of monotone
  arithmetic circuits computing so-called ``$k$-free'' polynomials
  (Theorem~\ref{thm:GS}).  A polynomial is $k$-\emph{free} if it does
  not contain a product of two polynomials, both with more than $k$
  monomials. This extend's Schnorr's bound, since every separated
  polynomial is also $1$-free.

\item An easy to apply ``rectangle'' lower bound
  (Lemma~\ref{lem:simple}).

\item A truly exponential lower bound for monotone arithmetic circuits
  using expander graphs (Theorem~\ref{thm:trully}).

\end{enumerate}

\section{Semirings}
\label{sec:semirings}

A (commutative) semiring is a system $\A=(S,\su,\pr,\nulis,\vienas)$,
where $S$ is a set, $\su$ (``sum'') and $\pr$ (``product'') are binary
operations on $S$, and $0$ and $1$ are elements of $S$ having the
following three properties:
\begin{description}
\item[(i)] in both $(S,\su,\nulis)$ and $(S,\pr,\vienas)$, operation
  are associative and commutative with identities $\nulis$ and
  $\vienas$: $a\su \nulis=a$ and $a\pr \vienas=a$ hold for all $a\in
  S$;
\item[(ii)] product distributes over sum: $a\pr(b\su c)=(a\pr
  b)\su(a\pr c)$;
\item[(iii)] $a\pr \nulis=\nulis$ for all $a\in S$ (``annihilation''
  axiom).
\end{description}
A semiring is \emph{additively-idempotent} if $a\su a=a$ holds for all
$a\in S$, and is \emph{mul\-ti\-pli\-ca\-ti\-ve\-ly-idempotent} if
$a\pr a=a$ holds for all $a\in S$.

We will use the common conventions to save parenthesis by writing
$a\pr b\su c\pr d$ instead of $(a\pr b)\su(a\pr c)$, and replacing
$a\pr b$ by $ab$. Also, $a^n$ will stand for $a\pr a\pr \cdots \pr a$
$n$-times. If desired, we will also assume that the sets $\NN$, $\ZZ$
or $\RR$ also contain $+\infty$ and/or $-\infty$.

In this paper, we will be interested in the following semirings:
\begin{itemize}
\item Arithmetic semiring $\R=(\NN, +, \cdot,0,1)$.
\item Boolean semiring $\B=(\{0,1\},\lor,\land,0,1)$.
\item Min-semirings $\Min =(\NN,\min,+,+\infty,0)$ and
  $\M=(\ZZ,\min,+,+\infty,0)$.
\item Max-semirings $\Max=(\NN, \max, +,-\infty,0)$ and $\MAX=(\ZZ,
  \max, +,-\infty,0)$.
\item Min- and Max-semirings are called \emph{tropical} semirings.
\end{itemize}
Note that all these semirings, but $\R$, are additively-idempotent,
and none of them, but $\B$, is multiplicatively-idempotent.  Note also
that in arithmetic and in tropical semirings one usually allows
rational or even real numbers, not just integers. This corresponds to
considering optimization problems with real, not necessarily integral
``weights''. The point, however, is that lower-bound techniques, we
will consider below, work already on smaller domains.  In fact, they
work when, besides $\infty$ or $-\infty$, the domain contains $0$ and
$1$ or $0$ and $-1$. Roughly speaking, the larger is the domain, the
easier is to prove lower bounds over them. In particular, the bounds
remain true in larger domains as well.

Due to their intimate relation to discrete optimization, we will be
mainly interested in tropical semirings, and circuits over these
semirings. Lower bounds for such circuits give lower bounds for the
number of subproblems used by dynamic programming algorithm.  The
semirings $\M$ and $\MAX$ are isomorphic via the transformation
$x\mapsto -x$, so we will not consider $\MAX$ separately: all results
holding for $\M$ hold also for $\MAX$.

\section{Polynomials}

Let $\A=(S,\su,\pr,\nulis,\vienas)$ be a semiring, and let
$x_1,\ldots,x_n$ be variables ranging over~$S$.  A \emph{monomial} is
any product of these variables, where repetitions are allowed. By
commutativity and associativity, we can sort the products and write
monomials in the usual notation, with the variables raised to
exponents.  Thus, every monomial $x_1^{a_1}x_2^{a_2}\cdots x_n^{a_n}$
is uniquely determined by the vector of exponents
$(a_1,\ldots,a_n)\in\NN^n$, where $x_i^{0}=\vienas$.  (Note that in
tropical semirings, monomials are linear combinations
$a_1x_1+a_2x_2+\cdots+a_n x_n$, that is, sums, not products.)  The
\emph{degree}, $|p|$, of a monomial is the sum $|p|=a_1+\cdots+a_n$ of
its exponents. A monomial $p$ is \emph{multilinear} if every exponent
$a_i$ is either $0$ or $1$.  A monomial $p=x_1^{a_1}\cdots x_n^{a_n}$
\emph{contains} a monomial $q=x_1^{b_1}\cdots x_n^{b_n}$ (or $q$ is a
\emph{factor} of $p$) if $a_i\geq b_i$ for all $i=1,\ldots,n$, that
is, if $p=qq'$ for some monomial~$q'$.

By a \emph{polynomial}\footnote{Usually, polynomials of more than one
  variable are called \emph{multivariate}, but we will omit this for
  shortness.}  we will mean a finite sum of monomials, where
repetitions of monomials are allowed.  That is, we only consider
polynomials with nonnegative integer coefficients.  A polynomial is
\emph{homogeneous} if all its monomials have the same degree, and is
\emph{multilinear} if all its monomials are multilinear (no variables
of degree $>1$). For example, $f=x^2y \su xyz$ is homogeneous but not
multilinear, whereas $g= x\su yz$ is multilinear but not homogeneous.
The sum and product of two polynomials is defined in the standard way.
For polynomials $f,h$ and a monomial $p$, we will write:
\begin{itemize}

\item $f=h$ if $f$ and $h$ have the same monomials appearing not
  necessarily with the same coefficients;
\item $f\eq h$ if $f$ and $h$ have the same monomials appearing with
  the same coefficients;

\item $f\subseteq h$ if every monomial of $f$ is also a monomial of
  $h$;

\item $p\in f$ if $p$ is a monomial of $f$;

\item $|f|$ to denote the number of \emph{distinct} monomials in $f$;

\item $X_p$ to denote the set of variables appearing in $p$ with
  non-zero degree.

\end{itemize}

Every polynomial $f(x_1,\ldots,x_n)$ defines a function
$\pol{f}:S^n\to S$, whose value $\pol{f}(s_1,\ldots,s_n)$ is obtained
by substituting elements $s_i\in S$ for $x_i$ in $f$. Polynomials $f$
and $g$ are \emph{equivalent} (or \emph{represent the same function})
over a given semiring, if $\pol{f}(s)=\pol{h}(s)$ holds for all $s\in
S^n$.  It is important to note that the same polynomial $f(x)\eq
\sum_{I\in{\cal I}} c_I\prod_{i\in I}x_i^{a_i}$ represents different
functions over different semirings:
\begin{xalignat*}{2}
  \pol{f}(x)&= \sum_{I\in{\cal I}} c_I\prod_{i\in I}x_i^{a_i}
  &&\mbox{over $\R$
    (counting)}\\
  \pol{f}(x)&= \bigvee_{I\in{\cal I}}\ \bigwedge_{i\in I}x_i &&\mbox{over $\B$ (existence)}\\
  \pol{f}(x)&= \min_{I\in{\cal I}}\ \sum_{i\in I}a_ix_i  &&\mbox{over $\Min$ and $\M$ (minimization)}\\
  \pol{f}(x)&= \max_{I\in{\cal I}}\ \sum_{i\in I}a_ix_i &&\mbox{over
    $\Max$ and $\MAX$ (maximization)}
\end{xalignat*}
Note that in the boolean semiring as well as in all four tropical
semirings, the coefficients $c_I$ do not influence the computed value
$\pol{f}(x)$, and we can assume that $c_I=1$ for all $I\in{\cal I}$;
this is because, say, $\min\{x,x,y\}=\min\{x,y\}$.  The degrees,
however, are important: say, $\min\{2x,y\}\neq \min\{x,y\}$.

\section{Structure of Equivalent Polynomials}
Let $f$ and $h$ be any two polynomials on the same set of variables.
In general, if $f$ and $h$ are equivalent (i.e. if $\pol{f}=\pol{h}$
holds) over some semiring, then neither $f\eq h$ nor even $f=h$ need
to hold.  The arithmetic semiring is here an exception.

\begin{lem}\label{lem:arithm0}
  If $\pol{f}=\pol{h}$ holds over the arithmetic semiring $\R$, then
  $f\eq h$.
\end{lem}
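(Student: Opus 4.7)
The plan is to reduce the statement to the classical fact that a polynomial over an infinite integral domain is determined by the function it induces. Concretely, view both $f$ and $h$ as elements of the polynomial ring $\ZZ[x_1,\ldots,x_n]$, where each distinct monomial $x_1^{a_1}\cdots x_n^{a_n}$ has a nonnegative integer coefficient obtained by collecting the repeated monomials in the definition. Since $\pol{f}=\pol{h}$ on $\NN^n$, the difference $g := f-h\in\ZZ[x_1,\ldots,x_n]$ evaluates to $0$ on every point of $\NN^n$. The goal is to show that $g$ is the zero polynomial formally; this is exactly $f\eq h$.

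To prove that $g$ vanishes formally, I would induct on the number of variables $n$. For $n=1$, a nonzero polynomial in $\ZZ[x_1]$ of degree $d$ has at most $d$ roots in $\ZZ$, so vanishing on the infinite set $\NN$ forces $g\equiv 0$. For the inductive step, I would write
\[
g(x_1,\ldots,x_n)=\sum_{k=0}^{D} g_k(x_1,\ldots,x_{n-1})\, x_n^k,
\]
where $D$ is the largest exponent of $x_n$ appearing in $g$ and each $g_k\in\ZZ[x_1,\ldots,x_{n-1}]$. Fixing any $(s_1,\ldots,s_{n-1})\in\NN^{n-1}$, the resulting univariate polynomial in $x_n$ vanishes on all of $\NN$, so by the base case each coefficient $g_k(s_1,\ldots,s_{n-1})$ equals $0$. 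Since $(s_1,\ldots,s_{n-1})$ was arbitrary, every $g_k$ vanishes on $\NN^{n-1}$, and the induction hypothesis gives $g_k\equiv 0$ formally, for every $k$. Hence $g\equiv 0$ formally.

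Equating coefficients in $f-h=0$ monomial-by-monomial yields that $f$ and $h$ agree on every monomial, coefficients included; this is precisely $f\eq h$.

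There is essentially no serious obstacle here: the only subtlety is that the paper's polynomials live in a semiring with nonnegative coefficients, so the difference $f-h$ must be formed in the ambient ring $\ZZ[x_1,\ldots,x_n]$ rather than in the semiring itself. Once this move is made, the argument is the textbook fact that polynomial identity and polynomial-function identity coincide over any infinite integral domain.
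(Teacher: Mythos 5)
Your proof is correct, but it takes a different route from the paper's. Both arguments begin the same way: form the difference $g=f-h$ in the ambient ring $\ZZ[x_1,\ldots,x_n]$ and show that a polynomial vanishing identically as a function on $\NN^n$ must be formally zero. From there the paper (following Gashkov) picks a monomial $p$ of $g$ of maximum degree and applies the partial derivatives $\Der{}{x_i}$ corresponding to the exponents of $p$; by maximality every other monomial is annihilated, leaving a nonzero constant, whereas the derivative of the zero function must be zero. Your argument instead inducts on the number of variables, using the finite-roots bound for univariate polynomials over an integral domain as the base case and peeling off one variable at a time in the inductive step. Your version is somewhat more self-contained and careful: the paper's derivative argument is a sketch that implicitly extends the zero function from the discrete set $\NN^n$ to $\RR^n$ (or replaces derivatives by finite differences) before differentiating, a point it does not spell out, while your induction works directly with evaluations at points of $\NN^n$ and needs nothing beyond the fact that $\NN$ is infinite and $\ZZ$ is an integral domain. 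The paper's argument is shorter once that extension is granted. Both correctly conclude that $g\equiv 0$ formally and hence that $f\eq h$, coefficients included.
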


\begin{proof}
  There are several ways to prove this fact. We follow the argument
  suggested by Sergey Gashkov (personal communication).  Suppose that
  $\pol{f}=\pol{h}$ but $f\not\eq h$. Since $f\not\eq h$, the
  polynomial $g=f-h$ contains at least one monomial. Let $p$ be a
  monomial of $g$ of maximum degree.  Take all partial derivatives of
  $g$ with respect to the variables of $p$ until all they
  disappear. Since $p$ has maximum degree, we obtain some constant
  $\neq 0$.  But since $\pol{g}=\pol{f}-\pol{h}$ is a zero function,
  the derivative should be zero, a contradiction.
\end{proof}

In tropical semirings, we only have weaker structural properties.  For
a polynomial $f$, let $\lmin{f}\subseteq f$ denote the set of all
monomials of $f$ not containing any other monomial of $f$, and
$\lmax{f}\subseteq f$ denote the set of all monomials of $f$ not
contained in any other monomial of~$f$.  For example, if $f=\{x, x^2y,
yz\}$, then $\lmin{f}=\{x,yz\}$ and $\lmax{f}=\{x^2y,yz\}$.  Note that
every monomial of $f$ contains (properly or not) at least one monomial
of $\lmin{f}$, and is contained in at least one monomial of
$\lmax{f}$.  Note also that $\Lmin{f}=\pol{f}$ holds in $\Min$
semirings, and $\Lmax{f}=\pol{f}$ holds in $\Max$ semirings.

\begin{lem}\label{lem:arithm1}
  If $\pol{f}=\pol{h}$ holds over $\Min$, and if $h$ is multilinear,
  then $\lmin{f}=\lmin{h}$.
\end{lem}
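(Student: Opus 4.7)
The plan is to reduce first to the case where $f$ and $h$ are antichains under divisibility. The identity $\Lmin{f} = \pol{f}$ over $\Min$ noted just before the lemma (together with the analogous one for $h$) lets me replace $f$ by $\lmin{f}$ and $h$ by $\lmin{h}$ without changing the hypothesis; after this reduction $h$ is still multilinear, and the goal strengthens to showing $f = h$ as \emph{sets} of monomials.

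The workhorse will be a ``place-value'' family of assignments. Fix $D$ larger than every exponent occurring in $f$ or $h$, set $T = D + 1$, and take $M$ much larger than $nT^n$. For a chosen $J \subseteq [n]$, evaluate the variables at $x_i = T^i$ for $i \in J$ and $x_i = M$ for $i \notin J$. Any monomial using a variable outside $J$ then has value at least $M$, while a monomial $s$ supported inside $J$ evaluates to $\sum_i e_i(s)\,T^i$, where $e_i(s)$ is the exponent of $x_i$ in $s$; this is strictly below $M$. Hence both $\pol{f}$ and $\pol{h}$ at such an assignment are minima ranging only over monomials supported in $J$, and each such minimum is an integer whose base-$T$ digits recover the exponent vector of the minimizing monomial.

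For the inclusion $\lmin{h} \subseteq \lmin{f}$, I pick $p = \prod_{i \in I} x_i \in h$ and use $J = I$. The antichain and multilinearity of $h$ force $p$ to be the only monomial of $h$ supported in $I$, so $\pol{h}$ equals $\sum_{i \in I} T^i$. Matching with $\pol{f}$ produces some $r \in f$ with $X_r \subseteq I$ and $\sum_{i \in X_r} e_i(r)\,T^i = \sum_{i \in I} T^i$; since every $e_i(r) < T$, uniqueness of base-$T$ representations yields $X_r = I$ and $e_i(r) = 1$ throughout, i.e., $r = p$. The reverse inclusion is then short: for $r \in f$, using $J = X_r$ keeps $\pol{h}$ below $M$, so some $q \in h$ must have $X_q \subseteq X_r$; by the forward direction $q \in f$, and since $q$ is multilinear and supported inside $X_r$ it divides $r$, so the antichain condition on $f$ forces $q = r$, whence $r \in h$.

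The one step to watch is the base-$T$ recovery: it needs $T$ to strictly exceed every exponent in $f$ and $h$, chosen \emph{before} the very large $M$, so that every nonzero digit in the two representations is a legal base-$T$ digit. With that ordering of quantifiers the place-value trick converts the single scalar identity $\pol{f} = \pol{h}$, evaluated at one carefully crafted point, into pointwise equality of exponent vectors, which is the heart of the lemma.
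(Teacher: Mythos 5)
Your proof is correct, and it takes a somewhat different route from the paper's. The preliminary reduction to antichains via $\Lmin{f}=\pol{f}$ is legitimate ($\lmin{h}$ inherits multilinearity from $h$, and proving set equality for the reduced pair gives exactly $\lmin{f}=\lmin{h}$), and both inclusions go through: the antichain-plus-multilinearity argument correctly isolates $p$ as the unique monomial of $h$ supported in $I$, the base-$T$ digit recovery (with $T$ exceeding every exponent, chosen before $M$) pins the minimizer of $f$ down to $p$ itself, and the reverse inclusion closes via divisibility and the antichain condition on $f$. The shared skeleton with the paper is: prove two inclusions, each by evaluating at one assignment that makes every monomial not supported in a chosen set astronomically large. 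But the mechanisms differ. The paper sets the variables of the relevant support to $1$ and the rest to $+\infty$, and distinguishes monomials only by total degree; that suffices because a multilinear monomial is the strictly cheapest monomial on its own support under such an assignment. Your geometric assignment $x_i=T^i$ recovers the \emph{entire exponent vector} of the minimizer from a single scalar, which is heavier machinery than needed here but buys two things: you never use $+\infty$ (a large finite $M$ suffices, so the result holds over the finite part of the domain), and the identification of the minimizing monomial is exact rather than by degree comparison. One cosmetic point: the small values can be as large as about $(T-1)\sum_{i\le n}T^{i}<nT^{n+1}$, not $nT^{n}$, so $M$ should dominate $nT^{n+1}$; your ``much larger'' absorbs this.
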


\begin{proof}
  Let us first show that every monomial of $f$ must contain at least
  one monomial of $h$, and hence, of $\lmin{h}$.  To see this, assume
  that there is a monomial $p\in f$ which contains no monomial of
  $h$. Since $h$ is multilinear, this means that every monomial of $h$
  must contain a variable not in $X_p$.  So, on the assignment $a_p$
  which sets to $1$ all variables in $X_p$, and sets to $\infty$ all
  the remaining variables, we have that $\pol{h}(a_p)=\infty$.  But
  $\pol{f}(a_p)\leq \pol{p}(a_p)=|X_p|<\infty$, a contradiction with
  $\pol{f}=\pol{h}$.

  Since no monomial in $\lmin{f}$ can contain another monomial of $f$,
  it remains therefore to show that $\lmin{h}\subseteq f$. For this,
  assume that there is a monomial $q\in \lmin{h}$ such that $q\not\in
  f$. If we take the assignment $a_q$, then $\pol{h}(a_q)
  =\pol{q}(a_q)=|X_q|$. On the other hand, the assignment $a_q$ sets
  all monomials $p\in f$ such that $X_p\not\subseteq X_q$ to
  $\infty$. Each of the remaining monomials $p\in f$ (if there is any)
  must satisfy $X_p\subseteq X_q$. But we already know that $p$ must
  contain some monomial $q'\in \lmin{h}$, that is, $X_{q'}\subseteq
  X_p\subseteq X_q$. Since both monomials $q$ and $q'$ are multilinear
  and belong to $\lmin{h}$, this implies $q=q'$, and hence, also
  $X_p=X_q$. Since $q$ is multilinear and $p\neq q$, this means that
  $p$ must have strictly larger degree $|p|$ than $|X_q|$, and hence,
  $\pol{p}(a_q)=|p|>|X_q|=\pol{h}(a_q)$, a contradiction with
  $\pol{f}=\pol{h}$.
\end{proof}

\begin{remark}
  Note that Lemma~\ref{lem:arithm1} needs not to hold, if both
  polynomials are not multilinear. Say, if $f=\min\{2x, x+y, 2y\}$ and
  $h=\min\{2x, 2y\}$, then $\pol{f}=\pol{h}$ holds (because $x+y\geq
  \min\{2x,2y\}$), but $\lmin{f}=f\neq h=\lmin{h}$.  In this example,
  monomial $x+y=\tfrac{1}{2}(2x) +\tfrac{1}{2}(2y)$ is a convex
  combination of the monomials $2x$ and $2y$. And in fact, using the
  Farkas lemma about solvability of systems of linear inequalities,
  \Cite{Jerrum and Snir}{jerrum} have proved that, if $f$ and $h$ are
  arbitrary (not necessarily multilinear) polynomials such that
  $\pol{f}=\pol{h}$ holds over $\Min$, then there is a set
  $h'\subseteq h$ of monomials such that $h'\subseteq f$, and every
  monomial of $f\cup h$ is at least some convex combination of the
  monomials in $h'$.
\end{remark}

\begin{lem}\label{lem:arithm2}
  If $\pol{f}=\pol{h}$ holds over $\Max$, and if $h$ is multilinear,
  then $f$ is also multilinear, and $\lmax{f}=\lmax{h}$.
\end{lem}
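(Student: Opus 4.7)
I would mirror the proof of Lemma~\ref{lem:arithm1}, replacing the assignments that push variables to $+\infty$ with ones that push them to $-\infty$, which is available as the additive identity of $\Max$. For any monomial $q$, let $a_q$ denote the assignment with $x_i = 1$ for $i \in X_q$ and $x_i = -\infty$ otherwise; a monomial $p$ then evaluates on $a_q$ to $|p|$ if $X_p \subseteq X_q$, and to $-\infty$ otherwise.

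First I would extract multilinearity of $f$. Fix any $p \in f$ and evaluate on $a_p$. On the left, $\pol{f}(a_p) \geq \pol{p}(a_p) = |p|$. On the right, multilinearity of $h$ forces $|q| = |X_q| \leq |X_p|$ for every $q \in h$ with $X_q \subseteq X_p$, while the remaining monomials of $h$ evaluate to $-\infty$; hence $\pol{h}(a_p) \leq |X_p|$. The hypothesis $\pol{f} = \pol{h}$ then forces $|p| \leq |X_p|$, and since $|p| = \sum_i a_i \geq |X_p|$ always holds with equality precisely when $p$ is multilinear, we conclude that $p$ itself is multilinear.

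With $f$ now known to be multilinear, the identity $\lmax{f} = \lmax{h}$ follows in the spirit of Lemma~\ref{lem:arithm1}. To show $\lmax{f} \subseteq \lmax{h}$, take $p \in \lmax{f}$: evaluating on $a_p$ and using multilinearity on both sides gives $\pol{f}(a_p) = \pol{h}(a_p) = |X_p|$, realized by some $q \in h$ with $X_q \subseteq X_p$ and $|X_q| = |X_p|$, whence $X_q = X_p$ and so $q = p \in h$. If $p$ were not in $\lmax{h}$, some $q' \in h$ would satisfy $X_p \subsetneq X_{q'}$; then testing on $a_{q'}$ produces a monomial $p' = q'$ of $f$ strictly containing $p$, contradicting $p \in \lmax{f}$. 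The reverse inclusion $\lmax{h} \subseteq \lmax{f}$ is symmetric, now that $f$ is known to be multilinear.

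The delicate point is really only the initial extraction of multilinearity of $f$: the second-step argument needs both polynomials multilinear, so one must first set up the inequality $|p| \leq |X_p|$ via the test assignment $a_p$, simultaneously exploiting multilinearity of $h$ on the right and the availability of $-\infty$ in $\Max$ to suppress the monomials outside $X_p$ on both sides.
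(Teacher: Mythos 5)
Your proof is correct, but it is organized differently from the paper's. The paper's own argument for this lemma stays entirely inside $\{0,1\}$-valued assignments: multilinearity of $f$ is forced by setting a single repeated variable to $1$ and everything else to $0$ (so $\pol{h}\leq 1$ while the offending monomial of $f$ contributes $\geq 2$), and the inclusion $\lmax{h}\subseteq f$ is obtained with the assignment $b_p$ ($1$ on $X_p$, $0$ elsewhere), under which a monomial missing a variable of $X_p$ is not annihilated but merely bounded by $|X_p|-1$ — which suffices for the contradiction. You instead transplant the $\Min$-argument of Lemma~\ref{lem:arithm1} verbatim, using the additive identity $-\infty$ of $\Max$ to kill all monomials outside $X_p$, and you derive multilinearity of $f$ globally from the inequality $|p|\leq|X_p|$ rather than locally from one repeated variable. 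Both routes are sound (the domain of $\Max$ does contain $-\infty$, so your test points are legitimate), and each buys something: your version makes the duality with Lemma~\ref{lem:arithm1} transparent and lets you finish by a clean symmetry argument ($\lmax{f}\subseteq\lmax{h}$ and back, once both polynomials are known multilinear), whereas the paper's version shows that the conclusion is already forced by the behaviour of the circuit on the finite subdomain $\{0,1\}$, without ever invoking $-\infty$ — a slightly stronger statement in the spirit of the paper's remark that lower-bound arguments should work on small domains. One small point worth making explicit in your first step: the bound $\pol{h}(a_p)\leq|X_p|$ uses that at least the monomials $q\in h$ with $X_q\not\subseteq X_p$ evaluate to $-\infty$ and the rest, being multilinear, to $|X_q|\leq|X_p|$; if \emph{no} monomial of $h$ satisfies $X_q\subseteq X_p$ then $\pol{h}(a_p)=-\infty$ while $\pol{f}(a_p)\geq|p|>-\infty$, which is an immediate contradiction, so the inequality $|p|\leq|X_p|$ is available in either case.
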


\begin{proof}
  Assume that $f$ is not multilinear.  Then $f$ contains a monomial
  $p$ (sum) in which some variable $x_i$ appears more than once. If we
  set this variable to $1$ and the rest to $0$, then $\pol{h}$ takes
  some value $\leq 1$, but $\pol{f}$ takes value $|p|\geq 2$, a
  contradiction with $\pol{f}=\pol{h}$.  Thus, both polynomials $f$
  and $h$ must be multilinear.

  We claim that every monomial of $f$ must be contained in at least
  one monomial of $h$.  Indeed, if some monomial $p\in f$ is contained
  in none of the monomials $q\in h$, then every monomial $q\in h$ is
  missing at least one variable from $X_p$.  So, on the assignment
  $b_p$ which sets to $1$ all variables in $X_p$, and sets to $0$ all
  the remaining variables, we have that $\pol{h}(b_p)\leq
  |X_p|-1$. But $\pol{f}(b_p)\geq \pol{p}(b_p)=|X_p|$, a contradiction
  with $\pol{f}=\pol{h}$.

  It remains therefore to show that $\lmax{h}\subseteq f$. For this,
  assume that there is a monomial $q\in \lmax{h}$ such that $q\not\in
  f$. If we take the assignment $b_q$, then $\pol{h}(a_q)
  =\pol{q}(a_q)=|X_q|$. On the other hand, $X_p\not\supseteq X_q$ must
  holds for every monomial $p\in f$, implying that $\pol{f}(b_q)\leq
  |X_q|-1$.  Indeed, we already know that every monomial $p\in f$ must
  be contained in some monomial $q'\in \lmax{h}$. So, $X_p\supseteq
  X_q$ implies $X_{q'}\supseteq X_p\supseteq X_q$. Since both
  monomials $q$ and $q'$ belong to $\lmax{h}$, this implies $X_p=X_q$,
  and hence also $p=q$ because both monomials $p$ and $q$ are
  multilinear. This contradicts our assumption $q\not\in f$.
\end{proof}

In tropical semirings $\M$ and $\MAX$, we have an even stronger
property.

\begin{lem}\label{lem:multilin}
  If $\pol{f}=\pol{h}$ holds over a tropical semiring $\M$ or $\MAX$,
  and if $h$ is multilinear, and then $f= h$.
\end{lem}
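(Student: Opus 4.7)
The plan is to work in $\M$ first (the $\MAX$ case follows from the semiring isomorphism $x\mapsto -x$ noted in Section~\ref{sec:semirings}) and to exploit the one feature that distinguishes $\M$ from $\Min$: the availability of negative values. I would split the argument into two steps --- first force $f$ to be multilinear, then show that the monomial sets of $f$ and $h$ coincide.

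\emph{Step 1: $f$ is multilinear.} Test one variable at a time. Fix $i$ and consider the assignment $s$ with $s_i=-1$ and $s_j=0$ for $j\neq i$. For any monomial $p$ (multilinear or not), $\pol{p}(s)=-a_i$, where $a_i$ is the exponent of $x_i$ in $p$; hence $\pol{f}(s)=-\max_{p\in f}a_i(p)$. Since $h$ is multilinear, every monomial of $h$ contributes either $0$ or $-1$ at $s$, so $\pol{h}(s)\geq -1$. The identity $\pol{f}(s)=\pol{h}(s)$ then forces $a_i(p)\leq 1$ for every $p\in f$. Running this over all $i$ shows that $f$ is multilinear. (The same idea fails over $\Min$, since non-negative arguments cannot amplify a high-degree penalty, which is precisely why Lemma~\ref{lem:arithm1} only controlled $\lmin{f}$.)

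\emph{Step 2: $f=h$ as sets of monomials.} Identify multilinear monomials with subsets of $\{1,\ldots,n\}$. Fix any $q_0\in h$ and set $X_0=X_{q_0}$. Choose $M$ much larger than $n$ and use the assignment $s_j=-1$ for $j\in X_0$, $s_j=M$ for $j\notin X_0$. A multilinear monomial with support $Y$ evaluates to $-|Y\cap X_0|+M\,|Y\setminus X_0|$; this is at least $M-n$ when $Y\not\subseteq X_0$, and otherwise equals $-|Y|\geq -|X_0|$ with equality iff $Y=X_0$. Hence $\pol{h}(s)=-|X_0|$, attained uniquely by $q_0$. From $\pol{f}(s)=-|X_0|$ and the same analysis applied to $f$ (now known to be multilinear), some monomial of $f$ must have support exactly $X_0$; call it $p_0$. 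Since both $p_0$ and $q_0$ are multilinear with the same support, $p_0=q_0$, so $q_0\in f$. Swapping the roles of $f$ and $h$, which is legitimate once both are multilinear, yields $f=h$.

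The main obstacle is Step~1: packaging the availability of negative arguments in $\M$ into a clean test that kills any non-multilinear monomial in $f$ purely from the multilinearity of $h$. Once both polynomials are known to be multilinear, Step~2 is essentially the geometric fact that every $0/1$-vector is a vertex of the Boolean hypercube and can be isolated by a linear functional, so no Farkas-style convex-combination analysis (as in the remark after Lemma~\ref{lem:arithm1}) is needed here.
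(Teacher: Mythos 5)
Your proof is correct. Step~1 is essentially the paper's own argument: the paper also kills a repeated variable $x_i$ by setting $x_i=-1$ and all other variables to $0$ in $\M$ (and handles $\MAX$ either by the analogous assignment with $+1$ or, as you do, by the isomorphism $x\mapsto -x$, which the paper explicitly licenses in Section~\ref{sec:semirings}).

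Step~2 is where you genuinely diverge. The paper first invokes Lemma~\ref{lem:arithm1} to get $\lmin{f}=\lmin{h}$ and the mutual-containment facts, and then, for a hypothetical $p\in h\setminus f$, evaluates at the assignment ``$-1$ on $X_p$, $0$ elsewhere,'' arguing that every monomial of $f$ must miss a variable of $p$. Your assignment instead puts a large positive weight $M$ on the variables \emph{outside} $X_0$, which makes every monomial whose support is not contained in $X_0$ evaluate to at least $M-n$, and makes $-|X_0|$ achievable only by the monomial with support exactly $X_0$. This isolates each monomial of $h$ (and, after Step~1, of $f$) in one shot, with no appeal to Lemma~\ref{lem:arithm1} and no case analysis on how $X_q$ relates to $X_p$; in particular it disposes cleanly of the case $X_p\subsetneq X_q$, which the paper's ``every $q\in f$ misses a variable of $p$'' step handles only implicitly via the earlier lemma. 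The trade-off is that your argument uses arbitrarily large positive elements of $\ZZ$, whereas the paper's evaluations stay in $\{-1,0\}$ (plus the reuse of Lemma~\ref{lem:arithm1}, whose proof uses $\{1,\infty\}$); both are legitimate in $\M$, but the paper's route is the more economical in the size of the test domain, while yours is the more self-contained.
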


\begin{proof}
  We claim that the polynomial $f$ must be also multilinear. To see
  this, assume that $f$ contains a monomial $p$ (sum) in which some
  variable $x_i$ appears more than once. Then, in the semiring $\M$,
  we can set $x_i=-1$ and $x_j=0$ for all $j\neq i$. Under this
  assignment, we have $\pol{h}(x)\geq -1$, because all monomials of
  $h$ get value $\geq -1$, but $\pol{f}(x)\leq -2$ since already the
  monomial $p$ of $f$ gets value $\leq -2$, a contradiction. The the
  $\MAX$ semiring (and even in $\Max$), it is enough to set $x_i=1$
  and $x_j=0$ for all $j\neq i$ to get the desired contradiction.

  Let us now show that $f=h$ must hold over the semiring $\M$; the
  argument for $\MAX$ is similar. We know that both polynomials $f$
  and $h$ are multilinear.  Hence, Lemma~\ref{lem:arithm1} implies
  that $\lmin{f}=\lmin{h}$ (this holds even in $\Min$). In particular,
  every monomial of $f$ must contain at least one monomial of $h$, and
  every monomial of $h$ must contain at least one monomial of $f$.
  Thus, $h\not\subseteq f$ can only happen, if there is a monomial
  $p\in h$ such that, for every monomial $q\in f$, we have that either
  $X_p\not\supseteq X_q$ or $X_p\supset X_q$ (proper inclusion). In
  any case, every monomial $q\in f$ misses some variable of $p$.  So,
  if we assign $-1$ to all variables of $p$, and $0$ to the remaining
  variables, then $h$ takes some value $\leq -|X_p|$. But since each
  monomial $q\in f$ misses at least one variable of $p$, the value of
  each of these monomials, and hence the value of $f$, must be $\geq
  |X_p|+1$, a contradiction with $\pol{f}=\pol{h}$.  This shows
  $h\subseteq f$. The proof of the converse inclusion $f\subseteq h$
  is the same.
\end{proof}

Note that for non-multilinear polynomials $f$,
Lemma~\ref{lem:multilin} needs not to hold. For example, If
$f=\min\{x,2x,3x\}$ and $h=\min\{x,3x\}$, then $\pol{f}=\pol{h}$ holds
over $\M$, but $f\neq h$.

\section{Circuits and their Polynomials}
\label{sec:circuits}

A \emph{circuit} $\F$ over a semiring $\A=(S,\su,\pr,\nulis,\vienas)$
is a usual fanin-$2$ circuit whose inputs are variables
$x_1,\ldots,x_n$ and constants $\nulis$ and $\vienas$. Gates are
fanin-$2$ $\su$ and $\pr$. That is, we have a directed acyclic graph
with $n+2$ fanin-$0$ nodes labeled by
$x_1,\ldots,x_n,\nulis,\vienas$. At every other node, the sum $(\su$)
or the product ($\pr$) of its entering nodes is computed; nodes with
assigned operations are called \emph{gates}.  The \emph{size} of $\F$,
denoted by $\size{\F}$, is the number of gates in $\F$. The
\emph{depth} is the largest number of edges in a path from an input
gate to an output gate.

Like polynomials, circuits are also ``syntactic'' objects. So, we can
associate with every circuit $\F$ the unique polynomial $F$
\emph{produced} by $\F$ inductively as follows:\footnote{We will
  always denote circuits as upright letters $\F, \G, \H, \ldots$, and
  their produced polynomials by italic versions $F,G, H,\ldots$.}
\begin{itemize}
\item If $\F=x_i$, then $F\eq x_i$.

\item If $\F=\G\su \H$, then $F\eq \sum_{p\in G} p \su \sum_{q\in H}
  q$.

\item If $\F=\G\pr \H$, then $F\eq \sum_{p\in G}\sum_{q\in H}p q$.
\end{itemize}

When producing the polynomial $F$ from a circuit $\F$ we only use the
generic semiring axioms (i)--(iii) to write the result as a polynomial
(sum of monomials). For example, if $\F=x\pr (\vienas\su y)$ then
$F=x\su xy$, even though $\pol{F}=x$ in $\B$ and $\Min$, and
$\pol{F}=xy$ in $\Max$. It is thus important to note that the produced
by a given circuit $\F$ polynomial $F$ is the same over \emph{any}
semiring!

\begin{dfn}
  A circuit $\F$ \emph{computes} a polynomial $f$ if $\pol{F}=\pol{f}$
  ($F$ and $f$ coincide as functions).  A circuit $\F$ \emph{produces}
  $f$ if $F=f$ ($F$ and $f$ have the same set of monomials).
\end{dfn}
A circuit $\F$ \emph{simultaneously} computes (or produces) a given
set $\f$ of polynomials if, for every polynomial $f\in\f$, there is a
gate in $\F$ at which $f$ is computed (or produced).

When analyzing circuits, the following concept of ``parse graphs'' is
often useful.  A \emph{parse-graph} $\G$ in $\F$ is defined
inductively as follows: $\G$ includes the root (output gate) of
$\F$. If $\gate$ is a sum-gate, then exactly one of its inputs is
included in $\G$. If $\gate$ is a product gate, then both its input
gates are included in $\G$.  Note that each parse-graph produces
exactly one monomial in a natural way, and that each monomial $p\in F$
is produced by at least one parse-graph. If $p$ is multilinear, then
each parse-graph for $p$ is a tree.

\begin{itemize}

\item A circuit is \emph{homogeneous}, if polynomials produced at its
  gates are homogeneous. It is easy to see that a circuit is
  homogeneous if and only if the polynomial produced by it is
  homogeneous.

\item A circuit is \emph{multilinear}, if for every its product gate
  $\gate=v\pr w$, the sets of variables of the polynomials produced at
  gates $v$ and $w$ are disjoint. Sometimes, multilinear (in our
  sense) circuits are called also \emph{syntactically multilinear}.
\end{itemize}
Note that multilinear circuits can only compute multilinear
polynomials but, in general, circuits computing multilinear
polynomials need not be multilinear: this happens, for example, in
semirings $\B$ and~$\Min$. Still, Lemmas~\ref{lem:arithm2} and
\ref{lem:multilin} imply that this cannot happen in the remaining
three tropical semirings:

\begin{lem}\label{lem:multilin1}
  Every circuit computing a multilinear polynomial $f$ over $\R$,
  $\Max$, $\M$ or $\MAX$ must be multilinear. Moreover, over $\R$,
  $\M$ and $\MAX$, the circuit must even produce~$f$.
\end{lem}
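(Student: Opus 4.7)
The plan is to combine the structural results of Section~4 with a standard monotonicity argument for produced polynomials. Let $\F$ be a circuit computing $f$; write $F$ for the polynomial it produces and $F_v$ for the polynomial produced at an internal gate $v$, so that $\pol{F}=\pol{f}$.

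The first step is to show that the produced polynomial $F$ is itself multilinear in each of the four listed semirings. Over $\R$, Lemma~\ref{lem:arithm0} yields $F\eq f$, hence $F=f$ is multilinear. Over $\Max$, Lemma~\ref{lem:arithm2} states multilinearity of $F$ directly. Over $\M$ and $\MAX$, Lemma~\ref{lem:multilin} gives the stronger conclusion $F=f$, so $F$ is multilinear. In particular, over $\R$, $\M$ and $\MAX$ the ``moreover'' claim $F=f$ is already in hand, which handles the second sentence of the lemma.

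The second step is to push multilinearity of $F$ back into the syntactic structure of $\F$. Suppose, toward a contradiction, that some product gate $\gate=v\pr w$ violates the condition, so there is a variable $x_j$ lying in both $X_{F_v}$ and $X_{F_w}$. Pick monomials $p\in F_v$ and $q\in F_w$ each containing $x_j$; by the inductive definition of produced polynomials at a product gate, the monomial $pq$ appears in $F_{\gate}$ and has $x_j$-degree at least~$2$. The decisive tool is the following \emph{propagation property}: for every gate $u$ and every monomial $m\in F_u$, there is a monomial $m'\in F$ of which $m$ is a factor. This is straightforward by induction on the length of the shortest path from $u$ to the output, using that at a sum-gate every child monomial reappears in the parent's polynomial, while at a product-gate every child monomial $m$ is multiplied by some monomial from the sibling's produced polynomial (which we may assume nonempty, see below). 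Applying propagation to $m=pq$ produces a monomial of $F$ with $x_j$-degree $\geq 2$, contradicting multilinearity of $F$.

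The main obstacle is making the propagation step clean: one must ensure that no gate on a path from $\gate$ to the output produces the empty polynomial. This is handled by a preprocessing observation: any gate whose produced polynomial has no monomials (for example a descendant of a $\nulis$-input combined via a product) forces every ancestor through a product-gate to also produce the empty polynomial, and any such useless subcircuits can be pruned from $\F$ without changing $F$. After this pruning the inductive step at a product-gate always has a sibling monomial available, and the argument goes through.
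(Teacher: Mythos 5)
Your proof is correct and follows the same route the paper intends: it invokes Lemma~\ref{lem:arithm0}, Lemma~\ref{lem:arithm2} and Lemma~\ref{lem:multilin} to conclude that the produced polynomial $F$ is multilinear (and equals $f$ over $\R$, $\M$, $\MAX$), which is all the paper itself says by way of proof. Your second step --- the propagation argument showing that a syntactic violation at a product gate forces a non-multilinear monomial into $F$, together with the pruning of gates producing the empty polynomial --- correctly fills in the detail the paper leaves implicit.
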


We will be interested in the following two complexity measures of
polynomials~$f$, where the third measure is only for multilinear
polynomials:

\begin{itemize}
\item $\A(f)$ = minimum size of a circuit over semiring $\A$
  \emph{computing} $f$.

\item $\A[f]$ = minimum size of a circuit over semiring $\A$
  \emph{producing} $f$.

\item $\Mult{\A}{f}$ = minimum size of a \emph{multilinear} circuit
  over semiring $\A$ computing $f$.
\end{itemize}
What we are really interested in is the first measure $\A(f)$.  The
second measure $\A[f]$ is less interesting: it is the \emph{same} for
all semirings $\A$, because the formal polynomial of a given (fixed)
circuit is the same over all semirings. In particular, we have that
\[
\A[f]=\R[f]
\]
holds for every semiring $\A$ and every polynomial $f$.  Still, it
will be sometimes convenient \emph{not} to focus on the arithmetic
semiring $\R$ because the inequality $\A(f)\geq \A[f]$ is more
informative: it means that computing a given polynomial over $\A$ is
not easier than to produce this polynomial. This, for example, happens
in the arithmetic semiring $\R$: Lemma~\ref{lem:arithm0} implies
that $\R(f)\geq \R[f]$.

Also, Lemma~\ref{lem:multilin1} implies that the third measure
$\Mult{\A}{f}$ may be only interesting in semirings $\B$ and $\Min$:
if $\A\in\{\Max, \M, \MAX, \R\}$, then for every multilinear
polynomial $f$, we have that $\Mult{\A}{f}=\A(f)$.

\section{Some Polynomials}
\label{sec:important}

For the ease of reference, here we recall some polynomials which we
will use later to illustrate the lower bound arguments. Variables
$x_e$ of considered polynomials correspond to edges of $K_n$ or
$K_{n,n}$. Thus, monomials $\prod_{e\in E}x_e$ correspond to some
subgraphs $E$ of $K_{n}$ or $K_{n,n}$. Here are some of the
polynomials we will use later:

\begin{itemize}
\item Permanent polynomial $\PERM{n}$ = all perfect matchings in
  $K_{n,n}$.

\item Hamiltonian cycle polynomial $\HC{n}$ = all Hamiltonian cycles
  in $K_n$.

\item $k$-clique polynomial $\Clique{n,k}$ = all $k$-cliques in $K_n$.

\item Spanning tree polynomial $\ST{n}$ = all spanning trees in $K_n$
  rooted in node $1$.

\item $st$-connectivity polynomial $\STCONN{n}$ = all paths from $s=1$
  to $t=n$ in $K_n$.

\item All-pairs connectivity ``polynomial'' $\APSP{n}$ = \emph{set} of
  $\tbinom{n}{2}$ polynomials $\STCONN{n}$ corresponding to different
  pairs of start and target nodes $s$ and $t$.

\item Matrix product polynomial $\MP{n}$ = special case of $\APSP{n}$
  when only paths of length-$2$ are considered.
\item The connectivity polynomial $\CONN{n}$ = product of all
  polynomials of $\APSP{n}$.
\end{itemize}

In Section \ref{sec:rect} we will show that the first four polynomials
require $\Min$-circuits of exponential size, whereas the next result
shows that the last four polynomials all have $\Min$-circuits of
polynomial size. The following result---proved independently by
\Cite{Moore}{moore}, \Cite{Floyd}{floyd}, and
\Cite{Warshall}{warshall}---holds for every semiring with the
absorption axiom $a+ab=a$, including the boolean and $\Min$ semirings.

\begin{thm}[\cite{moore,floyd,warshall}]\label{thm:upper-bounds}
  Over semirings $\Min$ and $\B$, the polynomials of $\APSP{n}$ can
  all be simultaneously computed by a circuit of size $O(n^3)$.
\end{thm}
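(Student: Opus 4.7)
The plan is to realise the Floyd--Warshall recurrence directly as a circuit. For every $k\in\{0,1,\ldots,n\}$ and every ordered pair $(i,j)$ of vertices, introduce a gate $\gate^{(k)}_{i,j}$. The base gates are the input variables $\gate^{(0)}_{i,j}\eq x_{ij}$, and for $k\geq 1$ one uses the two-gate recurrence
$$
\gate^{(k)}_{i,j} \;=\; \min\!\Bigl(\,\gate^{(k-1)}_{i,j},\; \gate^{(k-1)}_{i,k}\pr \gate^{(k-1)}_{k,j}\,\Bigr).
$$
Designate $\gate^{(n)}_{s,t}$ as the output gate for the $(s,t)$-connectivity polynomial. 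This single circuit simultaneously computes every polynomial of $\APSP{n}$, and contains at most $2n^3+O(n^2)$ gates, giving the claimed $O(n^3)$ bound.

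For correctness I would prove, by induction on $k$, the invariant that over $\Min$ (resp.\ over $\B$) the function $\pol{\gate}^{(k)}_{i,j}$ equals the minimum weight (resp.\ the existence indicator) of a walk from $i$ to $j$ whose internal vertices all lie in $\{1,\ldots,k\}$. The base case $k=0$ allows only the direct edge and is immediate. The induction step splits any admissible walk at its first visit to vertex $k$: if the walk avoids $k$ entirely, it is accounted for by the first argument of the $\min$; otherwise it decomposes as an $i\to k$ walk followed by a $k\to j$ walk, both with interior in $\{1,\ldots,k-1\}$, contributing the second argument. Taking $k=n$ permits every walk between any two nodes, and the shortest (resp.\ any existing) walk coincides with the shortest (resp.\ an existing) simple $s$--$t$ path, so $\pol{\gate}^{(n)}_{s,t}=\pol{\STCONN{n}}$ as functions.

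The subtle point, and the one place where the hypothesis on the semiring enters, is that splitting at the first visit to $k$ may leave a sub-walk $k\to j$ whose interior still uses $k$, so the decomposition produces general walks rather than simple paths. The absorption axiom $a\su ab=a$, valid both in $\B$ and in $\Min$ over $\NN$, absorbs any such non-simple walk by the shorter simple walk obtained after excising its interior loops; equivalently, the circuit \emph{produces} a polynomial containing monomials for some non-simple walks, but these are absorbed in evaluation and so the \emph{computed} function is exactly $\pol{\STCONN{n}}$. This absorption step is the only non-syntactic ingredient, and I expect it to be the part requiring the most careful wording.
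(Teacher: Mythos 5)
Your proposal is correct and is essentially the paper's own proof: the same Floyd--Warshall recurrence $\ff{f}{i,j}{k}=\min\{\ff{f}{i,j}{k-1},\ \ff{f}{i,k}{k-1}+\ff{f}{k,j}{k-1}\}$ realised as a circuit with $O(n^3)$ gates, with correctness resting on the same observation that the circuit really computes lightest \emph{walks}, which coincide with lightest simple paths because the weights are non-negative (equivalently, by absorption $a\su ab=a$ in $\Min$ and $\B$). Your explicit induction on $k$ and the remark about the suffix of the split walk possibly revisiting $k$ are just a more careful spelling-out of the same argument.
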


\begin{proof}
  Inputs for $\APSP{n}$ over the $\Min$ semiring are non-negative
  weights $x_{ij}$ of the edges of $K_n$. For every pair $i<j$ of
  distinct nodes of $K_n$, the goal is to compute the weight of the
  lightest path between $i$ and $j$; the weight of a path is the sum
  of weights of its edges.  The idea is to recursively compute the
  polynomials $\ff{f}{i,j}{k}$ for $k=0,1,\ldots,n$, whose value is
  the weight of the lightest walk between $i$ and $j$ whose all inner
  nodes lie in $[k]=\{1,\ldots,k\}$. Then $\ff{f}{i,j}{0}=x_{ij}$, and
  the recursion is: $\ff{f}{i,j}{k} = \min\Big\{\ff{f}{i,j}{k-1}, \
  \ff{f}{i,k}{k-1} + \ff{f}{k,j}{k-1}\Big\}$.  The output gates are
  $\ff{f}{i,j}{n}$ for all $i <j$. The total number of gates is
  $O(n^3)$.  Even though the circuit actually searches for weights of
  lightest \emph{walks}, it correctly computes $\APSP{}$ because every
  walk between two nodes $i$ and $j$ also contains a simple path (with
  no repeated nodes) between these nodes. Since the weights are
  non-negative, the minimum must be achieved on a simple path.  If we
  replace min-gates by OR-gates, and sum-gates by AND-gates, then the
  resulting circuit will compute $\APSP{n}$ over the boolean
  semiring~$\B$.
\end{proof}

\begin{remark}\label{rem:bellman}
  Earlier dynamic programming algorithm of \Cite{Bellman}{bellman} and
  \Cite{Ford}{ford} gives a (structurally) simpler $\Min$-circuit for
  $\STCONN{n}$. It tries to compute the polynomials $\ff{f}{j}{k}$
  whose value is the weight of the lightest walk between $1$ and $j$
  with at most $k$ edges. Then $\ff{f}{j}{1}=x_{1j}$, and the
  recursion is: $\ff{f}{j}{k}$ = the minimum of $\ff{f}{j}{k-1}$ and
  of $\ff{f}{i}{k-1} + x_{i,j}$ over all nodes $i\neq j$. The output
  gate is $\ff{f}{n}{n-1}$. The circuit also has $O(n^3)$ fanin-$2$
  gates.
\end{remark}

\begin{remark}\label{rem:ST}
  Theorem~\ref{thm:upper-bounds} immediately implies that the
  polynomials $\MP{n}$, $\CONN{n}$, and $\STCONN{n}$ can also be
  computed by $\Min$-circuits of size $O(n^3)$. Moreover, over the
  boolean semiring, the spanning tree polynomial $\ST{}$ represents
  the same boolean function as $\CONN{}$.  Thus,
  Theorem~\ref{thm:upper-bounds} also gives $\B(\ST{n})=O(n^3)$.
\end{remark}

\begin{table}
  \begin{center}
    \begin{tabular*}{0.9\textwidth}{l@{\hskip 1cm}l@{\hskip 1cm}l}
      {\bf Polynomial} $f$ &  {\bf Bound} & {\bf Reference}\\
      \noalign{\smallskip}
      $\ST{n}$ & $\B(f)=O(n^3)$, $\A(f)=2^{\Omega(n)}$ & Rem.~\ref{rem:ST}, Thm.~\ref{thm:bounds1} \\[0.5ex]
      $\CONN{n}$, $\STCONN{n}$ & $\Min(f)=O(n^3)$, $\R[f]\geq \Max(f)=2^{\Omega(n)}$ & Rem.~\ref{rem:ST} \\[0.5ex]
      $\APSP{n}$, $\MP{n}$ & $\Min(f)=\Theta(n^3)$ & Cor.~\ref{cor:APSP}  \\[0.5ex]
      $\PERM{n}$, $\HC{n}$ & $\A(f)=2^{\Omega(n)}$ & Thm.~\ref{thm:bounds1} \\[0.5ex]
      $\Clique{n,k}$ & $\A(f)\geq \binom{n}{k}-1$ &  Cor.~\ref{cor:cliq}\\
      \noalign{\smallskip}
    \end{tabular*}
    \caption[]{Summary of specific bounds; $\A(f)$ stands for any of
      $\Min(f)$, $\Max(f)$ and $\Mult{\B}{f}$.}
    \label{tab:bounds}
  \end{center}
\end{table}

In the rest of the paper, we will present various lower bound argument
for tropical circuits. Table~\ref{tab:bounds} summarizes the resulting
specific bounds obtained by these arguments for the polynomials listed
above.

\section{Reduction to the Boolean Semiring}
\label{sec:boolean}

A semiring $\A=(S,\su,\pr,\nulis,\vienas)$ is of
\emph{zero-characteristic}, if $\vienas\su\vienas \su \cdots \su
\vienas\neq \nulis$ holds for any finite sum of the unity~$\vienas$.
Note that all semirings we consider are of zero-characteristic.  The
following seems to be a ``folklore'' observation.

\begin{lem}\label{lem:bool}
  If a semiring $\A$ is of zero-characteristic, then $\A(f)\geq \B(f)$
  holds for every polynomial~$f$.
\end{lem}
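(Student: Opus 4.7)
The plan is to take any optimal circuit $\F$ over $\A$ computing $f$, replace its $\su$-gates by $\lor$ and its $\pr$-gates by $\land$ (and the constants $\nulis,\vienas$ by $0,1$) to obtain a boolean circuit $\F_{\B}$ of the same size, and then verify that $\F_{\B}$ still computes $f$ over $\B$. The only nontrivial step is the verification, and this is where zero-characteristic of $\A$ will be used.

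First I would observe that the polynomial produced by a circuit is a purely syntactic object (see Section~\ref{sec:circuits}): $\F$ and $\F_{\B}$ produce literally the same formal polynomial $F$. So we know $\pol{F}_{\A}=\pol{f}_{\A}$ on $S^n$, and we want to conclude $\pol{F}_{\B}=\pol{f}_{\B}$ on $\{0,1\}^n$.

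The key observation is the following "indicator" behaviour on $0/1$ inputs. Pick any $a\in\{0,1\}^n$ and evaluate it in $\A$ via $0\mapsto\nulis$, $1\mapsto\vienas$. Every monomial $p$ of $F$ becomes a product $\vienas\pr\vienas\pr\cdots\pr\vienas$ (equal to $\vienas$ by axiom (i)) if every variable of $p$ is assigned $\vienas$, and contains a factor $\nulis$ (so evaluates to $\nulis$ by axiom (iii)) otherwise. Hence
\[
\pol{F}_{\A}(a)\ =\ \underbrace{\vienas\su\vienas\su\cdots\su\vienas}_{N\text{ times}},
\]
where $N$ is the total number of monomial-occurrences of $F$ (counted with coefficients) that are "satisfied" by $a$, i.e.\ all of whose variables are assigned $\vienas$. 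By zero-characteristic, $\pol{F}_{\A}(a)\neq\nulis$ iff $N\geq 1$ iff some monomial of $F$ has all its variables in the $1$-set of $a$, which is precisely the condition $\pol{F}_{\B}(a)=1$. The same computation, applied to $f$, gives $\pol{f}_{\A}(a)\neq\nulis$ iff $\pol{f}_{\B}(a)=1$.

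Combining these with the assumption $\pol{F}_{\A}(a)=\pol{f}_{\A}(a)$ yields $\pol{F}_{\B}(a)=\pol{f}_{\B}(a)$ for every $a\in\{0,1\}^n$, so $\F_{\B}$ is a boolean circuit computing $f$. Since $\size{\F_{\B}}=\size{\F}=\A(f)$, we conclude $\B(f)\leq\A(f)$. The only place where anything could go wrong is the step "$N\geq 1\Rightarrow\pol{F}_{\A}(a)\neq\nulis$", which is exactly what zero-characteristic buys us; without this hypothesis a sum of satisfied monomials in $\A$ could collapse to $\nulis$ and destroy the equivalence.
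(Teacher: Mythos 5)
Your proof is correct and follows essentially the same route as the paper: both arguments boil down to the fact that on inputs from $\{\nulis,\vienas\}$ the circuit's value is a finite sum of $\vienas$'s, which by zero-characteristic is $\nulis$ exactly when the sum is empty, i.e.\ exactly when the boolean value is $0$. The paper packages this as a semiring homomorphism from the sub-semiring $\{\nulis,\up{1},\up{2},\ldots\}$ onto $\B$, while you verify the same identity directly on the produced polynomial; the two are interchangeable.
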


\begin{proof} Let $\F$ be a circuit over $\A$ computing a given
  polynomial $f$.  The circuit must correctly compute $f$ on any
  subset of the domain~$S$.  We choose the subset $S_+=\{\nulis,
  \up{1}, \up{2},\ldots\}$, where $\up{n}=\vienas\su \cdots \su
  \vienas$ is the $n$-fold sum of the multiplicative unit element
  $\vienas$.  Note that $\up{n}\neq\nulis$ holds for all $n\geq 1$,
  because $\A$ has zero-characteristic.

  Since $\up{n}\su \up{m}=\up{n+m}$ and $\up{n}\pr \up{m}=\up{n\cdot
    m}$, $\A_+=(S_+,\su,\pr,\nulis,\vienas)$ is a semiring. Since
  $S_+\subseteq S$, the circuit must correctly compute $f$ over this
  semiring as well. But the mapping $h:S_+\to\{0,1\}$ given by
  $h(\nulis)=0$ and $h(\up{n})=1$ for all $n\geq 1$, is a homomorphism
  from $\A_+$ into the boolean semiring $\B$ with $h(x\su y)=h(x)\lor
  h(y)$ and $h(x\pr y)=h(x)\land h(y)$.  So, if we replace each
  $\su$-gate by a logical OR, and each $\pr$-gate by a logical AND,
  then the resulting monotone boolean circuit computes the polynomial
  $f$ over~$\B$.
\end{proof}

\begin{remark}
  One can easily show that, if the input variables can only take
  boolean values $0$ and $1$, then $\Min(f)\leq 2\cdot \B(f)$ holds
  for every multilinear polynomial. Indeed, having a (boolean) circuit
  $\F$ for $f$, just replace each AND gate $u\land v$ by a Min gate
  $\min(u,v)$, and each OR gate $u\lor v$ by $\min(1,u+v)$.  The point
  however is that tropical circuits must work correctly on much larger
  domain than $\{0,1\}$.  This is why lower bounds for tropical
  circuits do not translate to lower bounds for monotone boolean
  circuits.  And indeed, there are explicit polynomials $f$, as the
  spanning tree polynomial $f=\ST{n}$, such that $\B(f)=O(n^3)$ but
  $\Min(f)=2^{\Omega(n)}$; the upper bound is shown in
  Remark~\ref{rem:ST}, and the lower bound will be shown in
  Theorem~\ref{thm:bounds1}.
\end{remark}

To prove lower bounds in the boolean semiring---and hence, by
Lemma~\ref{lem:bool}, also in tropical semirings---one can try to use
the following general lower bounds criterion proved in \cite{Juk99}
(see also \cite[Sect. 9.4]{myBFC-book} for a simplified proof).

For $a\in \{0,1\}$, an $a$-\emph{term} of a monotone boolean function
is a subset of its variables such that, when all these variables are
fixed to the constant $a$, the function outputs value $a$, independent
of the values of other variables. It is easy to see that every
$0$-term must intersect every $1$-term, and vice versa.  Say that a
family of sets $A$ \emph{covers} a family of sets $B$ if every set in
$B$ contains at least one set of~$A$.

 \begin{dfn}\label{def:simple}
   A monotone boolean function $f(X)$ of $|X|=n$ variables is
   $t$-\emph{simple\/} if for all integers integers $2\leq r,s\leq n$,
   such that
   \begin{description}
   \item[(i)] either the set of all $0$-terms of $f$ can be covered by
     $t(r-1)^s$ $s$-element subsets of~$X$,
   \item[(ii)] or the set of all $1$-terms of $f$ can be covered by at
     most $t(s-1)^r$ $r$-element subsets of $X$ plus $s-1$ single
     variables.
   \end{description}
 \end{dfn}
 Note that this ``asymmetry'' between (i) and (ii) (allowing
 additional $s-1$ single variables in a cover) is important: say,
 condition (i) is trivially violated, if $f$ contains a $0$-term
 $T=\{x_1,\ldots,x_k\}$ with $k<s$.  But then (ii) is satisfied,
 because $T$ must intersect all $1$-terms, implying that the single
 variables $x_1,\ldots,x_k$ cover all of them.

 \begin{thm}[\cite{Juk99}]
   \label{thm:crit}
   If $f$ is not $t$-simple, then $\B(f)>t$.
 \end{thm}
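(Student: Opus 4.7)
The plan is to argue the contrapositive using the method of approximations pioneered by \Cite{Razborov}{razb1,razb2} and \Cite{Andreev}{andr1,andr2}. Assume $\B(f) \leq t$ and fix a monotone boolean circuit $\F$ of size at most $t$ computing $f$. Fix also integers $2 \leq r,s \leq n$, and aim to verify that condition (i) or condition (ii) of Definition~\ref{def:simple} must hold.

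At each gate $\gate$ of $\F$ computing the monotone function $f_\gate$, I would inductively build a pair of ``structured'' approximators: an under-approximator $L_\gate$, a monotone DNF whose monomials all have size at most $s$, satisfying $L_\gate \leq f_\gate$; and an over-approximator $U_\gate$, a monotone CNF whose clauses all have size at most $r$, satisfying $U_\gate \geq f_\gate$. The recursion is the natural one: at an input $x_i$, put $L_\gate = U_\gate = \{x_i\}$; at an OR gate $\gate = v \lor w$, take $L_\gate := L_v \cup L_w$ and obtain $U_\gate$ by forming all pairwise unions of clauses from $U_v$ and $U_w$ and discarding any of size exceeding $r$; at an AND gate, dually, set $U_\gate := U_v \cup U_w$ and obtain $L_\gate$ by pairwise unions of monomials from $L_v$ and $L_w$, discarding those of size exceeding $s$. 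A short induction verifies the sandwich $L_\gate \leq f_\gate \leq U_\gate$ at every gate, since discarding a DNF term can only shrink a function while discarding a CNF clause can only enlarge it.

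At the output, every monomial of $L_{\mathrm{out}}$ is then a $1$-term of $f$ of size at most $s$, and every clause of $U_{\mathrm{out}}$ is a $0$-term of $f$ of size at most $r$. The rest of the argument is a charging scheme: for each $0$-term $T$ of $f$ not covered by $U_{\mathrm{out}}$ (i.e.~containing no clause of $U_{\mathrm{out}}$), a trace back through $\F$ from a witnessing partial assignment exposes a specific OR gate at which a relevant clause was truncated, and $T$ is charged to that gate; symmetrically for each $1$-term of $f$ not contained in any monomial of $L_{\mathrm{out}}$, charged to some AND gate. One then verifies the per-gate bounds: a single OR gate can be charged with at most $(r-1)^s$ distinct $s$-subsets witnessing uncovered $0$-terms, and a single AND gate with at most $(s-1)^r$ distinct $r$-subsets witnessing uncovered $1$-terms. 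Summing over at most $t$ gates gives the cardinalities in conditions (i) and (ii); the $s-1$ extra singletons in (ii) come, as noted just after Definition~\ref{def:simple}, from the degenerate case in which $f$ already has a $0$-term of size less than $s$, whose variables must enter any cover of the $1$-terms on their own.

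The main obstacle will be making the per-gate charging bounds $(r-1)^s$ and $(s-1)^r$ precise. A single truncated clause of size greater than $r$ can \emph{a priori} contain many $s$-subsets, so the witness must be pinned down combinatorially: one has to exploit the specific shape of the truncation (a union of clauses each of size at most $r-1$, one from each side) to force any candidate $s$-subset witnessing an uncovered $0$-term to decompose into at most $s$ coordinates, each drawn from a pool of size at most $r-1$, which is precisely what the budget $(r-1)^s$ allows. Getting this book-keeping right, and mirroring it symmetrically on the dual side for $(s-1)^r$, is the technical heart of the argument.
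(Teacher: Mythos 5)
The paper does not actually prove this theorem: it is quoted from \cite{Juk99} (with a simplified proof in \cite[Sect.~9.4]{myBFC-book}), so your argument has to stand on its own, and as written it has two genuine gaps. The first is that you never produce the required \emph{disjunction}. You run two independent approximations and end up covering the $0$-terms by the clauses of $U_{\mathrm{out}}$ (sets of size at most $r$) together with charged $s$-sets, and the $1$-terms by the terms of $L_{\mathrm{out}}$ (sets of size at most $s$) together with charged $r$-sets. Neither of these is condition (i) or (ii): condition (i) admits only $s$-element sets, condition (ii) only $r$-element sets plus $s-1$ \emph{singletons}, and in general only one of the two conditions can hold at all --- for $f=x_1\land\cdots\land x_n$ condition (i) fails for every $s\ge 2$, since the singleton $0$-terms contain no $s$-element set, yet $\B(f)=n-1$. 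So a scheme that tries to cover both families simultaneously cannot be right. The case split has to come from the shape of the \emph{final} approximator: if it is the constant $1$, every $0$-term is an error charged to some gate and (i) follows; otherwise a single surviving clause, which must have size at most $s-1$, supplies the $s-1$ singletons of (ii), and the remaining $1$-terms are errors charged to gates. For that last step to work, the clauses of the upper approximator must have size at most $s-1$, not at most $r$: your assignment of $r$ to the CNF side and $s$ to the DNF side is crossed relative to what the two conditions demand.

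The second and more serious gap is the per-gate budget, which you yourself flag as the technical heart. With your truncation rule, a $0$-term charged to an OR gate is only known to contain \emph{some} discarded union $c\cup c'$ of two clauses, a set of size between $r+1$ and $2r$; there may be as many as $|U_v|\cdot|U_w|$ discarded pairs, a quantity your recursion never controls, and a union of \emph{two} clauses does not decompose into ``$s$ coordinates each drawn from a pool of size $r-1$''. The number $(r-1)^s$ is the number of transversals of $s$ sets of size at most $r-1$, and the approximation step must be engineered so that the $0$-terms damaged at a gate are precisely those meeting each of $s$ fixed sets of size at most $r-1$ (each such $0$-term then contains one of at most $(r-1)^s$ transversals, which are the $s$-sets put into the cover). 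This is achieved by a CNF/DNF switching step --- a DNF in terms of size at most $r-1$ is converted exactly into a CNF with clauses of size at most $s-1$ when it has at most $s-1$ terms, and is replaced by the constant $1$ otherwise; the dual switch yields the $(s-1)^r$ budget of $r$-sets for damaged $1$-terms --- not by discarding oversized pairwise unions. Without that transversal structure the charging scheme does not close, so the proposal as it stands does not prove the theorem.
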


 \section{Reduction to the Arithmetic Semiring}
 \label{sec:to-artihm}

 As we already mentioned in the introduction, circuits over the
 arithmetic semiring $\R$ are no more powerful than circuits over
 boolean or tropical semirings.  The weakness of circuits computing a
 given polynomial $f$ over $\R$ lies in the fact (following from
 Lemma~\ref{lem:arithm0}) that they cannot produce any ``redundant''
 monomials, those not in $f$. That is, here we have $\R(f)\geq \R[f]$.
 On the other hand, if the semiring $\A$ is additively-idempotent,
 then
 \begin{equation}\label{eq:not-larger}
   \A(f)\leq \A[f]=\R[f]\,.
 \end{equation}
 This holds because in an additively-idempotent semiring $\A$ (where
 $x\su x=x$ holds), the multiplicities of monomials have no effect on
 the represented function. But, in general, we have no converse
 inequality $ \A(f)\geq \R[f]$: for some polynomials $f$, $\R[f]$ may
 be even exponentially larger than $\A(f)$. Such is, for example, the
 $st$-connectivity polynomial $f=\STCONN{n}$.  For this polynomial, we
 have $\Min(f)=O(n^3)$ (see Remark~\ref{rem:ST}), but it is relatively
 easy to show that $\Min[f]=\R[f]=2^{\Omega(n)}$ (see
 Theorem~\ref{thm:bounds2} below). We will now show that the reason
 for such a large gap is the non-homogeneity of $\STCONN{}$.

 Following \Cite{Jerrum and Snir}{jerrum}, define the \emph{lower
   envelope} of a polynomial $f$ to be the polynomial $\lenv{f}$
 consisting of all monomials of $f$ of smallest degree.  Similarly,
 the \emph{higher envelope}, $\henv{f}$, of $f$ consists of all
 monomials of $f$ of largest degree. Note that both polynomials
 $\lenv{f}$ and $\henv{f}$ are homogeneous, and $\lenv{f}=\henv{f}=f$,
 if $f$ itself is homogeneous.

 \begin{obs}\label{fact:1a}
   If a polynomial $f$ can be produced by a circuit of size $s$, then
   both $\lenv{f}$ and $\henv{f}$ can be produced by homogeneous
   circuits of size~$s$.
 \end{obs}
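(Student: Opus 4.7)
The plan is to start with a size-$s$ circuit $\F$ producing $f$ and rewrite it gate by gate into a homogeneous circuit $\F'$ of size at most $s$ producing $\lenv{f}$; the argument for $\henv{f}$ is identical after swapping ``smallest'' for ``largest''. For any nonzero polynomial $G$, write $d_G:=\min\{|p| : p\in G\}$ for its smallest monomial degree, with $d_G:=\infty$ if $G$ is zero.

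First I would record two elementary syntactic identities about the lower envelope. For products, $\lenv{G\pr H}\eq \lenv{G}\pr\lenv{H}$, because every minimum-degree monomial of $G\pr H$ factors as a minimum-degree monomial of $G$ times one of $H$, and the non-negativity of coefficients rules out any cancellation. For sums, $\lenv{G\su H}$ equals $\lenv{G}$ if $d_G<d_H$, equals $\lenv{H}$ if $d_G>d_H$, and equals $\lenv{G}\su\lenv{H}$ in the balanced case $d_G=d_H$.

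Then I would traverse the gates of $\F$ in topological order, building $\F'$ while maintaining the invariant that each surviving gate $v'$ of $\F'$ produces $\lenv{F_v}$, where $F_v$ denotes the polynomial produced at the corresponding gate $v$ of $\F$. Input nodes stay as they are. A product gate $v=a\pr b$ is kept, with its inputs rewired to the already-built $a',b'$; the product identity then makes it produce $\lenv{F_v}$. For a sum gate $v=a\su b$: if $d_{F_a}=d_{F_b}$, keep $v$ as a sum of $a'$ and $b'$; otherwise splice $v$ out, identifying $v'$ with whichever of $a',b'$ has the smaller $d$-value. The sum identity shows the invariant survives each step.

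The resulting $\F'$ has size at most $s$ since the construction only ever deletes gates, produces $\lenv{f}$ at its root, and is homogeneous by a short induction along the topological order, because every surviving gate produces a polynomial all of whose monomials share one common degree. The construction for $\henv{f}$ is verbatim with ``maximum'' in place of ``minimum''. There is no real obstacle; the only bit of care needed is the case analysis for sum gates, together with the observation that the product identity uses non-negativity of coefficients.
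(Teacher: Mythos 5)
Your proof is correct and follows essentially the same route as the paper, which in one sentence removes, at every sum gate whose two inputs produce polynomials of different minimum (resp.\ maximum) degrees, the wire coming from the input of larger (resp.\ smaller) degree. Your write-up just makes explicit the invariant ($v'$ produces $\lenv{F_v}$) and the product/sum identities that justify this pruning.
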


\begin{proof}
  Take a circuit producing $f$.  The desired homogeneous sub-circuits
  producing the lower or the higher envelope can be obtain by starting
  with input gates, and removing (if necessary) one of the wires of
  every sum-gate, at inputs of which polynomials of different degrees
  are produced.
\end{proof}

 \begin{thm}\label{thm:homog}
   For every multilinear polynomial $f$, we have
   \begin{equation}\label{eq:lower-bounds}
     \R[f]\geq \Mult{\B}{f}\geq \Min(f) \geq \R[\lenv{f}]\ \ \mbox{ and }\ \
     \R[f]\geq \Max(f)\geq \R[\henv{f}]\,.
   \end{equation}
   If $f$ is also homogeneous, then
   \[
   \Mult{\B}{f}=\Min(f)=\Max(f)=\R[f]\,.
   \]
 \end{thm}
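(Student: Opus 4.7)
\medskip

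\noindent\textbf{Proof plan.} My plan is to verify the two displayed chains left-to-right; the homogeneous case then falls out immediately, since for a homogeneous $f$ one has $\lenv{f}=\henv{f}=f$, which collapses every chain into equalities. Throughout I will use the fact (from Lemma~\ref{lem:arithm1} and Lemma~\ref{lem:arithm2}) that, in tropical semirings, equality of functions forces equality of the ``extremal'' monomials, together with Observation~\ref{fact:1a} to harvest a homogeneous sub-circuit.

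\medskip

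\noindent\emph{Step 1: $\R[f]\geq \Mult{\B}{f}$.}  Take a circuit $\F$ over $\R$ producing a multilinear polynomial $f$, and let $F$ be its produced polynomial, so $F=f$ as sets of monomials. Because there is no cancellation over $\R$, if some product gate $v\pr w$ had a variable common to the two polynomials produced at $v$ and $w$, the produced polynomial at that gate would be non-multilinear, and this non-multilinearity would survive to the output (monomials over $\R$ only grow). Hence $\F$ is syntactically multilinear. Replacing $\pr$ by $\land$ and $\su$ by $\lor$ yields a multilinear boolean circuit whose produced polynomial is still $F$, and over $\B$ coefficients are irrelevant, so it computes $f$.

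\medskip

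\noindent\emph{Step 2: $\Mult{\B}{f}\geq \Min(f)$.}  Given a multilinear boolean circuit for $f$, replace each $\land$ by $\pr$ and each $\lor$ by $\min$ to obtain a $\Min$-circuit $\F'$ with the same produced polynomial $F$; multilinearity of $\F'$ ensures $F$ is multilinear. Since $\pol{F}=\pol{f}$ over $\B$ and both are multilinear, $\lmin{F}=\lmin{f}$. For non-negative inputs $x$, each monomial's value $\sum_{i\in X_p}x_i$ is monotone under containment, so
\[
\pol{F}(x)=\min_{p\in F}\sum_{i\in X_p}x_i
         =\min_{p\in \lmin{F}}\sum_{i\in X_p}x_i
         =\min_{p\in \lmin{f}}\sum_{i\in X_p}x_i
         =\pol{f}(x)
\]
over $\Min$, so $\F'$ computes $f$.

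\medskip

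\noindent\emph{Step 3 (the main step): $\Min(f)\geq \R[\lenv{f}]$.}  Let $\F$ compute $f$ over $\Min$ and produce $F$. By Lemma~\ref{lem:arithm1}, $\lmin{F}=\lmin{f}$. I now claim $\lenv{F}=\lenv{f}$. Any monomial $q\in\lenv{F}$ has minimum degree in $F$, so it cannot properly contain another monomial of $F$; thus $q\in\lmin{F}=\lmin{f}\subseteq f$. In particular $F$ contains no monomial of degree smaller than $d:=\min_{p\in f}|p|$, because such a monomial would lie in $\lmin{F}=\lmin{f}$ and hence in $f$, a contradiction. Therefore the minimum degree inside $F$ equals $d$, every $q\in\lenv{F}$ has $|q|=d$ and belongs to $f$, so $q\in\lenv{f}$; conversely every $p\in\lenv{f}$ is in $\lmin{f}=\lmin{F}\subseteq F$ and has degree $d$, so $p\in\lenv{F}$. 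Observation~\ref{fact:1a} then extracts from $\F$ a homogeneous sub-circuit producing $\lenv{F}=\lenv{f}$ of size at most $\size{\F}$, giving $\R[\lenv{f}]\leq\Min(f)$.

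\medskip

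\noindent\emph{Step 4 (the $\Max$ chain).}  The inequality $\R[f]\geq \Max(f)$ is immediate, because any circuit over $\R$ producing $f$ has the same monomial set as $f$, and over $\Max$ the coefficients are irrelevant. For $\Max(f)\geq \R[\henv{f}]$, repeat the argument of Step~3 but use Lemma~\ref{lem:arithm2} in place of Lemma~\ref{lem:arithm1}: that lemma guarantees the produced polynomial $F$ of a $\Max$-circuit computing $f$ is multilinear and satisfies $\lmax{F}=\lmax{f}$, from which the analogue $\henv{F}=\henv{f}$ follows by a degree-maximizing version of the previous argument, and Observation~\ref{fact:1a} finishes the proof. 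The likely source of friction is Step~3---specifically, the need to show the \emph{two-sided} identity $\lenv{F}=\lenv{f}$ rather than the easier $\lenv{f}\subseteq F$, because Observation~\ref{fact:1a} extracts the lower envelope of $F$, not of $f$.
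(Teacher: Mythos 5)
Your proof is correct and follows essentially the same route as the paper: the upper bounds come from the fact that a circuit producing a polynomial computes it over any additively-idempotent semiring, the step $\Mult{\B}{f}\geq \Min(f)$ uses the uniqueness of the set of minimal monomials of a monotone multilinear boolean expression, and the key inequalities $\Min(f)\geq \R[\lenv{f}]$, $\Max(f)\geq \R[\henv{f}]$ combine Lemmas~\ref{lem:arithm1} and~\ref{lem:arithm2} with Observation~\ref{fact:1a} exactly as in the paper. Your Step~3 in fact spells out a detail the paper leaves implicit, namely how $\lenv{F}=\lenv{f}$ follows from $\lmin{F}=\lmin{f}$ via the observation that minimum-degree monomials cannot properly contain other monomials.
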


\begin{proof}
  By \eqref{eq:not-larger}, we only have to prove the lower bounds
  \eqref{eq:lower-bounds}.  To prove that $\Mult{\B}{f}\geq \Min(f)$,
  let $\F$ be a multilinear monotone boolean circuit computing
  $f$. Since the circuit is multilinear, its produced polynomial $F$
  is also multilinear. Since every monotone boolean function has a
  \emph{unique} shortest monotone DNF, this implies that
  $\lmin{F}=\lmin{f}$. Since $f$ and $\lmin{f}$ represent the same
  function over $\Min$, the circuit $\F$ with OR gates replaced by Min
  gates, and AND gates by Sum gates will compute $f$ over $\Min$.

  To prove the inequality $\Min(f)\geq \R[\lenv{f}]$, take a minimal
  circuit $\F$ over $\Min$ computing $f$.  Observation~\ref{fact:1a}
  implies that the lower envelope $\lenv{F}$ of the polynomial $F$
  produced by $\F$ can be also produced by a (homogeneous) circuit of
  size at most $\size{\F}$.  Hence, $\R[\lenv{F}]\leq
  \size{\F}=\Min(f)$. On the other hand, Lemma~\ref{lem:arithm1}
  implies that $\lenv{f}=\lenv{F}$, and we are done.

  The proof of $\Max(f)\geq \R[\henv{f}]$ is the same by using
  Lemma~\ref{lem:arithm2}. 
\end{proof}

The second claim of Theorem~\ref{thm:homog} has an important
implication concerning the power of dynamic programs, which can be
roughly stated as follows:
\begin{enumerate}
\item[] For optimization problems whose target polynomials are
  \emph{homogeneous}, dynamic programming is no more powerful than
  monotone arithmetic circuits!
\end{enumerate}

\section{Relative Power of Semirings}
\label{sec:gaps}

The reductions to the boolean and to the arithmetic semirings
(Lemma~\ref{lem:bool} and Theorem~\ref{thm:homog}) give us the
following relations for every multilinear polynomial $f$:
\[
\B(f) \leq \Min(f) \leq \Mult{\B}{f} \leq \M(f)=\R[f]
\]
and
\[
\B(f)\leq \Max(f) \leq \MAX(f)= \R[f]\,.
\]
If, additionally, $f$ is also homogeneous, then
\[
\B(f)\leq \Mult{\B}{f} = \Min(f)=\Max(f)=\M(f)= \MAX(f)=\R[f]\,.
\]
Moreover, all inequalities are strict: for some polynomials $f$, one
side can be even exponentially smaller than the other. Moreover, the
$\Max/\Min$ and $\Min/\Max$ gaps can be also exponential.

To show that circuits over the tropical semirings can be exponentially
weaker than those over the boolean semiring, consider the the spanning
tree polynomial $f=\ST{n}$ and the graph connectivity polynomial
$g=\CONN{n}$.  Over the boolean semiring $\B$, these polynomials
represent the same boolean function: a graph is connected if and only
if it has a spanning tree.  This gives $\B(f)=\B(g)$ and
$\Mult{\B}{f}=\Mult{\B}{g}$.  Moreover, we already know (see
Remark~\ref{rem:ST}) that $\B(g)=O(n^3)$ and $\Min(g)=O(n^3)$. On the
other hand, a relatively simple argument (the ``rectangle bound'')
yields $\R[f]=2^{\Omega(n)}$ (see Theorem~\ref{thm:bounds1}
below). Since the polynomial $f$ is homogeneous,
Theorem~\ref{thm:homog} implies that $\Min(f)$, $\Max(f)$ and
$\Mult{\B}{f}$ coincide with $\R[f]$, and hence, are also exponential
in $n$. We thus have gaps:
\begin{align*}
  \Min(f)/\B(f),\  \Max(f)/\B(f)&= 2^{\Omega(n)}\ \ \mbox{ for $f=\ST{n}$;}\\
  \Mult{\B}{g}/\Min(g),\ \Mult{\B}{g}/\B(g)&= 2^{\Omega(n)}\ \ \mbox{
    for $g=\CONN{n}$.}
\end{align*}
The latter gap $\Mult{\B}{g}/\B(g)= 2^{\Omega(n)}$ also shows that
there is no ``multilinear version'' of the Floyd--Warshall algorithm,
even in the boolean semiring.

To show that the remaining gaps can also be exponential, it is enough
to take \emph{any} multilinear and homogeneous polynomial
$f(x_1,\ldots,x_n)$ such that $\R[f]$ is exponential in $n$, and to
consider its two ``saturated'' versions $\lsat{f}$ and $\hsat{f}$,
where $\lsat{f}$ is obtained by adding to $f$ all $n$ monomials
$x_1,x_2,\ldots,x_n$ of degree $1$, and $\hsat{f}$ is obtained by
adding to $f$ the monomial $x_1x_2\cdots x_n$ of degree~$n$.

\begin{lem}
  Let $f(x_1,\ldots,x_n)$ be a multilinear and homogeneous polynomial.
  Then both $\Min(\hsat{f})$ and $\Max(\lsat{f})$ are at least
  $\R[f]$, but all $\Max(\hsat{f})$, $\Min(\lsat{f})$ and
  $\Mult{\B}{\lsat{f}}$ are at most $n$.
\end{lem}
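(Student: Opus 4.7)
The plan is to obtain the two lower bounds as one-line applications of Theorem~\ref{thm:homog}, and to verify the three upper bounds by a direct ``domination'' calculation that uses the nonnegativity of variables in $\Min$, $\Max$, and $\B$.

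For the lower bounds, I first note that $\hsat{f}$ and $\lsat{f}$ are still multilinear, since both $f$ and each added monomial are. Let $d$ be the degree of $f$. In the interesting range $2\leq d\leq n-1$, the added monomial $x_1\cdots x_n$ strictly exceeds the degree of every monomial of $f$, and the added single-variable monomials are all strictly below that degree, so
\[
\lenv{\hsat{f}}=f\qquad\text{and}\qquad\henv{\lsat{f}}=f.
\]
Plugging these identities into Theorem~\ref{thm:homog} gives $\Min(\hsat{f})\geq\R[\lenv{\hsat{f}}]=\R[f]$ and $\Max(\lsat{f})\geq\R[\henv{\lsat{f}}]=\R[f]$, as required. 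The boundary cases $d\in\{1,n\}$ force $f$ to be either a subset of $\{x_1,\ldots,x_n\}$ or the single monomial $x_1\cdots x_n$, in which case $\R[f]\leq n-1$ and the lower bounds are easy to verify directly (for instance, when $d=1$ one has $\henv{\lsat{f}}=\lsat{f}$, so Theorem~\ref{thm:homog} gives $\Max(\lsat{f})=\R[\lsat{f}]=n-1\geq\R[f]$).

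For the upper bounds, the added monomials force the computed function to collapse to a trivial one in each semiring. In $\Min$, every monomial of $f$ is a nonempty multilinear product whose value $\sum_{i\in I}x_i$ is at least $\min_j x_j$, so the added single variables determine $\pol{\lsat{f}}$ and
\[
\pol{\lsat{f}}(x)=\min_{1\leq j\leq n}x_j,
\]
computable by an $(n-1)$-gate $\min$-tournament. Symmetrically, in $\Max$ the value $\sum_{i\in I}x_i$ of any monomial of $f$ is at most the value $x_1+\cdots+x_n$ of the added monomial $x_1\cdots x_n$, so $\pol{\hsat{f}}(x)=x_1+\cdots+x_n$ is computed by $n-1$ sum-gates. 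In the boolean semiring, each monomial of $f$ evaluates to a conjunction $\bigwedge_{i\in I}x_i$ that is dominated by any disjunct $x_j$ with $j\in I$, so the added single variables give $\pol{\lsat{f}}=x_1\lor\cdots\lor x_n$, computable by an $(n-1)$-gate OR-tree; this circuit uses no product gates, so it is trivially multilinear, giving $\Mult{\B}{\lsat{f}}\leq n-1\leq n$.

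The only mildly delicate point is the handling of the boundary degrees $d\in\{1,n\}$; apart from that, everything reduces to a direct inequality from nonnegativity combined with the envelope clause of Theorem~\ref{thm:homog}, so I do not foresee any serious technical obstacle.
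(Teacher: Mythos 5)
Your proof is correct and follows essentially the same route as the paper: the lower bounds come from observing that $f$ is the lower envelope of $\hsat{f}$ and the higher envelope of $\lsat{f}$ and invoking Theorem~\ref{thm:homog}, and the upper bounds from the collapse of $\pol{\hsat{f}}$ and $\pol{\lsat{f}}$ to $x_1+\cdots+x_n$, $\min_j x_j$ and $x_1\lor\cdots\lor x_n$. Your explicit treatment of the boundary degrees $d\in\{1,n\}$ is a small extra care that the paper's one-line argument glosses over.
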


\begin{proof}
  Since $f$ is the lower envelope of $\hsat{f}$, and the higher
  envelope of $\lsat{f}$. Theorem~\ref{thm:homog} implies that
  $\Min(\hsat{f})\geq \R[f]$ and $\Max(\lsat{f})\geq \R[f]$.  On the
  other hand, over the $\Max$-semiring, the polynomial $\hsat{f}$
  computes $x_1+x_2+\cdots+x_n$, whereas over the $\Min$-semiring,
  $\lsat{f}$ computes $\min\{x_1,x_2,\ldots,x_n\}$, and computes
  $x_1\lor x_2\lor \cdots \lor x_n$ over the boolean semiring.  Hence,
  all $\Max(\hsat{f})$, $\Min(\lsat{f})$ and $\Mult{\B}{\lsat{f}}$ are
  at most $n$.
\end{proof}

Since, there are many linear and homogeneous polynomials requiring
mo\-no\-to\-ne arithmetic circuits of exponential size (see,
e.g. Table~\ref{tab:bounds}), the saturated versions of $f$
immediately give exponential gaps.

Still, the ``saturation trick'' leads to somewhat artificial examples,
and it would be interesting to establish exponential gaps using
``natural'' polynomials.  For example, the $\Max/\Min$ gap is achieved
already on a very natural $st$-con\-nec\-ti\-vi\-ty polynomial
$h=\STCONN{n}$. We know that $\Min(h)=O(n^3)$ (Remark~\ref{rem:ST}),
but a simple argument (see Theorem~\ref{thm:bounds2}) shows that
$\Max(h)=2^{\Omega(n)}$. Hence,
\[
\Max(h)/\Min(h) = 2^{\Omega(n)}\ \ \mbox{ for $h=\STCONN{n}$.}
\]

From now on we concentrate on the lower bound \emph{arguments}
themselves.

\section{Lower Bounds for Separated Polynomials}
\label{sec:schnorr}

Let $g(x_1,\ldots,x_n)$ be a polynomial in $n\geq 3$ variables.  An
\emph{enrichment} of $g$ is a polynomial $h$ in $n-1$ variables
obtained by taking some variable $x_k$ and replacing it by a sum $x_i+
x_j$ or by a product $x_i x_j$ of some other two (not necessarily
distinct) variables, where $k\not\in\{i,j\}$.  A \emph{progress
  measure} of polynomials is an assignment of non-negative numbers
$\bound{g}$ to polynomials $g$ such that
\begin{description}
\item[(i)] $\bound{x_i}=0$ for each variable $x_i$;
\item[(ii)] $\bound{h}\leq \bound{g}+1$ for every enrichment $h$ of
  $g$.
\end{description}

\begin{lem}\label{lem:schnorr1}
  For every polynomial $f$, and every progress measure $\bound{f}$, we
  have $\R[f]\geq \bound{f}$.
\end{lem}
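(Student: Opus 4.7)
My plan is to prove the bound by induction on $s = \R[f]$, identifying each gate of a producing circuit with one enrichment step. The base case $s = 0$ is immediate: a circuit of no gates produces a single input, so $f$ is a variable (ignoring constants for now) and $\bound{f} = 0$ by axiom~(i).

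For the inductive step, let $\F$ be a minimum circuit of size $s \geq 1$ producing $f$, and pick a \emph{bottom} gate $\gate$---one whose two inputs are both inputs of $\F$ rather than other gates; such a $\gate$ exists by topological ordering. Say the operation at $\gate$ is $x_i \su x_j$ or $x_i \pr x_j$. Introduce a fresh variable $y$ not among $x_1,\dots,x_n$ and let $\F'$ be the circuit of size $s-1$ obtained from $\F$ by deleting $\gate$ and rerouting every wire that left $\gate$ so it now carries the input $y$. Because the polynomial produced by a circuit is defined by formal recursion on its structure, the polynomial $f'$ produced by $\F'$ is a polynomial in $x_1,\dots,x_n,y$, and substituting $x_i \su x_j$ (respectively $x_i \pr x_j$) for $y$ in $f'$ returns $f$ exactly. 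Since $y$ is fresh, $y \notin \{x_i, x_j\}$, so this passage from $f'$ to $f$ is literally one enrichment step. The inductive hypothesis applied to $\F'$ gives $\bound{f'} \leq s - 1$, and axiom~(ii) then yields $\bound{f} \leq \bound{f'} + 1 \leq s$.

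The one minor obstacle worth flagging is that a bottom gate might receive the semiring constants $\nulis$ or $\vienas$ as an input, in which case the substitution above is not literally an enrichment. This is easily circumvented by first normalizing a minimum circuit so no gate has a constant input: trivial gates like $x \su \nulis$, $x \pr \vienas$ or $x \pr \nulis$ are redundant and removable, and any purely constant subcircuit can be folded into its successor without increasing size. After this clean-up the argument applies verbatim.
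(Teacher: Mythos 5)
Your proof is correct and follows essentially the same route as the paper's: induction on circuit size, eliminating a bottom gate $\gate=x_i\ast x_j$ by replacing it with a fresh variable $y$, so that the produced polynomial becomes an enrichment of the smaller circuit's polynomial, and then invoking axiom (ii). Your explicit normalization to remove constant inputs is a small point of extra care that the paper's proof leaves implicit.
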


\begin{proof}
  Take a monotone arithmetic circuit $\F$ with $s=\R[f]$ gates
  producing $f$. We argue by induction on $s$.  If $s=0$, then
  $\F=x_i$ in an input variable, and we have $\R[f]=0=\bound{f}$.  For
  the induction step, take one gate $\gate=x_i\ast x_j$ where
  $\ast\in\{+,\cdot\}$. Let $\F'(x_1,\ldots,x_n,y)$ be the circuit
  with the gate $\gate$ replaced by a new variable $y$.  Hence,
  $\size{\F'}=\size{\F}-1$ and $F(x_1,\ldots,x_n)$ is an enrichment of
  $F'(x_1,\ldots,x_n,y)$. By the induction hypothesis, we have that
  $\size{\F'}\geq \bound{F'}$.  Together with $\bound{F}\leq
  \bound{F'}+1$, this yields $\size{\F}=\size{\F'}+1\geq
  \bound{F'}+1\geq \bound{F}$.
\end{proof}
Recall that a monomial $p$ \emph{contains} a monomial $q$ (as a
factor), if $p=qq'$ for some monomial $q'$.

\begin{dfn}\label{def:separated}
  A sub-polynomial $P\subseteq f$ is \emph{separated} if the product
  $pq$ of any two monomials $p$ and $q$ of $P$ contains no monomial of
  $f$ distinct from $p$ and from $q$.  Let
  \[
  \bbound{f}:=\max\{|P|-1\colon \mbox{$P\subseteq f$ is
    separated}\}\,.
  \]
\end{dfn}
Note that we consider separateness \emph{within} the entire set $f$ of
monomials: it is not enough that the product $pq$ contains no third
monomial of~$P$---it must not contain any third monomial of the entire
polynomial~$f$.

Note also that a multilinear polynomial $f$ of minimum degree $m$ is
separated, if every monomial of $f$ is uniquely determined by any
subset of $\lceil m/2\rceil$ its variables. (Being uniquely determined
means that no other monomial contains the same subset of variables.)
Indeed, if $p\pr q$ contains some monomial $r$ then $r$ and $p$ (or
$r$ and $q$) must share at least $\lceil m/2\rceil$ variables,
implying that $r=p$ (or $r=q$) must hold.

\begin{thm}[\Cite{Schnorr}{schnorr}]\label{thm:schnorr}
  For every polynomial $f$, we have $\R[f]\geq \bbound{f}$, where
  \[
  \bbound{f}:=\max\{|P|-1\colon \mbox{$P\subseteq f$ is
    separated}\}\,.
  \]
  In particular, $\R[f]\geq |f|-1$ if the polynomial $f$ itself is
  separated.
\end{thm}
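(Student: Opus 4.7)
The plan is to verify that $\mu(f):=\bbound{f}$ is a progress measure and invoke Lemma~\ref{lem:schnorr1}. Axiom~(i), $\bbound{x_i}=0$, is immediate: the polynomial $x_i$ has a single monomial, so any separated sub-polynomial has at most one element. The substantive work is axiom~(ii): for every enrichment $h$ of $g$ -- obtained by substituting some variable $x_k$ in $g$ by $x_i+x_j$ or by $x_ix_j$ -- I must show $\bbound{h}\leq\bbound{g}+1$.

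Fix a separated $P\subseteq h$ of size $|P|=\bbound{h}+1$; I will construct a separated $P'\subseteq g$ with $|P'|\geq|P|-1$. For each $p\in P$ choose a ``parent'' $\phi(p)\in g$: a monomial whose image under the substitution $\psi\colon q\mapsto q|_{x_k\to x_i\ast x_j}$ produces $p$, and set $P':=\phi(P)$. In the product case, $\psi$ sends each monomial of $g$ to a single monomial of $h$, so preimage sets for distinct monomials of $h$ are disjoint, $\phi$ is automatically injective, and $|P'|=|P|$. In the sum case, each $q\in g$ of $x_k$-degree $a$ produces a whole chain $L(q)$ of $a+1$ monomials in $h$ (indexed by the number of the $a$ copies of $x_k$ sent to $x_i$), and $\phi$ may collide on elements of~$P$.

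In the sum case I will establish two combinatorial facts which, via a Hall-type matching, force $|P'|\geq|P|-1$. \emph{(a) No chain hosts three elements of~$P$:} for $p_{l_1},p_{l_2},p_{l_3}\in L(q)\cap P$ with $l_1<l_2<l_3$, a direct exponent comparison gives $p_{l_2}\mid p_{l_1}p_{l_3}$, contradicting separateness of~$P$. \emph{(b) No two distinct chains each host two elements of~$P$:} for $p_1,p_2\in L(q)\cap P$ (with chain indices $l_1<l_2$) and $p_3,p_4\in L(q')\cap P$ (with indices $l_3<l_4$) all four distinct, writing $b,b'$ for the $x_i$-exponents of $q,q'$, one checks that $p_2\mid p_1p_4$ iff $l_2-l_1\leq b'+l_4$ and $p_4\mid p_2p_3$ iff $l_4-l_3\leq b+l_2$. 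The simultaneous negation of these two inequalities sums to $-l_1-l_3>b+b'$, which is impossible; hence one of the divisions holds, again contradicting separateness.

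It remains to check that $P'$ is separated in~$g$. If some $q\in g\setminus\{q_1,q_2\}$ divides $q_1q_2$ for $q_1,q_2\in P'$, write $q_1q_2=qq'$ and apply the ring homomorphism~$\psi$: the monomial $p_1p_2$ (where $\phi(p_i)=q_i$) appears in $\psi(q)\psi(q')=\psi(q_1)\psi(q_2)$, hence factors as $s_1s_2$ with $s_1$ a monomial of $\psi(q)$ and $s_2$ of $\psi(q')$, both lying in~$h$. If either $s_i$ differs from both $p_1$ and~$p_2$, separateness of~$P$ is violated; otherwise $q$ turns out to be an alternative parent of $p_1$ (or $p_2$), and the conflict is removed by a swap in~$\phi$. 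The main technical obstacle I anticipate is formalizing this last exchange argument and coordinating it with the Hall matching above so that a single $\phi$ simultaneously realizes both $|P'|\geq|P|-1$ and the separateness of~$P'$; the chain-exponent bookkeeping in claim~(b), especially when the external parts of $q$ and~$q'$ happen to coincide, is delicate but routine once the strategy is in place.
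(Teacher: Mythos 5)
Your proposal follows essentially the same route as the paper: verify that $\bbound{\cdot}$ is a progress measure via Lemma~\ref{lem:schnorr1}, reduce to the sum-substitution case, analyze the chains $u^iv^{k-i}p$ to show at most one parent can absorb two elements of $P$, and then check that the parent set is itself separated. Your exponent computations in (a) and (b) are correct, and in fact you are somewhat more careful than the paper, which omits the three-in-one-chain case and the distinctness issue in the separateness check that your exchange argument addresses.
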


\begin{proof}
  It is enough to show that the measure $\bbound{f}$ is a progress
  measure. The first condition (i) is clearly fulfilled, since
  $\bbound{x_i}=1-1=0$.  To verify the second condition (ii), let
  $f(x_1,\ldots,x_n,y)$ be a polynomial, and $h(x_1,\ldots,x_n)$ be
  its enrichment.  Our goal is to show that $\bbound{f}\geq
  \bbound{h}-1$.  We only consider the ``hard'' case when $y$ is
  replaced by a sum of variables:
  $h(x_1,\ldots,x_n)=f(x_1,\ldots,x_n,u+ v)$, where
  $u,v\in\{x_1,\ldots,x_n\}$.

  To present the proof idea, we first consider the case when no
  monomial of $f$ contains more than one occurrence of the variable
  $y$. Then every monomial $yp$ of $f$ turns into two monomials $up$
  and $vp$ of $h$.  To visualize the situation, we may consider the
  bipartite graph $G\subseteq f\times h$, where every monomial $yp\in
  f$ is connected to two monomials $up,vp\in h$; each monomial $q\in
  f$ without $y$ is connected to $q\in h$.  Take now a separated
  subset $P\subseteq h$ such that $|P|-1=\bbound{h}$, and let
  $Q\subseteq f$ be the set of its neighbors in $G$. Our goal is to
  show that:
  \begin{description}
  \item[(a)] $|Q|\geq |P|-1$, and
  \item[(b)] $Q$ is separated.
  \end{description}
  Then the desired inequality $\bbound{f}\geq |Q|-1\geq
  |P|-2=\bbound{h}-1$ follows.

  To show item (a), it is enough to show that at most one monomial in
  $Q$ can have both its neighbors in $P$. To show this, assume that
  this holds for some two monomials $yp$ and $yq$ of $Q$. Then all
  four monomials $up,vp,uq,vq$ belong to $P$. But this contradicts the
  separateness of $P$, because the product $up\pr vq$ contains the
  third monomial $uq$ (and $vp$).

  To show item (b), assume that the product $p\pr q$ of some two
  monomials $p\neq q$ of $Q$ contains some third monomial $r\in h$.
  Let $p',q'\in P$ be some neighbors of $p$ and $q$ lying in $P$.
  Then the product $p'\pr q'$ must contain one (of the two) neighbors
  of $r$.  Since \emph{both} of these neighbors of $r$ belong to $h$,
  we obtain a contradiction with the separateness of~$P$.

  In general (if $y$ can have any degrees in $f$), a monomial $y^kp$
  of $f$ has $k+1$ neighbors $u^iv^{k-i}p$, $i=0,1,\ldots,k$ in $h$.
  To show (a), it is again enough to show that at most one monomial in
  $Q$ can have two neighbors in $P$.  For this, assume that there are
  two monomials $p\neq q$ such that all four monomials
  $u^av^{k-a}p,u^bv^{k-b}p, u^cv^{l-c}q,u^dv^{l-d}q$ belong to
  $P$. Assume w.l.o.g. that $a=\max\{a,b,c,d\}$.  Then the product
  $u^av^{k-a}p\pr u^cv^{l-c}q$ contains $u^av^{l-c}q$, and (since
  $c\leq a$) contains the monomial $u^av^{l-a}q$ of $h$, contradicting
  the separateness of~$P$.  The proof of (b) is similar.
\end{proof}

\begin{remark}
  It is not difficult to see that we have a stronger inequality
  $\bbound{f}\geq \bbound{h}$, if the variable $y$ is replaced by the
  product $uv$ (instead of the sum $u+v$).  Thus, in fact,
  Theorem~\ref{thm:schnorr} gives a lower bound on the number of sum
  gates.
\end{remark}

As a simple application of Schnorr's argument, consider the
\emph{triangle polynomial}
\[
\tR{n}(x,y,z)=\sum_{i,j,k\in[n]} x_{ik}y_{kj}z_{ij}\,.
\]
This polynomial has $3n$ variables and $n^3$ monomials.

\begin{cor}\label{cor:triangle}
  If $f=\tR{n}$, then $\Min(f)=\Max(f)=\R[f]=\Theta(n^3)$.
\end{cor}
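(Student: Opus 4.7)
The plan is to establish both directions via the two theorems already proved: the upper bound is trivial, while the matching lower bound goes through Schnorr's theorem (Theorem~\ref{thm:schnorr}) after lifting from $\R$ to $\Min$ and $\Max$ via Theorem~\ref{thm:homog}. The upper bound $\R[\tR{n}] = O(n^3)$ is immediate, since $\tR{n}$ has exactly $n^3$ multilinear monomials, each a product of three variables, so a monotone arithmetic circuit producing $\tR{n}$ needs only $O(n^3)$ gates.

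For the lower bound, I would first apply Theorem~\ref{thm:homog}. The polynomial $\tR{n}$ is multilinear: within a single monomial the three variables $x_{ik}, y_{kj}, z_{ij}$ come from three disjoint alphabets. It is also homogeneous, since every monomial has degree $3$. Hence Theorem~\ref{thm:homog} yields
\[
\Min(\tR{n})=\Max(\tR{n})=\R[\tR{n}],
\]
and it suffices to lower-bound $\R[\tR{n}]$.

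For that bound I would verify that $\tR{n}$ itself is separated and then invoke Theorem~\ref{thm:schnorr}. Take two distinct monomials $p=x_{ik}y_{kj}z_{ij}$ and $q=x_{i'k'}y_{k'j'}z_{i'j'}$, and suppose for contradiction that $pq$ contains some third monomial $r=x_{ac}y_{cb}z_{ab}$ of $\tR{n}$. Then from the $x$-, $y$-, and $z$-positions we must have
\[
(a,c)\in\{(i,k),(i',k')\},\quad (c,b)\in\{(k,j),(k',j')\},\quad (a,b)\in\{(i,j),(i',j')\}.
\]
A short case analysis on the four joint choices for $(a,c)$ and $(c,b)$ shows that whenever we cross the two monomials (i.e.\ $(a,c)$ from one and $(c,b)$ from the other), the middle coordinate forces $k=k'$, and then the third constraint on $(a,b)$ collapses $(a,b,c)$ to either $(i,j,k)$ or $(i',j',k')$; the two non-crossed choices give $r=p$ or $r=q$ directly. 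Hence $r\in\{p,q\}$, and $\tR{n}$ is separated. Theorem~\ref{thm:schnorr} then gives $\R[\tR{n}] \geq |\tR{n}|-1 = n^3-1$, which combined with the upper bound and the identities above yields $\Min(\tR{n})=\Max(\tR{n})=\R[\tR{n}]=\Theta(n^3)$.

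The only nontrivial step is the separateness check, and it is not really an obstacle: the redundancy built into the encoding---$(i,k)$, $(k,j)$, $(i,j)$ each appear in a distinct variable---means that any two of the three index pairs already determine $(i,j,k)$, so no ``mixed'' index triple coming from $p$ and $q$ can produce a valid third monomial of $\tR{n}$.
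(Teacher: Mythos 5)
Your proposal is correct and follows exactly the paper's route: upper bound is trivial, Theorem~\ref{thm:homog} collapses $\Min$, $\Max$ and $\R[\cdot]$ since $\tR{n}$ is multilinear and homogeneous, and the lower bound comes from verifying separateness and invoking Theorem~\ref{thm:schnorr}. Your explicit index-chasing case analysis is just a spelled-out version of the paper's observation that each monomial is determined by any two of its three variables, so the two arguments coincide.
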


\begin{proof}
  The equalities $\Min(f)=\Max(f)=\R[f]$ hold by
  Theorem~\ref{thm:homog}, because $f$ is multilinear and
  homogeneous. The upper bound $\R[f]=O(n^3)$ is trivial.  To prove
  the lower bound $\R[f]=\Omega(n^3)$, observe that every monomial
  $p=x_{ik}y_{kj}z_{ij}$ of $f$ is uniquely determined by any choice
  of any two of its three variables. This implies that $p$ cannot be
  contained in a union of any two monomials distinct from $p$. Thus,
  the polynomial $f$ is separated, and its Schnorr's measure is
  $\bbound{f}=n^3-1$.  Theorem~\ref{thm:schnorr} yields $\R[f]\geq
  \bbound{f}=n^3-1$, as desired.
\end{proof}

Recall that the $k$-clique polynomial $\Clique{n,k}$ has
$\binom{n}{k}$ monomials $\prod_{i< j\in S}x_{ij}$ corresponding to
subsets $ S\subseteq [n]$ of size $|S|=k$.  This is a homogeneous
multilinear polynomial of degree~$\binom{k}{2}$. Note that $\tR{n}$ is
a sub-polynomial of $\Clique{3n,3}$ obtained by setting some variables
to~$0$.

By Lemma~\ref{lem:bool}, an exponential lower bound for $\Clique{n,s}$
over the tropical $\Min$ follows from Razborov's lower bound for this
polynomial over the boolean semiring $\B$ \cite{razb1}. However, the
proof over $\B$ is rather involved. On the other hand, in tropical
semirings such a bound comes quite easily.

\begin{cor}\label{cor:cliq}
  For $f=\Clique{n,k}$, $\Min(f)$, $\Max(f)$ and $\Mult{\B}{f}$ are at
  least $\binom{n}{k}-1$.
\end{cor}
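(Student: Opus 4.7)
The plan is to reduce the whole statement to a single combinatorial property of $\Clique{n,k}$, namely that it is separated in the sense of Definition~\ref{def:separated}. First I would observe that $f=\Clique{n,k}$ is both multilinear (each edge variable $x_{ij}$ appears with exponent $0$ or $1$ in each monomial) and homogeneous (every monomial has the same degree $\binom{k}{2}$). Theorem~\ref{thm:homog} then collapses the three quantities of interest to one, giving $\Mult{\B}{f}=\Min(f)=\Max(f)=\R[f]$, so it suffices to prove $\R[f]\geq \binom{n}{k}-1$.

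For the latter I would apply Schnorr's bound (Theorem~\ref{thm:schnorr}) in the clean form it takes when $f$ itself is separated: $\R[f]\geq |f|-1$, which is exactly $\binom{n}{k}-1$ since there are $\binom{n}{k}$ clique monomials. So everything reduces to verifying separateness of $\Clique{n,k}$.

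To verify separateness, I would take two distinct $k$-subsets $S,T\subseteq[n]$ and ask whether the product $p_S\cdot p_T$ of the corresponding monomials can contain a third clique monomial $p_U$ with $U\neq S,T$. The product consists of all edges inside $S$ together with all edges inside $T$, so containing $p_U$ means every edge $\{i,j\}\subseteq U$ lies inside $S$ or inside $T$. The key combinatorial claim is that this forces $U\subseteq S$ or $U\subseteq T$, and since $|U|=|S|=|T|=k$ this contradicts $U\neq S,T$. To prove the claim, I would argue contrapositively: if $U\not\subseteq S$ and $U\not\subseteq T$, pick $i\in U\setminus S$ and $j\in U\setminus T$. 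If $i\neq j$, the edge $\{i,j\}\subseteq U$ lies neither in the $S$-clique (missing $i$) nor in the $T$-clique (missing $j$). If $i=j$, then $i\in U\setminus(S\cup T)$, and any edge of the $U$-clique incident to $i$ — which exists since $k\geq 2$ — again lies in neither clique.

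I do not foresee any real obstacle: the combinatorial core is a one-line case analysis on whether the two witnesses $i,j$ coincide, and everything else is a direct invocation of results already proved in the paper. The only subtlety worth flagging is that separateness is defined \emph{within} the whole polynomial $f$, not merely within the selected sub-polynomial $P$; but here we are taking $P=f$, so this is automatic.
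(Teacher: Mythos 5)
Your proposal is correct and follows essentially the same route as the paper: reduce to $\R[f]$ via Theorem~\ref{thm:homog}, invoke Theorem~\ref{thm:schnorr}, and verify separateness by picking witnesses $i\in U\setminus S$, $j\in U\setminus T$ and splitting on whether they coincide, which is exactly the paper's case analysis. No gaps.
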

This lower bound on $\Mult{\B}{f}$ was proved by
\Cite{Krieger}{krieger} using different arguments.

\begin{proof}
  Since $f$ is multilinear and homogeneous, it is enough (by
  Theorems~\ref{thm:homog}) to show the corresponding lower bound on
  $\R[f]$. By Theorem~\ref{thm:schnorr}, it is enough to show that $f$
  is separated.

  Assume for the sake of contradiction, that the union of two distinct
  $k$-cliques $A$ and $B$ contains all edges of some third clique
  $C$. Since all three cliques are distinct and have the same number
  of nodes, $C$ must contain a node $u$ which does not belong to $A$
  and a node $v$ which does not belong to $B$. This already leads to a
  contradiction because either the node $u$ (if $u = v$) or the edge
  $\{u, v\}$ (if $u\neq v$) of $C$ would remain uncovered by the
  cliques $A$ and~$B$.
\end{proof}

Recall that the dynamic programming algorithm of Floyd--Warshall
implies that the all-pairs shortest path polynomial $\APSP{n}$, and
hence, also the matrix product polynomial $\MP{n}$, have
$\Min$-circuits of size $O(n^3)$; see Theorem~\ref{thm:upper-bounds}.
On the other hand, using Theorem~\ref{thm:schnorr} one can show that
this algorithm is optimal: a cubic number of gates is also necessary.

\begin{cor}\label{cor:APSP}
  Both $\Min(\APSP{n})$ and $\Min(\MP{n})$ are $\Theta(n^3)$.
\end{cor}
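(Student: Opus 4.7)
The upper bound $O(n^3)$ for $\APSP{n}$ is Theorem~\ref{thm:upper-bounds}, and for $\MP{n}$ one simply realizes each $h_{ij}=\min_k(x_{ik}+x_{kj})$ with $O(n)$ gates, for a total of $O(n^3)$. For the matching lower bound I plan to reduce both problems to the triangle polynomial of Corollary~\ref{cor:triangle}.

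For $\MP{n}$ I would use a \emph{combining trick}: given a $\Min$-circuit $\F$ of size $s$ whose gates $v_{ij}$ simultaneously compute the polynomials $h_{ij}$, I introduce fresh variables $w_{ij}$, attach a sum-gate $v_{ij}+w_{ij}$ for each pair, and min the $n^2$ resulting gates through a balanced tree, for $O(n^2)$ extra gates. The enlarged circuit computes over $\Min$ the multilinear, homogeneous, degree-$3$ polynomial $g=\sum_{i,j,k}x_{ik}x_{kj}w_{ij}$, which is separated by exactly the argument of Corollary~\ref{cor:triangle}: $w_{ij}$ determines $(i,j)$, each $x$-variable determines a pair of indices sharing a common one, so any two of the three variables in a monomial fix the triple $(i,j,k)$. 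Theorems~\ref{thm:homog} and \ref{thm:schnorr} then give $\Min(g)=\R[g]\geq |g|-1=\Omega(n^3)$, forcing $s=\Omega(n^3)$.

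For $\APSP{n}$ the polynomials $f_{ij}$ are not homogeneous, and in fact $\lmin{f_{ij}}=\{x_{ij}\}$, so the combining trick cannot be applied to them directly. My plan is to first restrict the input so that each output gate collapses (essentially) to an $\MP$-polynomial, and then invoke the previous paragraph. Partition $[n]$ into three equal parts $A,B,C$ and syntactically replace each input variable $x_{uv}$ by the semiring constant $\nulis=\infty$ unless $(u,v)\in(A\times B)\cup(B\times C)$; this does not enlarge the circuit. In the directed version of $K_n$ the only paths from $a\in A$ to $c\in C$ that survive have length two through some $b\in B$, so the restricted output at $v_{ac}$ computes exactly $h^B_{ac}:=\sum_{b\in B}x_{ab}x_{bc}$, giving a bipartite copy of $\MP{n/3}$ to which the bound of the previous paragraph applies. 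In the undirected case some longer alternating paths also survive, but $h^B_{ac}$ is still the lower envelope of the surviving (multilinear) polynomial, so the same combining-plus-separateness argument goes through via the stronger form $\Min(f)\geq \R[\lenv{f}]$ of Theorem~\ref{thm:homog}. The main obstacle is choosing the restriction so that the surviving minimum-degree monomials form a genuinely separated system of $\Theta(n^3)$ monomials, which is why a three-part partition (rather than a bipartition) is needed.
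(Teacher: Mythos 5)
Your proof follows essentially the same route as the paper's: the paper also derives the lower bound by observing that the triangle polynomial $\tR{n}=\sum_{i,j}z_{ij}f_{ij}$ is a single-output version of $\MP{n}$ obtainable from any circuit for $\MP{n}$ with only $O(n^2)$ extra gates, and then invokes Corollary~\ref{cor:triangle}; your "combining trick" with fresh variables $w_{ij}$ and a min-tree is exactly this reduction, and your tripartite restriction makes explicit the step the paper compresses into "it is enough to show $\Min(\MP{n})=\Omega(n^3)$".

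One point needs fixing. You write the combined polynomial as $g=\sum_{i,j,k}x_{ik}x_{kj}w_{ij}$, with a \emph{single} matrix $x$, and justify separatedness by claiming that any two of the three variables of a monomial determine the triple $(i,j,k)$. That claim is false for the one-matrix version: the pair $\{x_{12},x_{21}\}$ is the $x$-part of both $x_{12}x_{21}w_{11}$ and $x_{21}x_{12}w_{22}$, and one can in fact violate separatedness outright, e.g.\ $p=x_{12}x_{22}w_{12}$ and $q=x_{11}x_{13}w_{13}$ have a product containing the third monomial $r=x_{11}x_{12}w_{12}$. The paper avoids this by taking $\MP{n}$ in its bilinear form $f_{ij}=\sum_k x_{ik}y_{kj}$ with two disjoint variable sets, so that the combined polynomial is literally $\tR{n}$ and Corollary~\ref{cor:triangle} applies verbatim; your own $\APSP{}$ argument already does the right thing here, since the tripartite restriction to $(A\times B)\cup(B\times C)$ makes the two edge layers variable-disjoint. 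So the fix is only to state the $\MP{n}$ step for the two-matrix polynomial (or to route the single-matrix case through the same tripartite restriction); with that adjustment the argument is sound.
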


\begin{proof}
  It is enough to show that $\Min(\MP{n})=\Omega(n^3)$. Recall that
  $\MP{n}(x,y)$ is the set of all $n^2$ polynomials
  $f_{ij}=\sum_{k\in[n]} x_{ik}y_{kj}$.  Since the triangle polynomial
  $\tR{n} = \sum_{i,j\in [n]} z_{ij}f_{ij}$ is just a single-output
  version of $\MP{n}$, and its complexity is by at most an additive
  factor of $2n^2$ larger than that of $\MP{n}$, the desired lower
  bound for $\MP{n}$ follows directly from
  Corollary~\ref{cor:triangle}.
\end{proof}

\Cite{Kerr}{kerr} earlier proved $\Min(\MP{n})=\Omega(n^3)$ using a
different argument, which essentially employs the fact the
$\Min$-semiring contains more than two distinct elements.  Since this
``domain-dependent'' argument may be of independent interest, we
sketch it.

\begin{proof} (Due to \Cite{Kerr}{kerr}) Let $\F$ be a $\Min$-circuit
  computing all $n^2$ polynomials
  \[
  f_{ij}(x)=\min\{x_{ik}+y_{kj}\colon k=1,\ldots,n\}\,.
  \]
  By Lemma~\ref{lem:arithm1}, for each polynomial $f_{ij}$ there must
  be a gate $\gate_{ij}$, the polynomial $F_{ij}$ produced at which is
  of the form $F_{ij}=\min\{f_{ij}, G_{ij}\}$, where $G_{ij}$ is some
  set of monomials (sums), each containing at least one monomial of
  $f_{ij}$.

  Assign to every monomial $p=x_{ik}+y_{kj}$ of $f_{ij}$ a sum gate
  $\gate_p$ with the following two properties: (i) $p$ is produced at
  $\gate_p$, and (ii) there is a path from $\gate_p$ to $\gate_{ij}$
  containing no sum gates. Since $a+a=a$ does not hold in $\Min$, at
  least one such gate must exist for each of the monomials
  $x_{ik}+y_{kj}$.

  It remains therefore to show that no other term $x_{ab}+y_{bc}$ gets
  the same gate $\gate_p$.  To show this, assume the opposite. Then at
  the gate $\gate_p$ some sum
  \[
  \min\{x_{ik}, \alpha, \ldots\}+\min\{y_{kj}, \ldots\}
  \]
  is computed, where $\alpha\in\{x_{ab}, y_{bc}\}$ is a single
  variable distinct from $x_{ik}$ and $y_{kj}$.  Set $\alpha:=0$,
  $x_{ik}=y_{kj}:=1$ and set all remaining variables to $2$.  Then the
  first minimum in the sum above evaluates to $0$, and we obtain
  $\pol{F}_{ij}(x)\leq 1$.  But $\pol{f}_{ij}(x)=2$ because the term
  $x_{ik}+y_{kj}$ gets value $1+1=2$, and the remaining terms of
  $f_{ij}$ get values $\geq 2+0=2$. This gives the desired
  contradiction.
\end{proof}

\begin{remark}\label{rem:paterson}
  Using more subtle arguments, \Cite{Paterson}{paterson}, and
  \Cite{Mehlhorn and Galil}{mehlhorn} succeeded to prove a cubic lower
  bound $\Omega(n^3)$ for $\MP{n}$ even over the boolean
  semiring~$\B$.
\end{remark}

\begin{remark}
  The argument used by Schnorr \cite{schnorr} is inductive, and is
  currently known as the \emph{gate-elimination method}. Having a
  circuit $\F$ of $n$ variables, replace its first gate by a new
  variable, use induction hypothesis for the resulting circuit $\F'$
  of $n+1$ variables but of smaller size to make a desired conclusion
  about the original circuit $\F$.  Using a similar gate-elimination
  reasoning, Baur and Strassen \cite{BS83} proved the following
  surprising upper bound: if a polynomial $f(x_1,\ldots,x_n)$ can be
  produced by a circuit of size $s$, then the polynomial $f$ and all
  its $n$ partial derivatives $\Der{f}{x_i}$ $(i=1,\ldots,n)$ can all
  be simultaneously produced by a circuit of size only $4n$. (Note
  that a trivial upper bound is about $sn$.) Their (relatively simple)
  argument uses gate-elimination together with the chain rule for
  partial derivatives. If the polynomial $f$ is multilinear, then
  $\Der{f}{x_i}$ is a polynomial obtained from $f$ by removing all
  monomials not containing $x_i$, and removing $x_i$ from all
  remaining monomials.  In particular, if a sum $f=\sum_{i=1}^k y_i
  f_i(x_1,\ldots,x_n)$ can be produced by a circuit of size $s$, then
  all polynomials $f,f_1,\ldots,f_k$ can be simultaneously produced by
  a circuit of size~$4s$.
\end{remark}

\section{Decompositions and Cuts}
\label{sec:decomp}

Besides the gate-elimination method, most of lower bound arguments for
monotone arithmetic circuits follow the following general frame: if a
polynomial $f$ can be produced by a circuit of size $s$, then $f$ can
be written as a sum $f=\sum_{i=1}^t g_i$ of $t=O(s)$ ``rectangles''
$g_i$. Usually, these ``rectangles'' $g_i$ are products of two (or
more) polynomials of particular degrees. Let us first explain, where
these ``rectangles'' come from.

Let $\F$ be a circuit over some semiring
$\A=(S,\su,\pr,\nulis,\vienas)$.  For a gate $\gate$ in $\F$, let
$\Prod{\gate}$ denote the polynomial produced at $\gate$, and let
$\F_{\gate=\nulis}$ denote the circuit obtained from $\F$ by replacing
the gate $\gate$ by the additive identity $\nulis$.  Recall that
$a\pr\nulis=\nulis$ holds for all $a\in S$.  Hence, the polynomial
$F_{\gate=\nulis}$ produced by $\F_{\gate=\nulis}$ consists of only
those monomials of $F$ which do not ``use'' the gate $\gate$ for their
production. To avoid trivialities, we will always assume that
$F_{\gate=\nulis}\neq F$, i.e. that there are no ``redundant'' gates.

\begin{lem}\label{lem:contain}
  For every gate $\gate$ in $\F$, the polynomial $F$ produced by $\F$
  can be written as a sum $F=F_{\gate} \su F_{\gate=\nulis}$ of two
  polynomials, the first of which has the form
  $F_{\gate}=\Prod{\gate}\pr \ext{\gate}$ for some polynomial
  $\ext{\gate}$.
\end{lem}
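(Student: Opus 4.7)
My plan is to reduce the claim to a substitution argument that uses the annihilation axiom essentially. I introduce a fresh variable $y$ and let $\F'$ be the circuit obtained from $\F$ by replacing the subcircuit rooted at $\gate$ by the new input $y$; write its produced polynomial as $F'(x,y)$. Because the polynomial produced by a circuit is defined purely inductively from its inputs and the semiring operations, substituting the formal polynomial $\Prod{\gate}(x)$ for $y$ in $F'(x,y)$ recovers exactly $F(x)$.

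The next step is to collect $F'(x,y)$ by powers of $y$, writing
\[
F'(x,y) \;=\; \sum_{k\geq 0}\, y^k\, F_k(x),
\]
with each $F_k$ a polynomial in $x$ alone. Substituting $y = \Prod{\gate}(x)$ and factoring one $\Prod{\gate}(x)$ out of all terms with $k\geq 1$ then gives
\[
F(x) \;=\; F_0(x) \,\su\, \Prod{\gate}(x)\pr\ext{\gate}(x), \qquad \ext{\gate}(x):=\sum_{k\geq 1}\Prod{\gate}(x)^{k-1}F_k(x).
\]

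It remains to identify $F_0$ with $F_{\gate=\nulis}$. This is where the annihilation axiom is essential: on the one hand $F_0(x)=F'(x,\nulis)$ by construction, and on the other hand substituting $\nulis$ for $y$ in $\F'$ yields a circuit identical to $\F_{\gate=\nulis}$. Expanding the produced polynomial of $\F_{\gate=\nulis}$ via the semiring axioms, axiom~(i) $a\su\nulis=a$ absorbs the $\nulis$ at each sum gate fed by $\gate$, while axiom~(iii) $a\pr\nulis=\nulis$ kills every parse graph that enters a product gate through $\gate$. What survives are exactly those monomials of $F$ whose parse graphs never use $\gate$, i.e.\ $F_{\gate=\nulis}$ by definition. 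This identification $F_0 = F_{\gate=\nulis}$ is the main (and really the only) obstacle; the surrounding manipulations are routine expansions using only the generic semiring axioms.
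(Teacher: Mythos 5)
Your proposal is correct and follows essentially the same route as the paper: replace the gate $\gate$ by a fresh variable $y$, observe that the produced polynomial splits into a $y$-free part (which is $F_{\gate=\nulis}$) and a part divisible by $y$, and then substitute $\Prod{\gate}$ for $y$. You merely spell out two steps the paper leaves implicit — collecting $F'$ by powers of $y$ before factoring out one copy of $\Prod{\gate}$, and justifying the identification $F_0=F_{\gate=\nulis}$ via the annihilation and identity axioms — which is a reasonable elaboration rather than a different argument.
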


\begin{proof}
  If we replace the gate $\gate$ by a new variable $y$, the resulting
  circuit produces a polynomial of the form $y\pr A+F_{\gate=\nulis}$
  for some polynomial $A$.  It remains to substitute all occurrences
  of the variable $y$ with the polynomial $\Prod{\gate}$ produced at
  the gate $\gate$.
\end{proof}

\begin{remark}\label{rem:double-counting}
  Roughly speaking, the number $|F_{\gate}|$ of monomials in the
  polynomial $F_{\gate}$ is the ``contribution'' of the gate $\gate$
  to the production of the entire polynomial $F$. Intuitively, if this
  contribution is small for many gates, then there must be many gates
  in $\F$. More formally, associate with each monomial $p\in F$ some
  of its parse-graphs $\F_p$ in $\F$.  Observe that $u\in \F_p$
  implies $p\in F_u$.  Thus, double-counting yields
  \[
  \size{\F}=\sum_{u\in \F}1 \geq \sum_{u\in\F}\sum_{p\in F\colon u\in
    \F_p}\frac{1}{|F_u|} =\sum_{p\in F}\sum_{u\in \F_p}\frac{1}{|F_u|}
  \geq |F|\cdot \min_{p\in F} \sum_{u\in \F_p}\frac{1}{|F_u|}\,.
  \]
  So, in principle, one can obtain strong lower bounds on the total
  number of gates in $\F$ by showing that this latter minimum cannot
  be too small.
\end{remark}

The polynomial $\ext{\gate}$ in Lemma~\ref{lem:contain} can be
explicitly described by associating polynomials with paths in the
circuit $\F$.  Let $\pi$ be a path from a gate $\gate$ to the output
gate, $u_1,\ldots,u_m$ be all product gates along this path (excluding
the first gate $\gate$, if it itself is a product gate), and
$w_1,\ldots,w_m$ be input gates to these product gates \emph{not
  lying} on the path $\pi$. We associate with $\pi$ the polynomial
$\Prod{\pi}:=\Prod{w_1}\pr \Prod{w_2}\pr \cdots\pr \Prod{w_m}$.  Then
\[
\ext{\gate}=\sum_{\pi} \Prod{\pi}\,,
\]
where the sum is over all paths $\pi$ from $\gate$ to the output gate.

Lemma~\ref{lem:contain} associates sub-polynomials $\Prod{\gate}\pr
\ext{\gate}$ of $F$ with \emph{nodes} (gates) $\gate$ of $\F$.  In
some situations, it is more convenient to associate sub-polynomials
with \emph{edges}.  For this, associate with every edge $(u,v)$, where
$v=u\ast w$ is some gate with $\ast\in\{\su,\pr\}$ of $\F$, the
polynomial
\[
\Ext{v}{u}:=A\pr \ext{v}\ \ \mbox{ where }\ \ A=\begin{cases}
  \vienas & \mbox{ if $\ast=\su$;}\\
  \Prod{w} & \mbox{ if $\ast=\pr$.}
\end{cases}
\]
That is, $\Ext{v}{u}=\ext{v}$ if $v$ is a sum gate, and
$\Ext{v}{u}=\Prod{w}\pr \ext{v}$ if $v$ is a product gate.

A \emph{node-cut} in a circuit is a set $U$ of its nodes (gates) such
that every input-output path contains a node in $U$. Similarly, an
\emph{edge-cut} is a set $E$ of edges such that every input-output
path contains an edge in~$E$.  Recall that, in our notation, ``$f=h$''
for two polynomials $f$ and $h$ only means that their \emph{sets} of
monomials are the same---their multiplicities (coefficients) may
differ.

\begin{lem}\label{lem:cuts}
  If $U$ is a node-cut and $E$ an edge-cut in a circuit $\F$, then
  \[
  F=\sum_{u\in U}\Prod{u}\pr \ext{u}=\sum_{(u,v)\in E}\Prod{u}\pr
  \Ext{v}{u}\,.
  \]
\end{lem}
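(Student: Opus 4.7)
The plan is to use the structural description of $F$ via \emph{parse-graphs}. Recall that each monomial $p\in F$ is produced by at least one parse-graph $\F_p$ in $\F$, and by the recursive definition (include the output gate; at every sum gate include exactly one input; at every product gate include both inputs), such a parse-graph contains an entire input-to-output path in $\F$. Furthermore, as noted in Remark~\ref{rem:double-counting}, $u\in \F_p$ implies $p\in F_u=\Prod{u}\pr \ext{u}$; this is essentially the content of Lemma~\ref{lem:contain} read at the gate~$u$.

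For the node-cut identity, the inclusion $\sum_{u\in U}\Prod{u}\pr\ext{u}\subseteq F$ follows immediately from Lemma~\ref{lem:contain}, since each summand $F_u$ is a subset of $F$. For the reverse inclusion, take any monomial $p\in F$ and any parse-graph $\F_p$ producing it. The input-to-output path contained in $\F_p$ must meet $U$ at some node $u$, so by the observation above $p\in \Prod{u}\pr\ext{u}$. This proves the first equality as sets of monomials.

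For the edge-cut identity, the plan is to prove an edge-version of Lemma~\ref{lem:contain}: for a fixed edge $(u,v)$ in $\F$, the set of monomials of $F$ whose parse-graphs use this edge is exactly $\Prod{u}\pr \Ext{v}{u}$. The case analysis in the definition of $\Ext{v}{u}$ mirrors the behaviour of parse-graphs at $v$: if $v$ is a sum gate, a parse-graph using the edge $(u,v)$ does \emph{not} include the sibling input $w$, giving contribution $\Prod{u}\pr\ext{v}=\Prod{u}\pr\Ext{v}{u}$; if $v$ is a product gate, both inputs are included in the parse-graph, giving contribution $\Prod{u}\pr\Prod{w}\pr\ext{v}=\Prod{u}\pr\Ext{v}{u}$. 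This can be formalized by replacing the edge $(u,v)$ with a constant $\nulis$ feeding into $v$ and invoking Lemma~\ref{lem:contain}-style reasoning. Once the edge-contribution formula is established, the same cut argument applies: every parse-graph $\F_p$ contains an input-to-output path, which must use some edge $(u,v)\in E$, so $p$ is a monomial of $\Prod{u}\pr\Ext{v}{u}$.

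The main point requiring care is the edge-version at product gates: one must verify that the sibling $\Prod{w}$ is accounted for exactly once, which is precisely why the definition of $\Ext{v}{u}$ distinguishes the two types of gate. Given this, both equalities in the lemma reduce to the same two-sided argument: one direction from the (node- or edge-) decomposition of $F$ given by Lemma~\ref{lem:contain}, and the other from the fact that the input-to-output path inside any parse-graph must cross the cut.
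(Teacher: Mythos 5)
Your proposal is correct and follows essentially the same route as the paper: both directions are handled exactly as in the paper's own proof, namely the inclusion $\subseteq F$ from Lemma~\ref{lem:contain} (or the definitions), and the reverse inclusion by taking a parse-graph of a monomial $p$, observing that it must contain a node of $U$ (resp.\ an edge of $E$) because it contains an input-to-output path of the circuit, and then factoring $p=p'p''$ with $p'\in\Prod{u}$ and $p''\in\ext{u}$ (resp.\ $p''\in\Ext{v}{u}$). Your explicit case analysis at sum versus product gates for the edge version is a welcome elaboration of a step the paper leaves implicit, but it is not a different argument.
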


\begin{proof}
  The fact that all monomials of the last two polynomials are also
  monomials of $F$ follows from their definitions. So, it is enough to
  show that every monomial $p\in F$ belongs to both of these
  polynomials. For this, take a parse graph $\F_p$ of $p$.  Since $U$
  forms a node-cut, the graph $\F_p$ must contain some node $u\in
  U$. The monomial $p$ has a form $p=p'p''$ where $p'$ is the monomial
  produced by the subgraph of $\F_p$ rooted in $u$. Hence, $p'\in
  \Prod{u}$ and $p''\in \ext{u}$.  Similarly, since $E$ forms an
  edge-cut, the graph $\F_p$ contains some edge $(u,v)\in E$. The
  monomial $p$ has the form $p=p'p''$ where $p'$ is the monomial
  produced by the subgraph of $\F_p$ rooted in $u$. Hence, $p'\in
  \Prod{u}$ and $p''\in \Ext{v}{u}$.
\end{proof}

\section{Bounds for $(k,l)$-free Polynomials}
\label{sec:kl-free}

A polynomial $f$ is $(k,l)$-\emph{free} ($1\leq k\leq l$) if $f$ does
not contain a product of two polynomials, one with $>k$ monomials and
the other with $>l$ monomials. A polynomial $f$ is $f$-\emph{free} if
it is $(k,k)$-free, that is, if
\[
\mbox{$A\pr B\subseteq f$ implies $\min\{|A|,|B|\}\leq k$.}
\]
Note that this alone gives no upper bound on the total number $|A\pr
B|$ of monomials in the product $A\pr B$.

\begin{thm}\label{thm:GS}
  If a $(k,l)$-free polynomial $f$ can be produced by a circuit of
  size $s$, then $f$ can be written as a sum of at most $2s$ products
  $A\times B$ with $|A|\leq k$ and $|B|\leq l^2$.  In particular,
  \[
  \R[f]\geq \frac{|f|}{2kl^2}\,.
  \]
\end{thm}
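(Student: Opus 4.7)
I would apply the edge-cut framework of Section~\ref{sec:decomp}, using Lemma~\ref{lem:cuts} on a circuit $\F$ of size $s$ producing $f$. Define $L = \{u : |\Prod{u}| \leq k\}$; since $|\Prod{\cdot}|$ is monotone non-decreasing along the directed edges of $\F$ (for any gate $v = u \su w$ or $v = u \pr w$, the map $p \mapsto p$ injects $\Prod{u}$ into $\Prod{v}$ when the partner polynomial is nonempty), the set $L$ is downward-closed. Then $E = \{(u,v) : u \in L,\ v \notin L\}$ is an edge cut of~$\F$. Each gate $v \notin L$ has fan-in at most $2$ and there are at most $s$ such gates, so $|E| \leq 2s$. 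Lemma~\ref{lem:cuts} yields
\[
F \;=\; \sum_{(u,v)\in E}\,\Prod{u} \pr \Ext{v}{u},
\]
and each summand already has left factor of size $|\Prod{u}| \leq k$.

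The key is then to bound each right factor using $(k,l)$-freeness. For any $v$ appearing in $E$ we have $|\Prod{v}| > k$ and $\Prod{v} \pr \ext{v} \subseteq f$, so $(k,l)$-freeness forces $|\ext{v}| \leq l$. When $v$ is a sum gate, $\Ext{v}{u} = \ext{v}$ and the bound $|B| \leq l \leq l^2$ is immediate. When $v = u \pr w$ is a product gate, $\Ext{v}{u} = \Prod{w} \pr \ext{v}$ has at most $|\Prod{w}| \cdot l$ monomials, which yields $|B| \leq l^2$ whenever $|\Prod{w}| \leq l$. In the remaining case $|\Prod{w}| > l \geq k$, applying $(k,l)$-freeness to the factorization $\Prod{w} \pr (\Prod{u} \pr \ext{v}) \subseteq f$ gives the sharper bound $|\Prod{u} \pr \ext{v}| \leq l$, which I expect to be the main lever for re-factoring or subdividing the offending rectangles while preserving the $2s$ bound.

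The main obstacle will be exactly this ``bad'' case of a product gate $v = u \pr w$ with $|\Prod{w}| > l$, where $|\Ext{v}{u}|$ can genuinely exceed $l^2$ and so the single rectangle produced by the cut must be replaced by several. A natural attempt is to partition $\Prod{w}$ into blocks of size at most $l$ and produce one sub-rectangle per block, each with left factor $\Prod{u}$ of size $\leq k$ and right factor of size at most $l \cdot |\ext{v}| \leq l^2$; the delicate step is to control the total number of sub-rectangles by charging each extra block to a distinct transition edge of $E$ that lies inside the sub-circuit producing $\Prod{w}$. Making this charging argument airtight---paralleling the inductive counting of Gashkov and Sergeev in the $k$-free case---is the technical heart of the proof. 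Once the bound $t \leq 2s$ with $|A_i| \leq k$, $|B_i| \leq l^2$ is established, the numerical consequence $\R[f] \geq |f|/(2kl^2)$ follows since each rectangle contains at most $|A_i|\cdot|B_i| \leq kl^2$ monomials.
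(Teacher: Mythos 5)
Your framework is the right one --- the paper's proof also rests on Lemma~\ref{lem:cuts} applied to an edge-cut consisting of the edges $(u,v)$ at which the monomial count first exceeds $k$ --- but the ``bad case'' you flag at the end is a genuine gap, not a technicality, and your proposal does not close it. Concretely, take $f=(x_1\su\cdots\su x_k)\pr(y_1\su\cdots\su y_N)$ with $N\gg l^2$; this polynomial is $(k,l)$-free and has a circuit of size $O(k+N)$ whose output gate is $v=u\pr w$ with $|\Prod{u}|=k$ and $|\Prod{w}|=N$. The edge $(u,v)$ lies in your cut $E$ (it is the unique $L\to L^{c}$ transition on every path through $u$), and its rectangle is $\Prod{u}\pr\Ext{v}{u}=\Prod{u}\pr\Prod{w}$ with right factor of size $N>l^2$. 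The auxiliary bound you extract from $(k,l)$-freeness in this case, $|\Prod{u}\pr\ext{v}|\leq k$, controls the wrong factor: it says nothing about $|\Prod{w}\pr\ext{v}|$, which is the quantity you need to be at most $l^2$. Your sketched repair --- splitting $\Prod{w}$ into blocks of size $l$ and charging blocks to cut edges inside the subcircuit of $w$ --- is the part that would need a real inductive argument, and you have not supplied it; as stated the claimed bound of $2s$ rectangles is unproven.

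The paper resolves exactly this difficulty with a preprocessing step that you are missing. First observe that $(k,l)$-freeness of $F=f$ already forces \emph{every} product gate $v=u\pr w$ of a minimal circuit to have one input, say $w$, with $|\Prod{w}|\leq l$: otherwise $\Prod{u}\pr\bigl(\Prod{w}\pr\ext{v}\bigr)\subseteq f$ would be a forbidden product, since $|\Prod{w}\pr\ext{v}|\geq|\Prod{w}|>l$ in a non-multiplicatively-idempotent semiring. One then deletes the edge $(w,v)$ and turns $v$ into a unary gate of scalar multiplication by the fixed polynomial $\Prod{w}$. In the rewired circuit every edge satisfies $|\Ext{v}{u}|\leq l\cdot|\ext{v}|$, so on the cut edges (where $|\Prod{v}|>k$ forces $|\ext{v}|\leq l$, as you correctly argue) one gets $|\Ext{v}{u}|\leq l^2$ automatically --- the bad case never arises, because the large input $\Prod{w}$ of a product gate is never the one sitting inside $\Ext{v}{u}$. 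In your $(x\pr y)$-example the rewiring replaces the top gate by $w\pr A$ with $A=\Prod{u}$ fixed, and the cut then consists of edges inside the $y$-subcircuit, yielding rectangles of the correct shape. So the missing idea is precisely this use of $(k,l)$-freeness \emph{before} cutting, rather than only on the rectangles produced by the cut.
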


\begin{proof}
  Our argument is a mix of ideas of \Cite{Gashkov and Sergeev}{GS},
  and of \Cite{Pippenger}{pippenger80}.  Take a minimal circuit $\F$
  producing $f$; hence, $F=f$ is $(k,l)$-free. This implies that every
  product gate $u=v\pr w$ in $\F$ must have an input, say $w$, at
  which a ``small'' set $A=|\Prod{w}|$ of only $|A|\leq l$ monomials
  is produced.  We thus can remove the edge $(w,u)$ and replace $u$ by
  a unary (fanin-$1$) gate $u=v\pr A$ of scalar multiplication by this
  fixed (small) polynomial $A$.  If both inputs produce small
  polynomials, then we eliminate only one of them.  What we achieve by
  doing this is that input gates remain the same as in the original
  circuit (variables $x_1,\ldots,x_n$ and constants $\nulis,\vienas$),
  each product gate has fanin $1$, and for every edge $(u,v)$ in the
  resulting circuit $\F'$, we have an upper bound
  \begin{equation}\label{eq:GS1}
    |\Ext{v}{u}|\leq l\cdot |\ext{v}|\,.
  \end{equation}
  Say that an edge $(u,v)$ in $\F'$ is \emph{legal} if both
  $|\Prod{u}|\leq k$ and $|\Ext{v}{u}|\leq l^2$ hold. Let $E$ be the
  set of all legal edges; hence, $\size{\F}\geq |E|/2$.  By
  Lemma~\ref{lem:cuts}, it remains to show that $E$ forms an edge-cut
  of $\F'$.

  To show this, take an arbitrary input-output path $P$ in $\F'$, and
  let $e=(u,v)$ be the last gate of $P$ with $|\Prod{u}|\leq k$.  If
  $v$ is the output gate, then $\ext{v}$ is a trivial polynomial
  $\vienas$, and hence, $|\Ext{v}{u}|\leq l$ by \eqref{eq:GS1},
  meaning that $(u,v)$ is a legal edge. Suppose now that $v$ is not
  the output gate.  Then $|\Prod{u}|\leq k$ but $|\Prod{v}|> k$. Held
  also $|\Ext{v}{u}|> l^2$, then \eqref{eq:GS1} would imply that
  $|\ext{v}|\geq |\Ext{v}{u}|/l > l$. Together with $|\Prod{v}|> k$
  and $\Prod{v}\pr \ext{v}\subseteq F$, this would contradict the
  $(k,l)$-freeness of $F$. Thus, $|\Prod{u}|\leq k$ and
  $|\Ext{v}{u}|\leq l^2$, meaning that $(u,v)$ is a legal edge.
\end{proof}

Together with Theorem~\ref{thm:homog}, Theorem~\ref{thm:GS} yields the
following lower bound over tropical semirings for polynomials, whose
only lower or higher envelopes are required to be $(k,l)$-free.

\begin{cor}
  Let $f$ and $g$ be polynomials such that $\lenv{f}$ and $\henv{g}$
  are $(k,l)$-free for some $1\leq k\leq l$. Then
  \[
  \Min(f)\geq \frac{|\lenv{f}|}{2kl^2}\ \ \mbox{ and }\ \ \Max(g)\geq
  \frac{|\henv{g}|}{2kl^2}\,.
  \]
\end{cor}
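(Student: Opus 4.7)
The plan is an immediate chaining of two previously established results: Theorem~\ref{thm:homog}, which bounds tropical complexity below by the monotone arithmetic complexity of the relevant envelope, and Theorem~\ref{thm:GS}, which bounds monotone arithmetic complexity below for every $(k,l)$-free polynomial. The corollary is precisely what one obtains by composing them.

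First I would prove the $\Min$-bound. By Theorem~\ref{thm:homog}, $\Min(f) \geq \R[\lenv{f}]$. Since $\lenv{f}$ is $(k,l)$-free by hypothesis, Theorem~\ref{thm:GS} applied directly to the polynomial $\lenv{f}$ yields $\R[\lenv{f}] \geq |\lenv{f}|/(2kl^2)$. Chaining the two inequalities gives the claimed $\Min(f) \geq |\lenv{f}|/(2kl^2)$.

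The $\Max$-bound is entirely symmetric: Theorem~\ref{thm:homog} gives $\Max(g) \geq \R[\henv{g}]$, and then Theorem~\ref{thm:GS} applied to the $(k,l)$-free polynomial $\henv{g}$ gives $\R[\henv{g}] \geq |\henv{g}|/(2kl^2)$.

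There is no genuine obstacle here; the content is already packaged into the earlier theorems, and the corollary is simply a transfer of the Gashkov--Sergeev-type bound from the arithmetic setting to tropical semirings via the envelope reduction. The one point that warrants a moment's care is the multilinearity hypothesis built into Theorem~\ref{thm:homog}: either one restricts the corollary to multilinear $f$ and $g$ (so that Lemma~\ref{lem:arithm1} and Lemma~\ref{lem:arithm2} apply to $\lenv{f}$ and $\henv{g}$ as required in the proof of Theorem~\ref{thm:homog}), or one verifies that the ``envelope'' halves of that proof extend beyond the multilinear case. Either way, no new lower-bound machinery needs to be introduced.
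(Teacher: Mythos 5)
Your proposal is correct and is exactly the argument the paper intends: the corollary is stated immediately after Theorem~\ref{thm:GS} as the composition of that theorem (applied to the $(k,l)$-free envelopes) with the envelope inequalities $\Min(f)\geq \R[\lenv{f}]$ and $\Max(g)\geq \R[\henv{g}]$ of Theorem~\ref{thm:homog}. Your remark about the multilinearity hypothesis in Theorem~\ref{thm:homog} is a fair point of care that the paper itself glosses over, but it does not change the argument.
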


\begin{remark}
  Using a deeper analysis of circuit structure, \Cite{Gashkov and
    Sergeev}{gashkov,GS} were able to even estimate the numbers of sum
  and product gates: every monotone arithmetic circuit computing a
  $(k,l)$-free polynomial $f$ of $n$ variables must have at least
  $|f|/K-1$ sum gates, and at least $2\sqrt{|f|/K}-n-2$ product gates,
  where $K=\max\{k^3,l^2\}$ .
\end{remark}

\begin{remark}
  Every boolean $n\times n$ matrix $A=(a_{ij})$ defines a a set
  $Ay=(f_1,\ldots,f_n)$ of $n$ linear polynomials $f_i(y)=\sum_j
  a_{ij}y_j$, as well as a single-output bilinear polynomial $f_A(x,y)
  =\sum_{i} x_i f_i(y) =\sum_{i,j\colon a_{ij}=1} x_iy_j$ on $2n$
  variables. Call a boolean matrix $A$ $(k,l)$-\emph{free}, if it does
  not contain any $(k+1,l+1)$ all-$1$ submatrix. It is clear that the
  polynomial $f_A$ is $(k,l)$-free if and only if the matrix $A$ is
  $(k,l)$-free.

  Results of \Cite{Nechiporuk}{Necipo2} (re-discovered later by
  \Cite{Mehlhorn}{mehlhorn79} and \Cite{Pip\-peng\-er}{pippenger80})
  imply that, if $A$ is $(k,k)$-free, then $\B(Ax)\geq |A|/4k^3$,
  where $|A|$ is the number of $1$-entries in $A$. This, however, does
  not immediately yield a similar lower bound on $\B(f_A)$ for the
  \emph{single-output} version $f_A$ and, in fact, no such bound is
  known so far in the boolean semiring. (A lower bound $\B(f_A)\geq
  |A|$ for $(1,1)$-free matrices is only known when restricted to
  circuits with gates of fanout $1$; see \cite[Theorem
  7.2]{myBFC-book}.) On the other hand, Theorem~\ref{thm:GS} gives
  such a bound at least for tropical and multilinear boolean circuits:
  if $A$ is $(k,k)$-free, then
  \[
  \Min(f_A)=\Max(f_A)=\Mult{\B}{f_A}=\R[f_A]\geq |A|/2k^3\,,
  \]
  where the equalities follow from Theorem~\ref{thm:homog}, because
  the polynomial $f_A$ is homogeneous.
\end{remark}

\section{Rectangle Bound}
\label{sec:rect}

An $m$-\emph{balanced product-polynomial} is a product of two
polynomials, one of which has minimum degree $d$ satisfying $m/3<
d\leq 2m/3$, and is itself a product of two nonempty polynomials.

\begin{lem}[Sum-of-Products]\label{lem:sum-of-products}
  If a polynomial $f$ of minimum degree at least $m\geq 3$ can be
  produced by a circuit with $s$ product gates, then $f$ can be
  written as a sum of at most $s$ $m$-balanced product-polynomials.
\end{lem}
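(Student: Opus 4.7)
Let $e(u)$ denote the minimum degree of the polynomial $\Prod{u}$ produced at a gate $u$; one has $e(v_1\pr v_2)=e(v_1)+e(v_2)$ for product gates, $e(v_1\su v_2)=\min(e(v_1),e(v_2))$ for sum gates, and $e(x_i)=1$ at input variables. The plan is to take
\[
U:=\{v\text{ a product gate of }\F:e(v)\in(m/3,\,2m/3]\}
\]
and to prove $f=\sum_{v\in U}\Prod{v}\pr\ext{v}$ as an equality of monomial sets. Lemma~\ref{lem:contain} gives $\Prod{v}\pr\ext{v}\subseteq F$ for every product gate $v$, so the inclusion $\supseteq$ is automatic. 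The other inclusion reduces to a key ``parse-graph cut'' claim: every parse graph $\F_p$ of every monomial $p\in F$ contains at least one gate of $U$. Granted this, each summand $\Prod{v}\pr\ext{v}=\Prod{v_1}\pr\Prod{v_2}\pr\ext{v}$ is $m$-balanced, since its first factor $\Prod{v}=\Prod{v_1}\pr\Prod{v_2}$ is literally a product of the two nonempty polynomials $\Prod{v_1}$ and $\Prod{v_2}$ and has minimum degree $e(v)\in(m/3,2m/3]$ by the choice of $U$. As $|U|$ is at most the number of product gates, this yields the desired decomposition into at most $s$ $m$-balanced product-polynomials.

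The heart of the proof is the parse-graph cut property, which I would establish by contradiction. Suppose some parse graph $\F_p$ contains no product gate with $e$-value in $(m/3,2m/3]$, so that every product gate of $\F_p$ satisfies $e\le m/3$ or $e>2m/3$. Starting at the output of $\F_p$ (where $e\ge m>2m/3$ since $\Prod{\text{output}}$ and $f$ have the same monomials), walk downward through $\F_p$: at each sum gate follow the unique in-parse-graph child, and at each product gate follow the child whose produced polynomial has the larger $e$-value (ties broken arbitrarily). I claim by induction on the length of the walk that the current gate always satisfies $e>m/3$. The base case at the root is immediate. For the inductive step, at a sum gate the chosen child $u_i$ satisfies $e(u_i)\ge\min(e(u_1),e(u_2))=e(\text{sum})$, so $e$ weakly increases. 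At a product gate $v$, the inductive hypothesis $e(v)>m/3$ together with the contrary assumption $e(v)\notin(m/3,2m/3]$ forces $e(v)>2m/3$, and then the larger-$e$ child satisfies $e\ge e(v)/2>m/3$. Since $\F_p$ is finite the walk must terminate at an input, which has $e\le 1\le m/3$ (using $m\ge 3$), contradicting the invariant $e>m/3$ just proved.

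Given the parse-graph cut property, the equality $f=\sum_{v\in U}\Prod{v}\pr\ext{v}$ holds as an equality of monomial sets, and each summand is $m$-balanced with the factorization and degree bound indicated above, yielding the required representation of $f$ as a sum of at most $s$ $m$-balanced product-polynomials. The main obstacle is the inductive-walk argument, whose two key ingredients are: the ``larger-$e$ child'' rule at product gates, which prevents $e$ from halving below $m/3$ whenever the current gate has $e>2m/3$; and the fact that descending through a sum gate can never cause $e$ to drop, since a sum gate's $e$-value equals the minimum of its children's $e$-values.
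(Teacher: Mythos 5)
Your proof is correct, and it reorganizes the argument in a way that differs from the paper's. The engine is the same in both: a descent that follows the larger-minimum-degree child at a product gate (where the degree can at most halve) and the relevant child at a sum gate (where it cannot drop) must pass through a product gate whose minimum degree lies in $(m/3,2m/3]$ — this is precisely the paper's Claim about the existence of a ``moderate'' product gate. Where you diverge is in assembling the decomposition. The paper finds a \emph{single} such gate $\gate$, writes $F=F_{\gate}\su F_{\gate=\nulis}$ via Lemma~\ref{lem:contain}, notes that $F_{\gate=\nulis}$ is a sub-polynomial of $F$ (hence still of minimum degree $\geq m$) produced by a circuit with one fewer product gate, and recurses; the bound of $s$ summands comes out of this peeling induction. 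You instead take \emph{all} product gates in the degree window at once, prove that they hit every parse graph, and read off the whole decomposition in one shot from the implication $u\in\F_p\Rightarrow p\in\Prod{u}\pr\ext{u}$ — in effect you show that your set $U$ cuts every parse graph and then reuse the argument of Lemma~\ref{lem:cuts}. Your version avoids the recursion and the need to re-examine the residual polynomial at each step, at the cost of invoking the parse-graph formalism; both give the same bound and the same structure of the summands. Two small points you should make explicit: a parse graph of an actual monomial contains no $\nulis$-leaf, so every leaf of the walk indeed has minimum degree at most $1\leq m/3$; and any $v\in U$ with $\ext{v}$ empty (a gate not feeding the output) contributes nothing and should simply be discarded, which only decreases the number of summands.
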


\begin{proof}
  Let $d$ be the minimum degree of $f$, and $\F$ be a circuit with $s$
  product gates producing $f$.  Hence, $F=f$ and $d\geq m$.  By the
  degree $\dg{\gate}$ of a gate gate $\gate\in\F$ we will mean the
  minimum degree of the polynomial produced at $\gate$. In particular,
  the degree of the output gate is~$d$.

  \begin{clm}\label{clm:moderate}
    For every $\epsilon\in(1/d,1)$, there exists a product gate
    $\gate$ with $\dg{\gate}\in (\epsilon d/2, \epsilon d]$.
  \end{clm}

\begin{proof}
  Start at the output gate of $\F$, and traverse the circuit (in the
  reverse order of edges) by always choosing the input of larger
  degree until a gate $v=u\ast v$ of degree $\dg{v}> \epsilon d$ is
  found such that both $\dg{u}$ and $\dg{w}$ are $\leq \epsilon d$.
  Assume w.l.o.g.  that $\dg{u}\geq \dg{w}$.  Since $\dg{v}\leq \dg{u}
  + \dg{w}\leq 2 \dg{u}$, the gate $u$ has the desired degree
  $\epsilon d/2 < \dg{u}\leq \epsilon d$.  If the gate $u$ is a sum
  gate, then at least one of its inputs must have the same degree
  $\dg{u}$. So, we can traverse the circuit further until a product
  gate of degree $\dg{u}$ is found.
\end{proof}

Now, we apply Claim~\ref{clm:moderate} with $\epsilon:=2m/3d$ to find
a product gate $\gate$ of degree $m/3=\epsilon d/2\leq
\dg{\gate}\leq\epsilon d=2m/3$.  By Lemma~\ref{lem:contain}, we can
write $F$ as $F=F_{\gate} \su F_{\gate=\nulis}$ where $F_{\gate}=A\pr
B$ is a product of two polynomials such that the minimum degree of $A$
lies between $m/3$ and $2m/3$, and $A$ itself is a product of two
nonempty polynomials (since $\gate$ is a product gate); hence,
$F_{\gate}$ is an $m$-balanced product-polynomial.  The polynomial
$F_{\gate=\nulis}$ is obtained from $F$ by removing some monomials. If
$F_{\gate=\nulis}$ is empty, then we are done.  Otherwise, the
polynomial $F_{\gate=\nulis}$ still has minimum degree at least $m$,
and can be produced by a circuit with one product gate fewer.  So, we
can repeat the same argument for it, until the empty polynomial is
obtained.
\end{proof}

\begin{remark}
  Lemma~\ref{lem:sum-of-products} remains true if, instead of the
  minimum degree measure $\deg{f}$ of polynomials, one takes the
  minimum length $\length{f}$ of a monomial of $f$, where the
  \emph{length} of a monomial $p$ is defined as the number $|X_p|$ of
  distinct variables occurring in $p$. Hence, we always have that
  $\deg{f}\geq \length{f}$, and $\deg{f}=\length{f}$ holds if $f$ is
  multilinear.  The same argument works because
  $\length{F_{\gate=\nulis}}\geq \length{F}$, as long as the
  polynomial $F_{\gate=\nulis}$ is not empty.
\end{remark}
To upper-bound the maximal possible number $|A\pr B|$ of monomials in
a product-polynomial $A\pr B\subseteq f$, the following measure of
\emph{factor-density} naturally arises: for an integer $r\geq 0$, let
$\ddeg{f}{r}$ be the maximum number of monomials in $f$ containing a
fixed monomial of degree~$r$ as a common factor.  This measure tells
us how much the monomials of $f$ are ``stretched'': the faster
$\ddeg{f}{r}$ decreases with increasing $r$, the more stretched $f$
is.  Note that, if $d$ is the maximum degree of $f$, then
\[
1=\ddeg{f}{d}\leq \ddeg{f}{d-1}\leq \ldots\leq \ddeg{f}{1} \leq
\ddeg{f}{0}=|f|\,.
\]

The factor-density measure allows to upper-bound the number of
monomials in pro\-duct-po\-ly\-no\-mials over any semiring which is
not multiplicatively-idempotent (where $a^2=a$ holds only for
$a=\vienas$).  Such are, in particular, the arithmetic semiring as
well as all four tropical semirings. The only property of such
semirings we will use is that, if $p$ is a monomial and $A$ is a
polynomial, then $|A|\leq |\{p\}\pr A|$ holds.  Note that this needs
not to hold in semirings which \emph{are} multiplicatively-idempotent:
the polynomial $A=\{x,y\}$ has two monomials, but $\{xy\}\pr A=\{x^2y,
xy^2\}\ =\{xy\}$ has only one monomial.

\begin{obs}\label{obs:product}
  Let $A$ and $B$ be polynomials over a not
  multiplicatively-idempotent semiring of maximum degrees $a$ and $b$.
  If $A\pr B\subseteq f$, then $|A\pr B|\leq \ddeg{f}{a}\cdot
  \ddeg{f}{b}$.
\end{obs}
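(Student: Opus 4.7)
The plan is to exploit directly the hypothesis that the ambient semiring is not multiplicatively-idempotent, which (as the paragraph immediately preceding the observation notes) guarantees the key fact: for any monomial $p$ and any polynomial $C$, one has $|\{p\}\pr C|=|C|$. Indeed, multiplying a fixed monomial onto each monomial $q\in C$ simply adds the exponent vector of $p$ to that of $q$, which is an injective operation on exponent vectors, provided the semiring does not silently identify $x^2$ with $x$.

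First, I would pick a monomial $p^{*}\in A$ of degree exactly $a$ (which exists since $a$ is the maximum degree of $A$). The product $\{p^{*}\}\pr B$ is a sub-polynomial of $A\pr B\subseteq f$, so every one of its monomials lies in $f$ and contains the fixed monomial $p^{*}$ of degree $a$ as a factor. Hence
\[
|B|\;=\;|\{p^{*}\}\pr B|\;\leq\;\ddeg{f}{a}\,,
\]
where the equality uses the not-multiplicatively-idempotent hypothesis, and the inequality uses the very definition of $\ddeg{f}{a}$. A symmetric argument, picking $q^{*}\in B$ of degree $b$ and considering $A\pr\{q^{*}\}\subseteq f$, yields $|A|\leq \ddeg{f}{b}$.

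Finally, the trivial bound $|A\pr B|\leq |A|\cdot |B|$ combines these to give
\[
|A\pr B|\;\leq\;|A|\cdot|B|\;\leq\;\ddeg{f}{b}\cdot\ddeg{f}{a}\,,
\]
which is the claimed inequality. There is no real obstacle here; the only subtle point is the injectivity of multiplication by $p^{*}$, which is precisely where the hypothesis on the semiring enters. This is exactly the step that fails in the multiplicatively-idempotent example $\{xy\}\pr\{x,y\}=\{xy\}$ mentioned just before the observation, which is why the assumption on the semiring cannot be dropped.
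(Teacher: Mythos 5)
Your proposal is correct and is essentially the same argument as the paper's: fix a maximum-degree monomial in each factor, use $|B|\leq |\{p^{*}\}\pr B|\leq \ddeg{f}{a}$ and symmetrically $|A|\leq \ddeg{f}{b}$, then multiply. The only cosmetic difference is that you assert the equality $|\{p^{*}\}\pr B|=|B|$ while the paper only needs (and states) the inequality $|B|\leq |\{p^{*}\}\pr B|$; both rest on the same non-idempotency hypothesis.
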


\begin{proof}
  Fix a monomial $p\in A$ of degree $|p|=a$, and a monomial $q\in B$
  of degree $|q|=b$.  Since $\{p\}\pr B\subseteq f$, we have that
  $|B|\leq |\{p\}\pr B|\leq \ddeg{f}{|p|}= \ddeg{f}{a}$.  Similarly,
  since $A\pr\{q\}\subseteq f$, we have that $|A|\leq |A\pr \{q\}|\leq
  \ddeg{f}{|q|} = \ddeg{f}{b}$.
\end{proof}

\begin{lem}[Rectangle Bound]\label{lem:simple}
  For every polynomial $f$ of minimum degree at least $m\geq 3$, there
  is an integer $m/3< r\leq 2m/3$ such that
  \[
  \R[f]\geq \frac{|f|}{\ddeg{f}{r}\cdot \ddeg{f}{m-r}}\,.
  \]
  Moreover, the lower bound is on the number of product gates.
\end{lem}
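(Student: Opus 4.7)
The plan is to combine Lemma~\ref{lem:sum-of-products} with a min-degree variant of Observation~\ref{obs:product}. First I would fix a minimum-size circuit $\F$ producing $f$, let $s$ be its number of product gates, and invoke Lemma~\ref{lem:sum-of-products} (using that $f$ has minimum degree $\geq m\geq 3$) to obtain a decomposition $f=\sum_{i=1}^{s'}A_i\pr B_i$ with $s'\leq s$, in which each summand is an $m$-balanced product-polynomial. The definition of ``balanced'' allows me to label the factors so that, for every $i$, the minimum degree $d_i$ of $A_i$ satisfies $m/3 < d_i\leq 2m/3$.

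Next I would bound the number of monomials in each piece by $\ddeg{f}{d_i}\cdot\ddeg{f}{m-d_i}$. Observation~\ref{obs:product} uses monomials of \emph{maximum} degree, but the balancing condition from Lemma~\ref{lem:sum-of-products} controls the \emph{minimum} degree, so I need a mild variant. Pick a minimum-degree monomial $p_i\in A_i$; since $\{p_i\}\pr B_i\subseteq A_i\pr B_i\subseteq f$ and the arithmetic semiring is not multiplicatively-idempotent, $|B_i|=|\{p_i\}\pr B_i|\leq \ddeg{f}{d_i}$. Moreover, every $q\in B_i$ satisfies $|p_i|+|q|\geq m$ because $p_iq\in f$ has degree at least $m$, so the minimum degree of $B_i$ is $\geq m-d_i$. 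Picking a minimum-degree $q_i\in B_i$ and arguing symmetrically, $|A_i|\leq \ddeg{f}{|q_i|}\leq \ddeg{f}{m-d_i}$, where the last inequality uses the monotonicity of $\ddeg{f}{\cdot}$.

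Combining these estimates, $|A_i\pr B_i|\leq |A_i|\cdot |B_i|\leq \ddeg{f}{d_i}\cdot \ddeg{f}{m-d_i}$, and summing over $i$ yields
\[
|f|\leq \sum_{i=1}^{s'}|A_i\pr B_i|\leq s\cdot \max_{1\leq i\leq s'}\bigl(\ddeg{f}{d_i}\cdot \ddeg{f}{m-d_i}\bigr).
\]
Setting $r:=d_{i^\ast}$ for an index $i^\ast$ attaining the maximum gives an integer $r\in(m/3,2m/3]$ with $s\geq |f|/(\ddeg{f}{r}\cdot \ddeg{f}{m-r})$. Since $\R[f]\geq s$ and $s$ counts product gates only, this is exactly the claimed bound with the stated ``moreover''. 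The only mildly non-routine point is the min-degree variant of Observation~\ref{obs:product}; it goes through unchanged over a not multiplicatively-idempotent semiring and is precisely what aligns the exponents in the factor-density with the balancing guarantee of Lemma~\ref{lem:sum-of-products}. Everything else is bookkeeping.
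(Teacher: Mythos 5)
Your proposal is correct and follows essentially the same route as the paper: decompose $f$ via Lemma~\ref{lem:sum-of-products} into at most $s$ $m$-balanced products, bound each product's monomial count by $\ddeg{f}{d_i}\cdot\ddeg{f}{m-d_i}$, and pick the summand attaining the maximum. Your explicit min-degree variant of Observation~\ref{obs:product} (using monotonicity of $\ddeg{f}{\cdot}$) is exactly the step the paper's proof applies implicitly, so there is nothing further to add.
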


\begin{proof}
  Let $\F$ be a minimal monotone arithmetic circuit representing $f$,
  and let $s=\size{\F}$. By Lemma~\ref{lem:sum-of-products}, the
  polynomial $F=f$ can be written as a sum of at most $s$ products
  $A\pr B$ of polynomials, where the minimum degree $a=\deg{A}$ of $A$
  satisfies $m/3\leq a\leq 2m/3$; hence, $\deg{B}\geq
  m-a$. Observation~\ref{obs:product} implies that $|A\pr B| \leq
  \ddeg{f}{\deg{A}}\cdot\ddeg{f}{\deg{B}} \leq \ddeg{f}{a}\cdot
  \ddeg{f}{m-a}$.
\end{proof}

The Rectangle Bound allows one to easily obtain strong lower bounds
for some explicit polynomials.

\begin{thm}\label{thm:bounds1}
  If $f\in\{\PERM{n}, \HC{n}, \ST{n}\}$, then $\Min(f)$, $\Max(f)$ and
  $\Mult{\B}{f}$ are $2^{\Omega(n)}$.
\end{thm}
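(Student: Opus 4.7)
The plan is to reduce the three complexity measures to $\R[\cdot]$ via Theorem~\ref{thm:homog} and then apply the Rectangle Bound (Lemma~\ref{lem:simple}).

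Since $\PERM{n}$, $\HC{n}$, and $\ST{n}$ are all multilinear and homogeneous (of degrees $n$, $n$, and $n-1$ respectively), Theorem~\ref{thm:homog} gives $\Min(f)=\Max(f)=\Mult{\B}{f}=\R[f]$, so it suffices to prove $\R[f]=2^{\Omega(n)}$ in each case. For this I would invoke Lemma~\ref{lem:simple}: the task reduces to upper bounding the factor density $\ddeg{f}{r}$ for a well-chosen $r\in(m/3,2m/3]$.

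For $\PERM{n}$, a multilinear factor of degree $r$ of a perfect matching of $K_{n,n}$ must itself be a partial matching of $r$ edges, and it extends to exactly $(n-r)!$ perfect matchings, so $\ddeg{\PERM{n}}{r}\le(n-r)!$; choosing $r=\lfloor n/2\rfloor$ yields $\R[\PERM{n}]\ge n!/((n/2)!)^2=\binom{n}{\lfloor n/2\rfloor}=2^{\Omega(n)}$. For $\HC{n}$, a degree-$r$ factor of a Hamiltonian cycle must be a vertex-disjoint union of simple paths, and if it has $k$ path-components, then arranging the $k$ paths together with the $n-r-k$ leftover isolated vertices cyclically and orienting each path gives at most $2^{k-1}(n-r-1)!$ extensions; hence $\ddeg{\HC{n}}{r}\le 2^{\min(r,n-r)-1}(n-r-1)!$. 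Choosing $r\approx 2n/3$, so that both $\ddeg{\HC{n}}{r}$ and $\ddeg{\HC{n}}{m-r}$ are dominated by $2^{n/3}$ times a factorial, a short binomial estimate gives $\R[\HC{n}]\gtrsim \binom{n-1}{\lfloor n/3\rfloor}/2^{\Theta(n)}=2^{\Omega(n)}$, using $\log_2 3>4/3$.

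The step I expect to be the hardest is $\ST{n}$. Here $|f|=n^{n-2}$ by Cayley, and a degree-$r$ factor is a forest of $r$ edges whose $k=n-r$ components have some sizes $n_1,\ldots,n_k$ summing to $n$; a generalized Cayley formula gives $n^{k-2}\prod_i n_i$ spanning-tree extensions. The worst-case forest (a balanced partition, essentially a matching when $r\approx n/2$) maximizes $\prod_i n_i$ to $(n/k)^k$, and plugging this into Lemma~\ref{lem:simple} only yields a subexponential ratio, so the Rectangle Bound in its vanilla form is not enough. To reach $2^{\Omega(n)}$ I expect to need a sharpening that exploits the fact that the two factor-sets $A_i$ and $B_i$ in a balanced Sum-of-Products decomposition of $\ST{n}$ (Lemma~\ref{lem:sum-of-products}) cannot \emph{simultaneously} realize balanced extremal forests: if one side looks like a matching of size $\approx n/2$, the complementary side must essentially be a single tree whose component-size product is only $n$, so the joint product $\prod_i n_i\cdot\prod_j n_j'$ is forced well below the naive bound and the argument closes.
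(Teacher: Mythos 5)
Your reduction to $\R[f]$ via Theorem~\ref{thm:homog} and your treatment of $\PERM{n}$ coincide with the paper's proof and are correct. But note that Lemma~\ref{lem:simple} does not let you \emph{choose} $r$: it only guarantees that the bound holds for \emph{some} $r$ in the middle third, so you must verify that the ratio is exponential for \emph{every} such $r$. For $\PERM{n}$ this costs nothing, since $\binom{n}{r}\geq\binom{n}{\lceil n/3\rceil}$ throughout the range. For $\HC{n}$ it is fatal to your argument as written: your density estimate $\ddeg{\HC{n}}{r}\leq 2^{\min(r,n-r)-1}(n-r-1)!$ is correct (and more honest than the paper, which merely says the argument is ``almost the same'' as for the permanent), but at $r=\lfloor n/2\rfloor$ a degree-$r$ factor can be a perfect matching, $\min(r,n-r)=n/2$, and the ratio $|f|/(\ddeg{f}{r}\ddeg{f}{n-r})$ collapses to $(n-1)\binom{n-2}{n/2-1}/2^{n-1}=\Theta(\sqrt{n})$. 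So ``choosing $r\approx 2n/3$'' is not available to you; you would need either to pass to the directed version of $\HC{n}$ (cyclic permutations, where the $2^{k}$ orientation factor disappears and $\ddeg{f}{r}\leq(n-r-1)!$ holds for every $r$) or to show that an actual product $A\pr B\subseteq\HC{n}$ cannot have both factors saturate the orientation freedom.

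For $\ST{n}$ your diagnosis is correct, and in fact sharper than the paper's own proof, which relies on the estimate $\ddeg{\ST{n}}{r}\leq(n-r)^{n-r-2}$; that estimate is false. In the paper's parent-pointer encoding, the degree-$r$ factor $x_{2,1}x_{3,1}\cdots x_{r+1,1}$ (a star into the root) extends to exactly $(r+1)\,n^{n-r-2}$ rooted spanning trees (already for $n=5$, $r=2$ this is $15$ against the claimed $3$), whence $|f|/(\ddeg{f}{r}\ddeg{f}{m-r})\leq n/((r+1)(n-r))<1$ for all $r$ in the middle third: the vanilla rectangle bound yields nothing, exactly as you inferred from the undirected generalized Cayley formula. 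However, your proposed repair---that the two sides of a balanced product cannot simultaneously realize extremal forests---is only a heuristic: you never actually bound $|A\pr B|$ for a genuine balanced product sub-polynomial of $\ST{n}$, and that is precisely the hard step, requiring the finer ``every monomial of $A$ must mesh with every monomial of $B$'' analysis in the spirit of Remark~\ref{rem:double-counting} (this is how Jerrum and Snir handle it). Net assessment: your write-up fully establishes the theorem only for $\PERM{n}$; the $\HC{n}$ and $\ST{n}$ cases contain genuine gaps---although, to be fair, your account of where the difficulty lies for $\ST{n}$ is more accurate than the density estimate the paper itself uses there.
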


\begin{proof}
  Since all these three polynomials $f$ are multilinear and
  homogeneous, it is enough (by Theorem~\ref{thm:homog}) to prove the
  corresponding lower bounds on $\R[f]$. We will obtain such bounds by
  applying Lemma~\ref{lem:simple}.

  The permanent polynomial $f=\PERM{n}$ has $|f|=n!$ multilinear
  monomials $x_{1,\pi(1)}x_{2,\pi(2)}\cdots x_{n,\pi(n)}$, one for
  each permutation $\pi:[n]\to[n]$. Since at most $(n-r)!$ of the
  permutations can take $r$ pre-described values, we have that
  $\ddeg{f}{r}\leq (n-r)!$. (In fact, here we even have the equality
  $\ddeg{f}{r}= (n-r)!$.)  Lemma~\ref{lem:simple} gives $\R[f]\geq
  n!/(n-r)!r! = \binom{n}{r}$ for some $n/3<r\leq 2n/3$; so,
  $\R[f]=2^{\Omega(n)}$.

  The argument for $\HC{n}$ is almost the same: the only difference is
  that now the monomials correspond to symmetric, not to all
  permutations.

  The spanning tree polynomial $f=\ST{n}$ is a homogeneous polynomial
  of degree $n-1$ with $|f|=n^{n-2}$ monomials
  $x_{2,\pi(2)}x_{3,\pi(3)}\cdots x_{n,\pi(n)}$ corresponding to the
  functions $\pi:\{2,3,\ldots,n\}\to [n]$ such that $\forall i$
  $\exists k$: $\pi^{(k)}(i)=1$. Each spanning tree gives a function
  with this property by mapping sons to their father.  Now, if we fix
  some $r$ edges, then $r$ values of functions $\pi$ whose spanning
  trees contain these edges are fixed. Thus, $\ddeg{f}{r}\leq
  (n-r)^{n-r-2}$, and Lemma~\ref{lem:simple} gives
  $\R[f]=2^{\Omega(n)}$.
\end{proof}

Using a tighter analysis (in the spirit of
Remark~\ref{rem:double-counting}) and more involved computations,
\Cite{Jerrum and Snir}{jerrum} obtained even \emph{tight} lower bounds
for $\PERM{n}$ and $\HC{n}$.

The three polynomials in Theorem~\ref{thm:bounds1} are homogeneous. To
show that the rectangle bound works also for non-homogeneous
polynomials, consider the $st$-con\-nec\-ti\-vi\-ty polynomial
$\STCONN{n}$.  We know that this polynomial has $\Min$-circuits of
size $O(n^3)$ (Remark~\ref{rem:ST}). But $\Max$-circuits for this
polynomial must be of exponential size.

\begin{thm}\label{thm:bounds2}
  If $f=\STCONN{n+2}$, then $\Max(f)$ and $\Min[f]$ are at least
  $2^{\Omega(n)}$.
\end{thm}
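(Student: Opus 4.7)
The plan is to reduce both lower bounds to a single claim: $\R[\henv{f}]=2^{\Omega(n)}$, where $\henv{f}$ is the higher envelope of $f=\STCONN{n+2}$---the homogeneous multilinear polynomial whose monomials are in bijection with the Hamiltonian $s$-$t$ paths in $K_{n+2}$, i.e., with the $n!$ orderings of the $n$ internal vertices. Since $\STCONN{n+2}$ is multilinear, Theorem~\ref{thm:homog} directly gives $\Max(f)\geq \R[\henv{f}]$. For the production bound, I use that production complexity is semiring-independent, so $\Min[f]=\R[f]$, and Observation~\ref{fact:1a} then gives $\R[f]\geq \R[\henv{f}]$. Thus it suffices to prove $\R[\henv{f}]=2^{\Omega(n)}$.

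To lower bound $\R[\henv{f}]$, I apply the Rectangle Bound (Lemma~\ref{lem:simple}) with $m=n+1$. The combinatorial core is to estimate $\ddeg{\henv{f}}{r}$ for integers $r\in(m/3,2m/3]$. A fixed monomial of degree $r$ that is actually a factor of some Hamiltonian $s$-$t$ path must consist of $r$ edges forming a disjoint union of simple sub-paths (possibly incident to $s$ or $t$). The number of Hamiltonian $s$-$t$ paths extending such a configuration is bounded by the number of ways to linearly arrange the sub-path blocks together with the still-unused internal vertices in sequence between $s$ and $t$, which is at most $(n-r)!$. By symmetry, since $n-(m-r)=r-1$, we similarly get $\ddeg{\henv{f}}{m-r}\leq(r-1)!$.

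Substituting into Lemma~\ref{lem:simple} yields $\R[\henv{f}]\geq n!/((n-r)!(r-1)!)$ for some $r\in(m/3,2m/3]$. The right-hand side is unimodal in $r$, peaking at $r=m/2$ and decreasing toward the boundaries of the interval; even at the boundaries $r\approx m/3$ or $r\approx 2m/3$, it equals $\Theta(\binom{n}{\lfloor n/3\rfloor})=2^{H(1/3)n-o(n)}=2^{\Omega(n)}$ by Stirling (with binary entropy $H(1/3)\approx 0.918$). Hence the exponential bound $\R[\henv{f}]=2^{\Omega(n)}$ holds for every admissible $r$. The delicate point is obtaining the clean estimate $\ddeg{\henv{f}}{r}\leq(n-r)!$ without spurious orientation factors---this is ensured by the canonical traversal direction of a path from $s$ to $t$, exactly analogous to the counting underlying the $\PERM{n}$ and $\HC{n}$ bounds in Theorem~\ref{thm:bounds1}.
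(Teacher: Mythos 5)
Your proof follows the paper's argument essentially verbatim: pass to the higher envelope of $\STCONN{n+2}$ (via Theorem~\ref{thm:homog} for the $\Max$ bound, and via $\Min[f]=\R[f]$ together with Observation~\ref{fact:1a} for the production bound), observe that $\henv{f}$ has $n!$ monomials with factor-density $\ddeg{\henv{f}}{r}\leq(n-r)!$, and apply the Rectangle Bound of Lemma~\ref{lem:simple}. You are in fact slightly more careful than the paper in verifying that the resulting ratio is $2^{\Omega(n)}$ for \emph{every} admissible $r$ in the middle third (rather than just quoting $r=n/3$), but the route and the key estimates are the same.
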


\begin{proof}
  Consider the higher envelope $\henv{f}$ of $f$.  This is a
  homogeneous polynomial of degree $n$ with $|\henv{f}|=n!$ monomials
  corresponding to paths in $K_{n+2}$ from $s=0$ to $t=n+1$ with
  exactly $n$ inner nodes. Since $\ddeg{f}{r}\leq (n-r)!$,
  Lemma~\ref{lem:simple} (with $r=n/3$) gives
  $\R[\henv{f}]=2^{\Omega(n)}$.  By Theorem~\ref{thm:homog}, the same
  lower bound holds for $\Max(f)$ and $\Min[f]$.
\end{proof}

\section{Truly Exponential Lower Bounds}
\label{sec:trully}

Note that the lower bounds above have the forms
$2^{\Omega(\sqrt{n})}$, where $n$ is the number of variables. Truly
exponential lower bounds $\R[f]=\Omega(2^{n/2})$ on the monotone
circuit size of multilinear polynomials of $n$ variables were
announced by \Cite{Kasim-Zade}{oktai1,oktai2}. Somewhat earlier, a
lower bound $\R[f]=2^{\Omega(n)}$ was announced by
\Cite{Kuznetsov}{kuznetsov}.  Then, \Cite{Gashkov}{gashkov} proposed a
general lower bounds argument for monotone arithmetic circuits and
used it to prove an $\R[f]=\Omega(2^{2n/3})$ lower bound.

The construction of the corresponding multilinear polynomials in these
works is algebraic. Say, the monomials of the polynomial $f(x,y)$ of
$2n$ variables constructed in \cite{oktai1,oktai2} have the form
$x_1^{a_1}\cdots x_n^{a_n}y_1^{b_1}\cdots y_n^{b_n}$ where $a\in
\gf{2}^n$ and $b = a^3$ (we view vector $a$ as an element of
$\gf{2^n}$ when rising it to the 3rd power).  That is, monomials
correspond to the points of the cubic parabola $\{(a,a^3)\colon a\in
\gf{2^n}\}$.  The monomials of the polynomial constructed in
\cite{gashkov} are defined using triples $(a,b,c)$ with $a,b,c\in
\gf{2^n}$ satisfying $a^3+b^7+c^{15}=1$.  The constructed polynomials
are $(k,l)$-free for particular \emph{constants} $k$ and $l$, and the
desired lower bounds follow from general lower bounds of
\Cite{Gashkov}{gashkov}, and \Cite{Gashkov and Sergeev}{GS} for
$(k,l)$-free polynomials (see Sect.~\ref{sec:kl-free} for these
bounds).

Without knowing these results, \Cite{Raz and Yehudayoff}{RY} have
recently used discrepancy arguments and exponential sum estimates to
derive a truly exponential lower bound $\R[f]=2^{\Omega(n)}$ for an
explicit multilinear polynomial $f(x_1,\ldots,x_n)$. Roughly, their
construction of $f$ is as follows. Assume that $n$ divided by a
particular constant $k$ is a prime number. View a monomial $p$ as a
$0/1$ vector of its exponents. Split this vector into $k$ blocks of
length $n/k$, view each block as a field element, multiply these
elements, and let $c_p\in\{0,1\}$ be the first bit of this
product. Then include the monomial $p$ in $f$ if and only if $c_p=1$.

In this section we use some ideas from \cite{juk2008} to show that
truly exponential lower bounds can be also proved using graphs with
good expansion properties. Numerically, our bounds (like those in
\cite{RY}) are worse than the bounds in
\cite{oktai1,oktai2,gashkov,GS} (have smaller constants), but the
construction of polynomials is quite simple (modulo the construction
of expander graphs).

Say that a partition $[n]=S\cup T$ is \emph{balanced} if $n/3\leq
|S|\leq 2n/3$.  Define the \emph{matching number} $\match{G}$ of a
graph $G=([n],E)$ as the largest number $m$ such that, for every
balanced partition of nodes of $G$, at least $m$ crossing edges form
an induced matching.  An edge is crossing if it joins a node in one
part of the partition with a node in the other part.  Being an induced
matching means that no two endpoints of any two edges of the matching
are joined by a crossing edge.

Our construction of hard polynomials is based on the following lemma.
Associate with every graph $G=([n],E)$ the multilinear polynomial
$f_G(x_1,\ldots,x_n)$ whose monomials are $\prod_{i\in S}x_i$ over all
subsets $S\subseteq [n]$ such that the induced subgraph $G[S]$ has an
odd number of edges of~$G$.

\begin{lem}\label{lem:main-exp}
  For every non-empty graph $G$ on $n$ nodes, we have
  \[
  \R[f_G]\geq 2^{\match{G}-2}\,.
  \]
\end{lem}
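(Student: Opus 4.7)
Let $\F$ be a minimum-size monotone arithmetic circuit producing $f_G$; by Lemma~\ref{lem:multilin1} this circuit must be multilinear. The plan is first to restrict to the vertices of the matching, decompose the restricted polynomial via the circuit's product gates, and bound the size of each resulting ``rectangle'' using the induced-matching structure. Concretely, pick a balanced partition $[n] = S \cup T$ witnessing an induced matching $(a_k, b_k)_{k=1}^{m}$ of size $m = \match{G}$ with $a_k\in S$, $b_k\in T$. Substituting $x_i = 0$ in $\F$ for every $i \notin V(M) := \{a_k,b_k\}_{k=1}^{m}$ yields a multilinear circuit of size at most $\R[f_G]$ producing the restricted polynomial $h := f_G|_{V(M)}$. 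The induced-matching property ensures that the only edges of $G$ inside $V(M)$ are the matching edges, so the monomials of $h$ are exactly the products $\prod_{v\in U} v$ over subsets $U \subseteq V(M)$ for which $|\{k : a_k,b_k\in U\}|$ is odd; in particular $|h| = 2^{2m-1}$. It therefore suffices to show $\R[h] \geq 2^{m-2}$.

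Applying Lemma~\ref{lem:cuts} with the node-cut consisting of all product gates writes $h = \sum_{u} \Prod{u}\cdot \ext{u}$, covering $h$ by at most $\R[h]$ multilinear rectangles $A\cdot B$ with $V(A)\cap V(B) = \emptyset$. The heart of the proof is the following claim: every such multilinear rectangle $A\cdot B \subseteq h$ satisfies $|A|\cdot|B| \leq 2^{m+1}$. Granted this, $|h| \leq \R[h]\cdot 2^{m+1}$ immediately gives $\R[h] \geq 2^{m-2}$, as required.

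To prove the rectangle bound, observe that the condition $pq \in h$ for every $p \in A$, $q \in B$ reads, over $\gf{2}$, as
\[
|J(X_p)| + |J(X_q)| + \gamma(X_p, X_q) \equiv 1 \pmod{2},
\]
where $J(U) = \{k : a_k, b_k \in U\}$ and $\gamma$ is the bilinear form counting matching indices $k$ with $\{a_k,b_k\}$ split between $X_p$ and $X_q$. Fixing references $p_0 \in A$, $q_0 \in B$ and applying the standard shift trick (XOR the identity at $(p,q), (p,q_0), (p_0,q), (p_0,q_0)$ over $\gf{2}$) yields the orthogonality relation
\[
(\alpha_p + \alpha_{p_0})^{\!\top} M\,(\alpha_q + \alpha_{q_0}) = 0 \quad\text{for all } p\in A,\; q\in B,
\]
where $\alpha_p, \alpha_q$ are the $\gf{2}$-characteristic vectors of $X_p, X_q$ and $M$ encodes the matching restricted to $V(A)\times V(B)$. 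Since the induced matching contributes an identity block to $M$ on the split coordinates, a dimension count then bounds $|A|\cdot|B|$. The main obstacle is handling matching indices $k$ whose endpoints $\{a_k,b_k\}$ lie entirely in $V(A)$ or entirely in $V(B)$: such indices add free coordinates to $A$ (or $B$) that do not appear in the bilinear form and would naively inflate the product. The resolution is to combine the orthogonality above with further $\gf{2}$-linear constraints implied by requiring $pq\in h$ for varying $p$ and $q$, which tie these same-side coordinates together and keep the rectangle size within the target bound $2^{m+1}$.
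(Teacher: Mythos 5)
There is a genuine gap here, and it is exactly at the step you flag as ``the main obstacle.'' Your plan fixes a balanced partition $S\cup T$ and its induced matching \emph{before} looking at the circuit, restricts to $V(M)$, and then decomposes the restricted circuit over the node-cut of all product gates. But the rectangles $\Prod{u}\pr\ext{u}$ obtained this way partition the variables of $V(M)$ however the circuit pleases: nothing forces that partition to split many matching pairs, and nothing forces it to be balanced. Consequently your central claim --- that every multilinear rectangle $A\pr B\subseteq h$ satisfies $|A|\cdot|B|\leq 2^{m+1}$ --- is simply false. Take $h$ to be the parity-of-complete-pairs polynomial on the $2m$ matching variables, let $A=\{y_1z_1\}$ and let $B$ consist of all subsets of the remaining $2m-2$ variables containing an even number of complete pairs; then $A\pr B\subseteq h$ while $|A|\cdot|B|=2^{2m-3}+2^{m-2}$, which dwarfs $2^{m+1}$ once $m\geq 5$. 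Your orthogonality relation only constrains the coordinates of matching pairs that are \emph{split} between $X_A$ and $X_B$ (the form $\gamma$ has rank equal to the number of split pairs), so it yields $|A|\cdot|B|\leq 2^{2m-r}$ where $r$ is that number of split pairs --- and $r$ can be $0$. The promised ``further $\gf{2}$-linear constraints'' tying same-side coordinates together do not exist: the counterexample above satisfies every constraint of the form $pq\in h$. (A secondary, reparable inaccuracy: an induced matching in the paper's sense only forbids \emph{crossing} edges between matching endpoints, so $G[V(M)]$ may contain extra same-side edges; these add linear and constant terms to $\phi$ restricted to $V(M)$ but do not destroy the bilinear part.)

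The fix is the paper's order of quantifiers, and it is why $\match{G}$ is defined as a minimum over \emph{all} balanced partitions. One first decomposes the full (unrestricted) multilinear circuit into at most $s+1$ \emph{balanced} product polynomials $g(Y)\pr h(Z)$ with $|Y|,|Z|\geq n/3$ (Claim~\ref{clm:struct2}, which tracks variable-support sizes down the circuit rather than taking all product gates as a cut); only \emph{then}, for each rectangle separately, does one invoke the definition of $\match{G}$ on that rectangle's own partition $Y\cup Z$ to obtain $m$ matching edges all of which are split between $Y$ and $Z$. Restricting the $n-2m$ outside variables to constants turns $\phi$ into a shifted inner product on the split pairs, and Lindsey's lemma (or your dimension count, which is an adequate substitute here since now $r=m$) bounds each rectangle by $2^{n-m}$ monomials; comparing with $|f_G|\geq 2^{n-2}$ finishes the proof. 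Your proposal has the right two ingredients but assembles them in an order that cannot work.
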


We postpone the proof of this lemma and turn to its application.

The following simple claim gives us a general lower bound on the
matching number $\match{G}$. Say that a graph is $s$-\emph{mixed} if
every two disjoint $s$-element subsets of its nodes are joined by at
least one edge.

\begin{clm}
  If an $n$-node graph $G$ of maximum degree $d$ is $s$-mixed, then
  $\match{G}\geq (\lfloor n/3\rfloor-s)/(2d+1)$.
\end{clm}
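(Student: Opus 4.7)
The plan is a greedy construction. Fix an arbitrary balanced partition $[n]=S\cup T$; since $\lfloor n/3\rfloor\leq |S|,|T|\leq 2n/3$, both parts initially contain at least $\lfloor n/3\rfloor$ nodes. I will maintain shrinking residual sets $S_k\subseteq S$ and $T_k\subseteq T$, starting from $S_0=S$, $T_0=T$. At step $k+1$, as long as $|S_k|\geq s$ and $|T_k|\geq s$, the $s$-mixed hypothesis supplies an edge of $G$ between $S_k$ and $T_k$, which is automatically a crossing edge; pick any such edge $(u_{k+1},v_{k+1})$, add it to the matching $M$, and set $S_{k+1}:=S_k\setminus(N[u_{k+1}]\cup N[v_{k+1}])$ and analogously $T_{k+1}$, where $N[x]$ denotes the closed neighborhood.

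Two things then need to be verified. First, that $M$ is an induced matching of crossing edges: for any two edges $(u_i,v_i),(u_j,v_j)\in M$ with $i<j$, the deletion rule forces $u_j,v_j\notin N(u_i)\cup N(v_i)$, so neither of the candidate cross-edges $\{u_i,v_j\}$ or $\{v_i,u_j\}$ belongs to $G$; and these are the only pairs of endpoints of the two matched edges that sit on opposite sides of the partition, so no crossing edge joins endpoints of distinct matched edges. Disjointness of the matched edges themselves is automatic, since $u_{k+1}$ and $v_{k+1}$ are removed from the residual sets at step $k+1$. Second, that the greedy runs long enough: each step removes at most $|N[u_i]\cup N[v_i]|\leq 2d+1$ nodes in total (using $|N[u_i]|,|N[v_i]|\leq d+1$ together with $u_i,v_i\in N[u_i]\cap N[v_i]$), hence $|S_{k+1}|\geq |S_k|-(2d+1)$ and likewise for $T$. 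After $k$ iterations this gives $|S_k|,|T_k|\geq \lfloor n/3\rfloor - k(2d+1)$, so the greedy continues at least as long as $\lfloor n/3\rfloor - k(2d+1)\geq s$, yielding $|M|\geq (\lfloor n/3\rfloor - s)/(2d+1)$. Since the initial balanced partition was arbitrary, the same bound holds in $\match{G}$.

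There is no real combinatorial obstacle here. The only delicate point is choosing the right objects to delete at each step: removing merely the two endpoints $u_i,v_i$ would not rule out later edges $(u_j,v_j)$ with, say, $u_j\in N(v_i)$, which would destroy the induced property; deleting the full closed neighborhoods kills all such conflicts in a single shot, and produces the clean $2d+1$ charge per greedy step that gives exactly the stated bound.
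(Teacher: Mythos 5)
Your proof is correct and is essentially the paper's own argument: greedily pick a crossing edge while both residual sides still have at least $s$ nodes (which the $s$-mixed property guarantees is possible), delete the edge together with all neighbors of its endpoints at a cost of at most $2d+1$ nodes per step, and observe that the deletions enforce the induced property. The paper states this in three sentences; you have merely filled in the bookkeeping.
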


\begin{proof}
  Fix an arbitrary balanced partition of the nodes of $G$ into two
  parts.  To construct the desired induced matching, formed by
  crossing edges, we repeatedly take a crossing edge and remove it
  together with all its neighbors. At each step we remove at most
  $2d+1$ nodes.  If the graph is $s$-mixed, then the procedure will
  run for $m$ steps as long as $\lfloor n/3\rfloor-(2d+1)m$ is at
  least~$s$.
\end{proof}

Thus, we need graphs of small degree that are still $s$-mixed for
small $s$.  Examples of such graphs are expander graphs.  A
\emph{Ramanujan graph} is a regular graph $G_{n,q}$ of degree $q+1$ on
$n$ nodes such that $\lambda(G)\leq 2\sqrt{q}$, where $\lambda(G)$ is
the second largest (in absolute value) eigenvalue of the adjacency
matrix of $G$.  Explicit constructions of Ramanujan graphs on $n$
nodes for every prime $q\equiv 1\bmod{4}$ and infinitely many values
of $n$ were given by \Cite{Margulis}{Margulis}, \Cite{Lubotzky,
  Phillips and Sarnak}{LPS}; these were later extended to the case
where $q$ is an arbitrary prime power by
\Cite{Morgenstern}{Morgenstern}, and \Cite{Jordan and Livn\'e}{JL97}.

\begin{thm}\label{thm:trully}
  If $f_G(x_1,\ldots,x_n)$ is the multilinear polynomial associated
  with the Ramanujan graph $G=G_{n,64}$, then
  \[
  \R[f_G]\geq 2^{0.001 n}\,.
  \]
\end{thm}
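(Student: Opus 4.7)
The plan is to feed $G=G_{n,64}$ through the machinery that has just been assembled. Lemma~\ref{lem:main-exp} reduces $\R[f_G]$ to the matching number $\match{G}$; the preceding claim converts $\match{G}$ into a bound in terms of the degree and an $s$-mixing parameter; and the Ramanujan property of $G$ together with the standard expander mixing lemma supplies that $s$-mixing parameter. All three pieces are already in place, so the theorem should follow from a short calibration.

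First I would apply Lemma~\ref{lem:main-exp}, reducing the task to showing $\match{G}\geq 0.001\,n+O(1)$. Next, since $G_{n,64}$ is $(q+1)=65$-regular with second eigenvalue at most $2\sqrt{q}=16$, the expander mixing lemma gives, for any two disjoint vertex sets $A,B$ of size $s$,
\[
e(A,B)\ \geq\ \frac{d\,s^{2}}{n}-\lambda\,s,
\]
which is strictly positive as soon as $s>\lambda n/d=16n/65$. So $G$ is $s$-mixed for some integer $s=\lceil 16n/65\rceil+1$, a constant fraction strictly below $n/3$. Substituting $d=65$ and this $s$ into the preceding claim already yields $\match{G}=\Omega(n)$, and hence $\R[f_G]=2^{\Omega(n)}$ by step one.

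The remaining difficulty is numerical: the explicit slope coming out of the claim as stated is of order $17/(195\cdot 131)$, which is a little below the advertised $0.001$. I would close this gap by sharpening the proof of the claim: when a crossing edge $(u,v)$ with $u\in S$, $v\in T$ is appended to the partial matching, it suffices (to preserve the induced-matching property) to delete from $S$ only $N(v)\cap S$ and from $T$ only $N(u)\cap T$, each of size at most $d$. This replaces the denominator $2d+1$ by $d$ in the claim and gives a matching-number slope of $(17/195)\cdot (1/65)$, which is comfortably above $0.001$ for all large $n$. Combined with $\R[f_G]\geq 2^{\match{G}-2}$, this establishes the theorem. Beyond this calibration, the proof uses no new idea: every structural step (spectral bound $\Rightarrow$ mixing, mixing $\Rightarrow$ induced matching, matching $\Rightarrow$ circuit size) is already packaged in the results cited above.
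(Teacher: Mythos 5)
Your proposal follows exactly the paper's route: Lemma~\ref{lem:main-exp} reduces the bound to $\match{G}$, the mixing claim converts $\match{G}$ into $(\lfloor n/3\rfloor-s)/(2d+1)$, and the Expander Mixing Lemma supplies $s$-mixedness of $G_{n,64}$ for $s$ slightly above $16n/65$ (the paper simply takes $s=n/4$). So the structure is identical. Your numerical worry is, however, well founded and worth recording: with the claim as stated, the paper's own choice $s=n/4$, $d=65$ gives only $\match{G}\geq (n/3-n/4)/131=n/1572\approx 0.000636\,n$, so the chain of lemmas as printed delivers $2^{n/1572-2}$ rather than $2^{0.001n}$. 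Your repair is correct: in the greedy construction of the induced matching one only needs to delete, after choosing a crossing edge $\{u,v\}$ with $u\in S$, $v\in T$, the set $N(v)\cap S$ from one side and $N(u)\cap T$ from the other (edges inside a part are not crossing, so they cannot violate the induced-matching condition as defined), whence each part shrinks by at most $d$ per step and the denominator improves from $2d+1$ to $d$. This yields $\match{G}\geq (17/195)\,n/65-O(1)\approx 0.00134\,n$, which comfortably clears the stated constant. In short: same proof as the paper, plus a legitimate and in fact necessary sharpening of the constants.
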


\begin{proof}
  The Expander Mixing Lemma (\cite[Lemma~2.3]{AC88}) implies that, if
  $G$ is a $d$-regular graph on $n$ nodes, and if $s >\lambda(G)\cdot
  n/d$, then $G$ is $s$-mixed.  Now, the graph $G=G_{n,q}$ is
  $d$-regular with $d=q+1$ and has $\lambda(G)\leq 2\sqrt{q}$.  Hence,
  the graph $G$ is $s$-mixed for $s=2n/\sqrt{q}> 2\sqrt{q}n/(q+1)$.

  Our graph $G=G_{n,64}$ is a regular graph of degree $d=65$, and is
  $s$-mixed for $s=2n/\sqrt{64}=n/4$. Lemma~\ref{lem:main-exp} gives
  the desired lower bound.
\end{proof}

It remains to prove Lemma~\ref{lem:main-exp}.

Call polynomial $f(x_1,\ldots,x_n)$ a \emph{product polynomial}, if
$f$ is a product of two polynomials on disjoint sets of variables,
each of size at least~$n/3$, that is, if $f=g(Y)\pr h(Z)$ for some
partition $Y\cup Z=\{x_1,\ldots,x_n\}$ of variables with $|Y|,|Z|\geq
n/3$, and some two polynomials $g$ and $h$ on these variables.  Note
that we do not require that, say, the polynomial $g(Y)$ must depend on
all variables in $Y$: some of them may have zero degrees in~$g$.

\begin{clm}[\cite{RY}]\label{clm:struct2}
  If $\F(x_1,\ldots,x_n)$ is a multilinear circuit of size $s$ with
  $n\geq 3$ input variables, then the polynomial $F$ can be written as
  a sum of at most $s+1$ product polynomials.
\end{clm}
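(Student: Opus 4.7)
The plan is to prove Claim~\ref{clm:struct2} by induction on the circuit size $s$, in the spirit of the Sum-of-Products Lemma (Lemma~\ref{lem:sum-of-products}) but replacing the degree measure with the number of distinct variables occurring in a polynomial. For a gate $\gate$ of $\F$, let $V_\gate\subseteq \{x_1,\ldots,x_n\}$ denote the set of variables appearing in $\Prod{\gate}$. If the output gate already satisfies $|V_{\mathrm{root}}|\leq 2n/3$, then $F$ itself is a product polynomial: choose $Y\supseteq V_{\mathrm{root}}$ with $\lceil n/3\rceil\leq |Y|\leq 2n/3$ and $Z=\{x_1,\ldots,x_n\}\setminus Y$, and write $F=F(Y)\pr \vienas$; this already gives one product polynomial and handles the base case $s=0$. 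Otherwise, I locate an internal gate $u$ with $|V_u|\in (n/3,2n/3]$ by walking from the output towards the inputs, at each step descending into the child with the larger $V$-set, and stopping the first time the current $|V|$ drops to $\leq 2n/3$.

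The walk ends at an internal gate of the required type because, whenever the current gate $v$ has $|V_v|>2n/3$, its larger child $v'$ satisfies $|V_{v'}|\geq |V_v|/2>n/3$: this holds at sum gates, where $V_v=V_{v_1}\cup V_{v_2}$, and, crucially using multilinearity, also at product gates, where $V_v=V_{v_1}\sqcup V_{v_2}$. Hence the first gate $u$ reached with $|V_u|\leq 2n/3$ satisfies $|V_u|>n/3$, which forces $|V_u|\geq 2$ since $n\geq 3$, so $u$ is not an input. Lemma~\ref{lem:contain} then writes $F=F_u\su F_{u=\nulis}$ with $F_u=\Prod{u}\pr \ext{u}$, and multilinearity of $\F$ again ensures that the variables appearing in $\ext{u}$ lie outside $V_u$ (every side-input of an ancestor product gate of $u$ contributes only variables disjoint from $V_u$). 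Taking $Y:=V_u$ and $Z:=\{x_1,\ldots,x_n\}\setminus V_u$ thus gives a partition with $|Y|,|Z|\geq n/3$ exhibiting $F_u$ as a product polynomial $g(Y)\pr h(Z)$ with $g=\Prod{u}$ and $h=\ext{u}$. The circuit $\F_{u=\nulis}$ obtained from $\F$ by replacing $u$ by $\nulis$ is still multilinear, has size $\leq s-1$, and produces $F_{u=\nulis}$; the induction hypothesis decomposes the latter as a sum of at most $s$ product polynomials, giving the required $\leq s+1$ summands for $F$.

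The main obstacle is the walking-down step: one must verify that ``the larger child has $>n/3$ variables'' holds not only at sum gates but also at product gates, and this is exactly where the multilinearity hypothesis of Claim~\ref{clm:struct2} is essential---without it, the disjoint-union identity $V_v=V_{v_1}\sqcup V_{v_2}$ would fail and the larger child might inherit almost the same variable set as $v$, so no strictly balanced gate would be produced. A secondary subtlety is that a product polynomial requires a partition of \emph{all} $n$ variables, not only those used by $F_u$; this is harmless because only the lower bounds $|Y|,|Z|\geq n/3$ have to be met, and variables not used by $F_u$ can be freely placed on either side.
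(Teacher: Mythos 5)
Your proof is correct and follows essentially the same route as the paper: induction on $s$, locating a gate $u$ with between $n/3$ and $2n/3$ variables, splitting $F=F_u\su F_{u=\nulis}$ via Lemma~\ref{lem:contain}, using multilinearity to make the variable sets of $\Prod{u}$ and $\ext{u}$ disjoint, and recursing on the smaller circuit. One small quibble with your commentary: the walking-down step does not actually need multilinearity, since the subadditivity $|V_v|\leq |V_{v_1}|+|V_{v_2}|$ holds at product gates whether or not the children's variable sets are disjoint (the union is always at most the sum); multilinearity is essential only where you (and the paper) also invoke it, namely to guarantee that $\ext{u}$ uses no variable of $\Prod{u}$, so that $F_u$ really is a product of polynomials on disjoint variable sets.
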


\begin{proof}
  Induction on $s$. For a gate $\gate$, let $X_{\gate}$ be the set of
  variables in the corresponding subcircuit of $\F$. Let $v$ be the
  output gate of $\F$. If $v$ is an input gate, then $F$ itself is a
  product polynomial, since $n\geq 3$. So, assume that $v$ is not an
  input gate.  If $|X_v|\leq 2n/3$, then the polynomial $F$ itself is
  a product polynomial, because $F=F\pr \vienas$. So, assume that
  $|X_v|> 2n/3$.  Every gate $u$ in $\F$ entered by gates $u_1$ and
  $u_2$ admits $|X_u|\leq |X_{u_1}|+|X_{u_2}|$.  Thus, there exists a
  gate $\gate$ in $\F$ such that $n/3\leq |X_{\gate}|\leq 2n/3$.  By
  Lemma~\ref{lem:contain}, we can write $F$ as $F=F_{\gate} \su
  F_{\gate=\nulis}$ where $F_{\gate}=g_{\gate}\pr h$ with $n/3\leq
  |X_{\gate}|\leq 2n/3$ and some polynomial $h$. Moreover, since the
  circuit is multilinear, the set $X_h$ of variables in the polynomial
  $h$ must be disjoint from $X_{\gate}$, implying that $|X_h|\geq
  n-|X_{\gate}| \geq n/3$.  Thus, $g_u\pr h$ is a product polynomial.
  Since the circuit $F_{\gate=\nulis}$ has size at most $s-1$, the
  desired decomposition of $F$ follows from the induction hypothesis.
\end{proof}

By the \emph{characteristic function} of a multilinear polynomial
$f(x_1,\ldots,x_n)$ we will mean the (unique) boolean function which
accepts a binary vector $a\in\{0,1\}^n$ if and only if the polynomial
$f$ contains the monomial $x_1^{a_1}x_2^{a_2}\cdots
x_n^{a_n}=\prod_{i\colon a_i=1}x_i$.  (Note that this boolean function
needs not to be monotone.)  In particular, the characteristic function
of our polynomial $f_G$ is the quadratic boolean function
\[
\phi(x) = \sum_{\{i,j\}\in E} x_i x_j\bmod{2}\,.
\]
That is, $\phi(a)=1$ if the subgraph $G[S]$ induced by the set of
nodes $S=\{i\colon a_i=1\}$ has an odd number of edges.  Since
$\phi(x)$ is a non-zero polynomial of degree $2$ over $\gf{2}$, we
have that $|f_G|=|\phi^{-1}(1)|\geq 2^{n-2}$.

\begin{clm}\label{clm:matching}
  For every graph $G$ on $n$ nodes, every product sub-polynomial of
  $f_G$ contains at most $2^{n-\match{G}}$ monomials.
\end{clm}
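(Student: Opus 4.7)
My plan is to translate the hypothesis $g\cdot h\subseteq f_G$ into a bilinear constraint over $\gf{2}$ and then apply elementary orthogonality. Write the product sub-polynomial as $g(Y)\cdot h(Z)$ with $|Y|,|Z|\ge n/3$, let $S\cup T=[n]$ be the induced partition (automatically balanced), and note that since $Y\cap Z=\emptyset$ both $g$ and $h$ must be multilinear, so they correspond to families $G_g\subseteq 2^S$ and $H_h\subseteq 2^T$ with $|g\cdot h|=|G_g|\cdot|H_h|$. By definition of $\match{G}$ applied to the balanced partition $(S,T)$, I fix an induced matching of crossing edges $\{s_1,t_1\},\ldots,\{s_m,t_m\}$ with $m\ge \match{G}$, $s_i\in S$, $t_i\in T$, and write $M_S=\{s_1,\ldots,s_m\}$, $M_T=\{t_1,\ldots,t_m\}$.

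For $(A,B)\in G_g\times H_h$ I split $A=A'\sqcup\alpha$, $B=B'\sqcup\beta$ with $A'\subseteq S\setminus M_S$, $B'\subseteq T\setminus M_T$, viewing $\alpha,\beta\in\gf{2}^m$. The plan is then to decompose the parity $\phi(A\cup B):=|E(G[A\cup B])|\bmod 2$ by classifying each edge according to (a) whether it is inside $S$, inside $T$, or crossing, and (b) whether its endpoints lie in $M_S\cup M_T$. The \emph{induced}-matching hypothesis is invoked at exactly one place: the only crossing edges of $G$ with both endpoints in $M_S\cup M_T$ are the matching edges themselves, so crossing edges with both endpoints among matching vertices contribute precisely $\sum_{i=1}^m\alpha_i\beta_i$. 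All remaining terms depend either only on $(A',\alpha)$ or only on $(B',\beta)$. Hence, fixing $(A',B')$ and setting $N_g(A')=\{\alpha:A'\sqcup\alpha\in G_g\}$, $N_h(B')=\{\beta:B'\sqcup\beta\in H_h\}$, the hypothesis $\phi(A\cup B)=1$ reduces to
\[
\langle\alpha,\beta\rangle + P(\alpha) + R(\beta) \;=\; c \pmod 2 \qquad\text{for every }(\alpha,\beta)\in N_g(A')\times N_h(B'),
\]
where $c=c(A',B')\in\gf{2}$ and $P,R\colon\gf{2}^m\to\gf{2}$ depend only on $(A',B')$.

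The key step, which sidesteps the fact that $P$ and $R$ are actually quadratic (edges inside $M_S$ and inside $M_T$ contribute quadratic parts in $\alpha$ and $\beta$), is a four-point identity: if $N_g(A')$ and $N_h(B')$ are non-empty, pick $\alpha_0\in N_g(A')$, $\beta_0\in N_h(B')$, and sum the above equation at the four points $(\alpha,\beta),(\alpha_0,\beta),(\alpha,\beta_0),(\alpha_0,\beta_0)$ over $\gf{2}$; this cancels $P$, $R$ and $c$, leaving $\langle\alpha-\alpha_0,\beta-\beta_0\rangle=0$ for every $\alpha\in N_g(A')$, $\beta\in N_h(B')$. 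Consequently the $\gf{2}$-spans $U$ of $N_g(A')-\alpha_0$ and $V$ of $N_h(B')-\beta_0$ are mutually orthogonal in $\gf{2}^m$, so $\dim U+\dim V\le m$ and hence $|N_g(A')|\cdot|N_h(B')|\le 2^{\dim U+\dim V}\le 2^m$. Summing over all non-matching projections $(A',B')\in 2^{S\setminus M_S}\times 2^{T\setminus M_T}$ gives
\[
|g\cdot h| \;=\; \sum_{A',B'}|N_g(A')|\cdot|N_h(B')| \;\le\; 2^{(|S|-m)+(|T|-m)}\cdot 2^m \;=\; 2^{n-m} \;\le\; 2^{n-\match{G}}.
\]
The only delicate point I foresee is the bookkeeping in the decomposition of $\phi(A\cup B)$: one has to check carefully that, after the classification of edges, the only term mixing $\alpha$ and $\beta$ is the diagonal $\sum_i\alpha_i\beta_i$ (this is the single use of the induced-matching property), while all remaining terms cleanly partition into pure-$\alpha$ and pure-$\beta$ pieces to be absorbed into $P$ and $R$.
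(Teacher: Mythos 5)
Your proof is correct, and its skeleton matches the paper's: fix an induced matching of $m\ge \match{G}$ crossing edges for the balanced partition induced by the product, condition on the variables outside the matching, and use the induced-matching property to reduce the constraint $\phi=1$ on the rectangle to an inner-product term $\langle\alpha,\beta\rangle$ plus terms that separate into a pure-$\alpha$ and a pure-$\beta$ part; each of the $2^{n-2m}$ fibers then contributes a monochromatic rectangle of the $2^m\times 2^m$ inner-product matrix, and summing gives $2^{n-m}$. Where you genuinely diverge is in how that rectangle is bounded. The paper rewrites the restricted function as $IP_m(y\oplus b,z\oplus a)$ up to an affine shift, views it as a Hadamard matrix, and invokes Lindsey's Lemma; you instead use the four-point cancellation to obtain $\langle\alpha\oplus\alpha_0,\beta\oplus\beta_0\rangle=0$ across the whole rectangle and then the elementary fact that orthogonal subspaces of $\gf{2}^m$ have dimensions summing to at most $m$. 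Your route buys two things: it is self-contained (no discrepancy or spectral input), and--because the four-point sum annihilates $P$ and $R$ whatever their form--it transparently absorbs the quadratic contributions of edges inside $M_S$ and inside $M_T$, which the paper's displayed formula for $\phi_{\alpha}$ silently omits (harmless there too, since such terms only flip row and column signs of the Hadamard matrix, but your treatment makes this explicit). What the paper's route buys in exchange is the stronger quantitative discrepancy statement behind Lindsey's Lemma, which would matter for near-balanced rather than exactly monochromatic rectangles but is not needed for this claim.
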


\begin{proof}
  Let $G\pr H$ be a product polynomial contained in $f_G$. This
  polynomial gives a partition $x=(y,z)$ of the variables into two
  parts, each containing at least $n/3$ variables. Let $g(y)$ and
  $h(z)$ be the characteristic functions of $G$ and $H$, and
  $r(x)=g(y)\land h(z)$.  Then $|G\pr H|=|r^{-1}(1)|$, and it is
  enough to show that $|r^{-1}(1)|\leq 2^{n-\match{G}}$. When doing
  this, we will essentially use the fact that $r\leq \phi$, which
  follows from the fact that all monomials of $G\pr H$ are also
  monomials of~$f_G$.

  By the definition of $m(G)$, some set $M=\{y_1 z_1,\ldots,y_m z_m\}$
  of $m= m(G)$ crossing edges $y_iz_i$ forms an induced matching of
  $G$.  Given an assignment $\alpha$ of constants $0$ and $1$ to the
  $n-2m$ variables outside the matching $M$, define vectors $a,b\in
  \{0,1\}^m$ and a constant $c\in\{0,1\}$ as follows:
  \begin{itemize}
  \item $a_i=1$ iff an odd number of neighbors of $y_i$ get value $1$
    under $\alpha$,
  \item $b_i=1$ iff an odd number of neighbors of $z_i$ get value $1$
    under $\alpha$,
  \item $c=1$ iff the number of edges whose both endpoints get value
    $1$ under $\alpha$ is odd.
  \end{itemize}
  Then the subfunction $\phi_{\alpha}$ of $\phi$ obtained after
  restriction $\alpha$ is
  \begin{align*}
    \phi_{\alpha}(y_1,\ldots,y_m,z_1,\ldots,z_m)&=\sum_{i=1}^m y_i z_i + \sum_{i=1}^m y_ia_i + \sum_{i=1}^m b_iz_i + c \mod{2}\\
    &=IP_m(y\oplus b,z\oplus a)\oplus IP_m(a,b)\oplus c\,,
  \end{align*}
  where $IP_n(y_1,\ldots,y_m,z_1,\ldots,z_m)=\sum_{i=1}^m
  y_iz_i\mod{2}$ is the inner product function (sca\-lar product).
  Since $a,b$ and $c$ are \emph{fixed}, the corresponding $2^m\times
  2^m$ $\pm 1$ matrix $H$ with entries
  $H[y,z]=(-1)^{\phi_{\alpha}(y,z)}$ is a Hadamard matrix (rows are
  orthogonal to each other).  Lindsey's Lemma (see,
  e.g. \cite[p.~479]{myBFC-book}) implies that no monochromatic
  submatrix of $H$ can have more than $2^m$ $1$-entries.

  Now, the obtained subfunction $r_{\alpha}=g_{\alpha}(y_1,\ldots,y_m)
  \land h_{\alpha}(z_1,\ldots,z_m)$ of $r=g(y)\land h(z)$ also
  satisfies $r_{\alpha}(a,b)\leq \phi_{\alpha}(a,b)$ for all
  $a,b\in\{0,1\}^{m}$. Since the set of all pairs $(a,b)$ for which
  $r_{\alpha}(a,b)=1$ forms a \emph{submatrix} of $H$, this implies
  that $r_{\alpha}$ can accept at most $2^m$ such pairs.  Since this
  holds for each of the $2^{n-2m}$ assignments $\alpha$, the desired
  upper bound $|r^{-1}(1)|\leq 2^m\cdot 2^{n-2m}=2^{n-m}$ follows.

  This completes the proof of Claim~\ref{clm:matching}, and hence, the
  proof of Lemma~\ref{lem:main-exp}.
\end{proof}

\section{Depth Lower Bounds}

So far, we were interested in the \emph{size} of circuits. Another
important measure is the circuit \emph{depth}, i.e. the number of
nodes in a longest input-output path.  For a polynomial $f$, let
$\depth{f}$ denote the smallest possible depth of a circuit
producing~$f$.

If a polynomial $f$ can be produced by a circuit of \emph{size} $s$,
what is then the smallest \emph{depth} of a circuit producing~$f$?
\Cite{Hyafil}{hyafil} has shown that then $f$ can be also produced by
a circuit of depth proportional to $(\log d)(\log sd)$, where $d$ is
the maximum degree of~$f$. (This can be easily shown by induction on
the degree using the decomposition given in
Lemma~\ref{lem:sum-of-products}.) However, the size of the resulting
circuit may be as large as $s^{\log d}$.  A better simulation, leaving
the size polynomial in $s$, was found by \Cite{Valiant et al.}{VSBC}.

\begin{thm}[\Cite{Valiant et al.}{VSBC}]
  If a polynomial $f$ of maximum degree $d$ can be produced by a
  circuit of size $s$, then $f$ can be also produced by a circuit of
  size $O(s^3)$ and depth $O(\log s\log d)$.
\end{thm}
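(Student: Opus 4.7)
The plan is the classical Valiant--Skyum--Berkowitz--Rackoff construction, which proceeds in three stages. First I would \emph{homogenize} the given circuit $\F$. Replace every gate $u$ by $d+1$ new gates $u^{[0]},u^{[1]},\ldots,u^{[d]}$, where $u^{[k]}$ is intended to produce the degree-$k$ homogeneous part of $\Prod{u}$; a sum gate $u=v\su w$ becomes $u^{[k]}=v^{[k]}\su w^{[k]}$, and a product gate $u=v\pr w$ becomes $u^{[k]}=\sum_{i+j=k} v^{[i]}\pr w^{[j]}$. This yields a homogeneous circuit $\F_1$ of size $O(sd^2)$, whose output gate $u^{[d]}$ still produces~$f$.

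Next, apply a \emph{balanced decomposition} to every relevant polynomial of $\F_1$. For every gate $v$ of degree $k\geq 3$, use the argument of Claim~\ref{clm:moderate} behind the Sum-of-Products Lemma~\ref{lem:sum-of-products} to write
\[
\Prod{v}=\sum_{i=1}^{O(s)} A_i \pr B_i,
\]
where each $A_i$ is the polynomial $\Prod{w_i}$ produced at some gate $w_i$ in the cone of $v$ with $k/3 < \deg(w_i) \leq 2k/3$, and each $B_i$ is a ``proper context'' polynomial $\Ext{v}{w_i}$ extending the edge-version of Section~\ref{sec:decomp} to arbitrary gate pairs $(w_i,v)$ with $w_i$ in the cone of $v$. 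By the choice of $w_i$, both factors have degree at most $2k/3$. Crucially, each context polynomial $\Ext{v}{w_i}$ is itself the polynomial produced by a subcircuit of $\F_1$ (the one obtained by replacing the cone below $w_i$ by $\vienas$), so it admits the same style of balanced decomposition.

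The third stage is the \emph{depth recursion}. Let $T(k)$ denote the smallest depth of a circuit producing any single polynomial $\Prod{v}$ or $\Ext{v}{w}$ of degree $k$ from our collection. Each such polynomial is a sum of $O(s)$ products whose factors have degree at most $2k/3$, so implementing the sum by a balanced binary tree of $\su$-gates and then one layer of $\pr$-gates yields
\[
T(k)\leq T(2k/3) + O(\log s),
\]
which unwinds to $T(d)=O(\log d \cdot \log s)$, as required. For the size: the collection contains at most $O(s^2)$ distinct polynomials (indexed by pairs of gates of the homogenized circuit, with the $d$-factor absorbed by sharing across degree levels), and each is realized by a sub-circuit of $O(s)$ gates. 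Careful sharing of these sub-circuits gives total size $O(s^3)$.

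The main obstacle is making the simultaneous recursion consistent: one must guarantee that the context polynomials $\Ext{v}{w}$ themselves fall into the same bounded family that the recursion is allowed to use, so that the depth bound applies uniformly and the size does not blow up. This is exactly what a careful parse-graph bookkeeping---precisely defining $\Ext{v}{w}$ as the sum, over paths $\pi$ from $w$ to $v$ in $\F_1$, of the side-input products $\Prod{\pi}$ of Section~\ref{sec:decomp}---is designed to provide. Once the family is shown to be closed under balanced decomposition, the size and depth recursions above go through essentially mechanically.
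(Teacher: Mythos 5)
First, a point of reference: the paper does not prove this theorem at all---it is quoted from Valiant, Skyum, Berkowitz and Rackoff with a citation only (the text around it merely notes that the weaker Hyafil-type depth reduction follows by induction from Lemma~\ref{lem:sum-of-products}). So there is no proof in the paper to compare against; I am judging your sketch against the known argument.

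Your outline is the correct roadmap (homogenize; decompose at gates of intermediate degree; recurse on depth while sharing a fixed polynomial family of quadratic size), and the pieces you do carry out---the $O(sd^2)$ homogenization, the degree bounds $\deg{w_i}\in(k/3,2k/3]$ forcing both $\Prod{w_i}$ and $\Ext{v}{w_i}$ to have degree at most $2k/3$, and the recursion $T(k)\leq T(2k/3)+O(\log s)$---are fine. The genuine gap is the step you yourself flag as ``the main obstacle'' and then defer: closure of the family under balanced decomposition. Two problems. (a) $\Ext{v}{w}$ is not literally ``the polynomial produced by a subcircuit obtained by replacing the cone below $w$ by $\vienas$'': by Lemma~\ref{lem:contain} that substitution produces $\ext{w}\su F_{w=\nulis}$, and one must extract the degree-$(k-\deg{w})$ homogeneous component to isolate $\Ext{v}{w}$; this is repairable but not free. (b) More seriously, even granting that, applying your balanced decomposition to $\Ext{v}{w}$ splits at a gate $t$ lying \emph{between} $w$ and $v$ and yields terms of the shape $\Prod{t_1}\pr\Ext{t_2}{w}\pr\Ext{v}{t}$, i.e.\ it generates \emph{new} context polynomials indexed by pairs of gates. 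To keep both the depth recursion and the $O(|\F_1|^2)$ bound on the family, one must (i) verify that every polynomial so generated is again of the form $\Prod{u}$ or $\Ext{u}{u'}$ for gates of the \emph{original} homogenized circuit, and (ii) choose the splitting threshold relative to the degree $\deg{v}-\deg{w}$ of the context, not relative to $\deg{v}$, so that the context's degree also drops by a constant factor. This two-parameter bookkeeping is the actual content of the VSBR proof; asserting that ``careful parse-graph bookkeeping is designed to provide'' it is naming the theorem's hard part, not proving it.

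A secondary issue is the size accounting. After homogenization the circuit has $O(sd^2)$ gates, so the family of pairs has size $O(s^2d^4)$ and each member costs a sum over a node-cut of up to $O(sd^2)$ terms; the claim that ``the $d$-factor is absorbed by sharing across degree levels'' is not an argument, and the honest output of your construction is size polynomial in $s$ \emph{and} $d$ (and depth $O(\log s\log d+\log^2 d)$ from the $O(\log(sd^2))$-deep sum trees). That matches the cited statement only under the usual implicit assumption $d=O(s)$; you should either prove the sharing claim or state the bound you actually obtain.
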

In particular, if a multilinear polynomial $f$ of $n$ variables can be
\emph{produced} by a circuit $\F$ of polynomial in $n$ size, then
$\depth{f}=O(\log^2 n)$.  By Lemma~\ref{lem:multilin},
$\depth{f}=O(\log^2 n)$ also holds if $f$ is only \emph{computed} by a
$\Max$, $\M$ or $\MAX$ circuit of polynomial size.  This, however, no
more holds for $\B$ and $\Min$ circuits: even though $\pol{F}=\pol{f}$
holds over these semirings, the produced polynomial $F$ may have
maximum degree exponential in~$n$.

We now turn to proving lower bounds on $\depth{f}$.  In the previous
section, we have shown that the factor-density measure $\ddeg{f}{r}$
can be used to lower bound the circuit size.  By simplifying previous
arguments of \Cite{Shamir and Snir}{shamir}, \Cite{Tiwari and
  Tompa}{tiwari} have shown that the measure $\ddeg{f}{r}$ can be also
used to lower bound the circuit depth as well. The idea was
demonstrated in \cite{tiwari} on two applications
(Theorem~\ref{thm:TT1} and \ref{thm:TT2} below). Here we put their
idea in a general frame.

A \emph{subadditive weighting} of a circuit $\F$ is an assignment
$\wweight:\F\to \RR_+$ of non-negative weights to the gates of $\F$
such that the output gate gets weight $\geq 1$, all other gates get
weight $\leq 1$, and and $\weight{v\su w}\leq \weight{v}+\weight{w}$
holds for every sum gate $v\su w$.  Given such a weighting, define the
\emph{decrease} $\factor{u}$ at a product gate $u=v\pr w$ as
\[
\factor{u}=\frac{\weight{v}\cdot \weight{w}}{\weight{u}}\,.
\]
Note that, since $\weight{v}\leq 1$ holds for every non-output gate
$v$, we have
\[
\weight{u}\leq \frac{1}{\factor{u}}\cdot\min\{\weight{v},
\weight{w}\}\,.
\]
That is, when entering $u$ from any of its two inputs, the weight must
decrease by a factor of at least $\factor{u}$. This explains the use
of term ``decrease''.  Let $\factor{r,s}=\min_u \factor{u}$ be the
smallest decrease at a product gate $u$ of degree~$r$, one of whose
inputs has degree~$s$; by the \emph{degree} of a gate we mean the
minimum degree of the polynomial produced at that gate.

\begin{lem}
  Let $\F$ be a circuit, whose produced polynomial has minimum degree
  $d$, and let $m=\log_2 d$.  Then, for every subadditive weighting,
  there is sequence $d=r_0> r_1 > \ldots > r_m =1$ of integers such
  that $r_{i+1}\geq \frac{1}{2} r_i$ for all $i=1,\ldots,m$, and the
  circuit $\F$ has depth at least
  \[
  m+\log_2 \prod_{i=0}^{m-1} \factor{r_i,r_{i+1}}\,.
  \]
\end{lem}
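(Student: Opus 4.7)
My plan is to construct a root-to-input path $\pi$ in $\F$ by a greedy descent, and to read off both the sequence $(r_i)$ and the depth bound from $\pi$. Starting at the output gate $u_0$ (which has $\weight{u_0}\geq 1$ and $\deg{u_0}=d$), at each sum gate I descend to the child of larger weight, and at each product gate $\gate=v\pr w$ to the child of larger minimum degree. Subadditivity gives $\weight{\text{larger-weight child}}\geq \weight{\gate}/2$, so sum steps lose at most a factor of $2$ in weight. For product steps, $\weight{v}\weight{w}=\factor{\gate}\weight{\gate}$ together with $\weight{v},\weight{w}\leq 1$ forces $\weight{\text{either child}}\geq \factor{\gate}\weight{\gate}$, so product steps lose at most a factor of $1/\factor{\gate}$ regardless of which child is chosen.

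Chaining these per-step bounds along $\pi$ from the output ($\weight{u_0}\geq 1$) to the terminal input ($\weight{u_L}\leq 1$), and writing $S$, $P$ for the sum- and product-gates on $\pi$, I obtain the weight-accounting inequality
\[
1 \;\leq\; \frac{\weight{u_0}}{\weight{u_L}} \;\leq\; 2^{|S|}\prod_{\gate\in P}\frac{1}{\factor{\gate}}\,,\qquad\text{whence}\qquad |S|\;\geq\; \log_2 \prod_{\gate\in P}\factor{\gate}.
\]
For the sequence, I label the product gates visited along $\pi$ in order as $\gate_0,\ldots,\gate_{m-1}$ and set $r_i:=\deg{\gate_i}$, with $r_m:=1$. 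The larger-degree rule at product gates forces the chosen child to satisfy $\deg{\text{child}}\in [r_i/2, r_i)$ (strict decrease guaranteed after a standard preprocessing that removes product gates having a constant-degree input). Since $\deg{v\su w}=\min(\deg{v},\deg{w})$ at sum gates, the degree at the next product gate $\gate_{i+1}$ is at least that of the chosen child of $\gate_i$, so $r_{i+1}\geq r_i/2$. This yields the halving sequence $d=r_0>r_1>\cdots>r_m=1$, and $|P|\geq m=\log_2 d$ since halving $m$ times reduces $d$ to $1$. By the definition of $\factor{r,s}$ as a minimum over the relevant product gates, $\factor{\gate_i}\geq \factor{r_i,r_{i+1}}$, so
\[
\depth{\F} \;\geq\; |S|+|P| \;\geq\; m + \log_2 \prod_{i=0}^{m-1}\factor{r_i,r_{i+1}}\,,
\]
as required.

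The main obstacle is the degree bookkeeping at the transition between product gates: the larger-weight child of an intervening sum gate may have \emph{larger} minimum degree than the parent, potentially lifting the degree above $r_i$ and breaking the strict decrease $r_{i+1}<r_i$. The resolution, following the scheme of Tiwari--Tompa, is to redefine $r_{i+1}$ directly as the degree of the chosen child of $\gate_i$ (which lies in $[r_i/2, r_i)$ by construction) and to argue via a micro-step analysis that this is consistent with the degree at the next product gate on $\pi$; where conflict arises, one substitutes a smaller-degree child at a sum gate, verifying via subadditivity that its weight still absorbs at most a bounded loss beyond the factor of $2$. A secondary but minor point is that $\pi$ may carry product gates beyond the $m$ checkpoints; preprocessing plus the larger-degree rule ensure these contribute factors $\factor{u}\geq 1$ and therefore only strengthen the bound.
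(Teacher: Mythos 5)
Your construction is the paper's: descend from the output taking the heavier child at sum gates and the higher-degree child at product gates, then telescope the weights. Your weight accounting (a sum step costs a factor of at most $2$ by subadditivity; a product step costs a factor of at most $1/\factor{\gate}$ because $\weight{v},\weight{w}\leq 1$ gives $\weight{v},\weight{w}\geq \factor{\gate}\cdot\weight{\gate}$) is exactly the paper's inequality $\weight{u}\leq \tfrac{1}{\factor{u}}\min\{\weight{v},\weight{w}\}$ read top-down instead of bottom-up, and the count of at least $m$ product gates from degree halving is also the paper's. So the skeleton matches and that half is fine.

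The problem lies in the two patches you add to fix the degree bookkeeping; both are false as stated. First, your resolution of the ``main obstacle'' --- when an intervening sum gate lifts the minimum degree, ``substitute a smaller-degree child at a sum gate, verifying via subadditivity that its weight still absorbs at most a bounded loss beyond the factor of $2$'' --- does not work: subadditivity only controls the \emph{heavier} child, via $\weight{v\su w}\leq 2\max\{\weight{v},\weight{w}\}$, while the lighter child (which is the one you would be forced onto to keep the degree from rising, since at $u=v\su w$ with $\deg{v}>\deg{w}=\dg{u}$ the heavier child may well be $v$) can have arbitrarily small weight, and then the $2^{|S|}$ accounting collapses. Second, the claim that product gates beyond the $m$ checkpoints ``contribute factors $\factor{u}\geq 1$'' has no basis: nothing in the definition of a subadditive weighting bounds $\factor{u}=\weight{v}\weight{w}/\weight{u}$ below by $1$, and in the concrete weighting of Lemma~\ref{lem:shamir} it need not be. To be fair, the paper's own proof silently asserts that the product-gate degrees along $\pi$ form the required strictly decreasing halving chain starting at $d$ and does not address either point, so you are no worse off than the source; but a real proof must either establish that the degree of the chosen input of $\gate_i$ equals $\dg{\gate_{i+1}}$ (and that $\dg{\gate_0}=d$), or restate the lemma with a chain of length $t\geq m$ so that additional product gates can only increase the bound. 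As written, your two supporting claims should be removed, not asserted.
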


\begin{proof}
  Construct a path $\pi$ from the output gate to an input gates as
  follows: at a sum gate choose the input of greater weight, and at a
  product gate choose an input of greater degree. Since the produced
  polynomial has minimum degree $d$, and since at each product gate we
  chose an input of greater degree, there must be at least $m$ product
  gates along $\pi$.  Let $d=r_1> r_2 > \ldots > r_m > r_{m+1}=1$ be
  the degrees of the product gates (and input node) on path $\pi$.
  Let $k_i=\factor{r_i,r_{i+1}}$ be the decrease of the $i$-th product
  gate.  Note by the construction of $\pi$ that $r_{i+1}\geq
  \frac{1}{2} r_i$.

  Let us now view the path $\pi$ in the reversed order (from input to
  output).  So, we start with some gate of weight $\leq 1$ (an input
  gate).  Since the weighting is subadditive, at each edge entering a
  sum gate the weight can only increase by a factor of at most
  $2$. So, if $s$ is the number of sum gates along~$\pi$, then the
  total increase in weight is by a factor at most $2^s$. But when
  entering the $i$-th product gate, the weight decreases by a factor
  at least $k_i$.  Thus, the total loss in the weight is by a factor
  at least $\prod_{i=0}^{m-1} k_i$. Since the last (output) gate must
  have weight $\geq 1$, this gives
  \[
  2^s\cdot \prod_{i=0}^{m-1} \frac{1}{k_i}\geq 1\,, \ \mbox{ and
    hence, }\ s\geq \log_2 \prod_{i=0}^{m-1} k_i\,.
  \]
  Since $\depth{f}\geq m+s$, we are done.
\end{proof}

We now give a specific weighting, based on the the factor-density
measure $\ddeg{f}{r}$. Recall that $\ddeg{f}{r}$ is the maximum number
of monomials in $f$ containing a fixed monomial of degree~$r$ as a
common factor. For a polynomial $f$ of minimum degree $d$, and an
integer $1\leq s<r\leq d$, define
\[
\Factor{f}{r,s}=\frac{\ddeg{f}{d-r}}{\ddeg{f}{d-s}\cdot
  \ddeg{f}{d-r+s}}\,.
\]
Note that we have already used this measure to lower-bound the size of
circuits: if $f$ is homogeneous of degree $d$, then
Lemma~\ref{lem:simple} yields $\R[f]\geq \Factor{f}{d,s}$ for some
$d/3\leq s\leq 2d/3$.

\begin{lem}\label{lem:shamir}
  Let $f$ be a polynomial of minimum degree $d$, and $m=\log_2
  d$. Then there is a sequence $d=r_0> r_1 > \ldots > r_{m}=1$ of
  integers such that $r_{i+1}\geq \frac{1}{2} r_i$ for all
  $i=1,\ldots,m$, and
  \[
  \depth{f}\geq m+\log_2\prod_{i=0}^{m-1}\Factor{f}{r_i,r_{i+1}}\,.
  \]
\end{lem}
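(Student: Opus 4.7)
The plan is to exhibit a subadditive weighting of $\F$ for which the decrease $\factor{u}$ at every product gate $u$ of degree $r$ with an input of degree $s$ is at least $\Factor{f}{r,s}$; the preceding lemma then yields the claim. Throughout I fix a minimum-depth circuit $\F$ producing $f$ (so $F=f$), write $\deg{u}$ for the minimum degree of $\Prod{u}$, and set
\[
\weight{u} \;:=\; \frac{|\Prod{u}|}{\ddeg{f}{d-\deg{u}}}\,,
\]
with the convention $\ddeg{f}{k}:=|f|$ for $k\leq 0$ (needed only when $\deg{u}>d$). The key structural input is Lemma~\ref{lem:contain}, which gives $\Prod{u}\pr\ext{u}\subseteq F=f$ at every gate $u$.

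I would first verify that $\weight{\cdot}$ is a subadditive weighting. For the output, $\Prod{\mathrm{out}}=f$ and $\deg{\mathrm{out}}=d$, so $\weight{\mathrm{out}}=|f|/|f|=1$. For a non-output gate $u$, pick any $q\in\ext{u}$; since multiplication by a single monomial is injective over a non-multiplicatively-idempotent semiring (the observation exploited in Observation~\ref{obs:product}), the $|\Prod{u}|$ distinct products in $\Prod{u}\pr\{q\}\subseteq f$ all contain $q$ as a factor, so $|\Prod{u}|\leq\ddeg{f}{\deg{q}}$. Since $f$ has minimum degree $d$, $\deg{q}\geq d-\deg{u}$, and monotonicity of $\ddeg{f}{\cdot}$ then gives $|\Prod{u}|\leq\ddeg{f}{d-\deg{u}}$, i.e.\ $\weight{u}\leq 1$. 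At a sum gate $u=v\su w$, $|\Prod{u}|\leq|\Prod{v}|+|\Prod{w}|$ together with $\deg{u}=\min\{\deg{v},\deg{w}\}$ and the same monotonicity deliver $\weight{u}\leq\weight{v}+\weight{w}$ term by term.

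For the decrease, consider a product gate $u=v\pr w$ with $\deg{u}=r$, $\deg{v}=s$, and hence $\deg{w}=r-s$. The identity $\Prod{u}=\Prod{v}\pr\Prod{w}$ yields $|\Prod{u}|\leq|\Prod{v}|\cdot|\Prod{w}|$, so
\[
\factor{u}\;=\;\frac{\weight{v}\,\weight{w}}{\weight{u}}\;=\;\frac{|\Prod{v}|\,|\Prod{w}|}{|\Prod{u}|}\cdot\frac{\ddeg{f}{d-r}}{\ddeg{f}{d-s}\,\ddeg{f}{d-r+s}}\;\geq\;\Factor{f}{r,s}\,.
\]
Minimising over such gates gives $\factor{r,s}\geq\Factor{f}{r,s}$, and the preceding lemma (applied to this weighting) then delivers the claimed depth bound along a sequence $d=r_0>r_1>\cdots>r_m=1$ with $r_{i+1}\geq r_i/2$.

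The main obstacle is the non-output weight bound $\weight{u}\leq 1$. It turns on two subtler ingredients: the non-multiplicative-idempotence of the semiring, which lets one convert $|\Prod{u}|$ into a factor-density bound by multiplying by a single monomial of $\ext{u}$; and the inequality $\deg{q}\geq d-\deg{u}$ for $q\in\ext{u}$ combined with the monotonicity of $\ddeg{f}{\cdot}$, which lifts this bound from $\ddeg{f}{\deg{q}}$ to $\ddeg{f}{d-\deg{u}}$. Everything else -- subadditivity, the product-gate estimate, and the invocation of the preceding lemma -- is routine once this weighting is in place.
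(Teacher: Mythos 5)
Your proposal is, in every detail, the proof given in the paper: the same weighting $\weight{u}=|\Prod{u}|/\ddeg{f}{d-\deg{u}}$, the same use of Lemma~\ref{lem:contain} together with the injectivity of multiplication by a fixed monomial to get $\weight{u}\leq 1$ at non-output gates, the same product-gate computation identifying the decrease with $\Factor{f}{r,s}$, and the same appeal to the preceding lemma. So there is no divergence of approach to report.

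There is, however, one step that you dismiss as ``routine'' and that does not work as you state it---and it is exactly the step at which the paper's own write-up is shaky. At a sum gate $u=v\su w$ you claim subadditivity ``term by term''. The term for the input of smaller minimum degree is fine, since then $\deg{u}=\deg{v}$ and the denominators agree. But for an input of strictly larger minimum degree, say $\deg{w}>\deg{u}$, the term-by-term comparison needs $\ddeg{f}{d-\deg{u}}\geq \ddeg{f}{d-\deg{w}}$, whereas $d-\deg{u}>d-\deg{w}$ and the monotonicity $\ddeg{f}{d}\leq\cdots\leq\ddeg{f}{0}$ give the opposite inequality. (The paper's proof makes the same slip: it replaces $\ddeg{f}{\max\{d-d_v,d-d_w\}}$ by $\max\{\ddeg{f}{d-d_v},\ddeg{f}{d-d_w}\}$, although monotonicity turns the outer $\max$ into a $\min$.) In fact this weighting is \emph{super}additive at sum gates whose inputs have different minimum degrees, and then even the weaker fact actually used in the preceding lemma---that the input of larger weight has weight at least $\weight{u}/2$---is no longer guaranteed. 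The step is genuinely fine when the circuit is homogeneous, since then both inputs of every sum gate have equal minimum degree and the denominators coincide; by Observation~\ref{fact:1a} one may pass to a homogeneous circuit for $\lenv{f}$ without increasing depth, which covers the paper's applications to the homogeneous polynomials $\PERM{n}$ and $\SSTCONN{n,d}$. But for a general non-homogeneous circuit the subadditivity claim, as you (and the paper) justify it, fails, and this---not the bound $\weight{u}\leq 1$, which you handle correctly---is the point the argument needs to repair.
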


\begin{proof}
  Let $\F$ be a circuit producing $f$; hence, $F=f$.  For a gate
  $u\in\F$, let $d_u$ be the minimum degree of the polynomial produced
  at $u$. By Theorem~\ref{lem:contain}, we know that $F$ can be
  written as a sum $F=A_u\pr B + F_{u=0}$, where $A_u$ is the
  polynomial produced at gate $u$. Since $A_u\pr B\subseteq f$, and
  $A_u$ has minimum degree $d_u$, the polynomial $B$ must contain a
  monomial $p$ of degree $|p|\geq d-d_u$.  Hence, by
  Observation~\ref{obs:product}, we have that $|A_u|\leq
  \ddeg{f}{d-d_u}$. This suggests the following weighting of gates:
  \[
  \weight{u}=\frac{|A_u|}{\ddeg{f}{d-d_u}}\,.
  \]
  The output gate $v$ then gets weight $\weight{v}\geq
  |f|/\ddeg{f}{d-d} =1$, whereas all other gates get weights $\leq 1$.
  Moreover, since for every product gate $u=v\pr w$, we have that
  $|A_u|=|A_v|\cdot|A_w|$ and $d_u=d_v+d_w$, the decrease
  $\factor{r,s}$ of this weighting coincides with
  $\Factor{f}{r,s}$. So, it remains to show that the weighting is
  subadditive.

  To show this, let $u=v+w$ be a sum gate.  Then $d_u= \min\{d_v,
  d_w\}$, and hence, $d-d_u= \max\{d-d_v, d-d_w\}$.  So,
  \begin{align*}
    \weight{v\su w}&=\frac{|A_v|+|A_w|}{\ddeg{f}{d-d_u}} =
    \frac{|A_v|+|A_w|}{\max\{\ddeg{f}{d-d_v}, \ddeg{f}{d-d_w}\}} \leq
    \weight{v}+\weight{w}\,.
  \end{align*}
\end{proof}

\begin{thm}[\cite{shamir,tiwari}]\label{thm:TT1}
  If $f=\PERM{n}$, then $\depth{f}\geq n + \log_2n-1$.
\end{thm}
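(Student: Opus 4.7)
The plan is to invoke Lemma~\ref{lem:shamir} applied to $f = \PERM{n}$, which is homogeneous of degree $d = n$. The first step is to identify the factor-density $\ddeg{f}{r}$: a multilinear degree-$r$ monomial that divides some permanent monomial corresponds to a partial matching of size $r$ in $K_{n,n}$, and exactly $(n-r)!$ perfect matchings extend such a partial matching. Hence $\ddeg{f}{r} = (n-r)!$, and substituting into the definition of $\Factor{f}{r,s}$ collapses to
\[
\Factor{f}{r,s} \;=\; \frac{\ddeg{f}{n-r}}{\ddeg{f}{n-s}\cdot \ddeg{f}{n-r+s}} \;=\; \frac{r!}{s!\,(r-s)!} \;=\; \binom{r}{s}.
\]

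With this in hand, Lemma~\ref{lem:shamir} supplies a sequence $n = r_0 > r_1 > \cdots > r_m = 1$ with $r_{i+1} \geq r_i/2$ for every $i$ and with $m \geq \log_2 n$, giving
\[
\depth{f} \;\geq\; m \;+\; \sum_{i=0}^{m-1} \log_2 \binom{r_i}{r_{i+1}}.
\]
The key estimate is the elementary inequality $\binom{r}{k} \geq 2^k$ for $k \leq r/2$, which follows from $\binom{2k}{k} = \prod_{i=1}^{k}(k+i)/i \geq 2^k$ together with monotonicity of $\binom{\cdot}{k}$ in the top argument. Writing $k = r_i - r_{i+1}$, the hypothesis $r_{i+1} \geq r_i/2$ is exactly the condition $r_i \geq 2k$, so
\[
\binom{r_i}{r_{i+1}} \;=\; \binom{r_i}{r_i - r_{i+1}} \;\geq\; 2^{r_i - r_{i+1}}.
\]

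Summing now telescopes,
\[
\sum_{i=0}^{m-1} \log_2 \binom{r_i}{r_{i+1}} \;\geq\; \sum_{i=0}^{m-1}(r_i - r_{i+1}) \;=\; r_0 - r_m \;=\; n - 1,
\]
and combining with $m \geq \log_2 n$ yields the desired bound $\depth{f} \geq n + \log_2 n - 1$. There is no serious obstacle once one spots that the factor-density ratio simplifies to a binomial coefficient and that the constraint $r_{i+1} \geq r_i/2$ is precisely what licenses the estimate $\binom{r}{k} \geq 2^k$; the rest is just the telescoping sum $\sum (r_i - r_{i+1}) = n - 1$.
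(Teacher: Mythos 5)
Your proposal is correct and follows exactly the paper's own route: apply Lemma~\ref{lem:shamir} with $\ddeg{\PERM{n}}{l}=(n-l)!$, observe that the decrease collapses to $\Factor{f}{r,s}=\binom{r}{s}$, and use $r_{i+1}\geq r_i/2$ to bound $\binom{r_i}{r_{i+1}}\geq 2^{r_i-r_{i+1}}$ so the product telescopes to $2^{n-1}$. The only difference is that you spell out the elementary justification of $\binom{r}{k}\geq 2^k$ for $k\leq r/2$, which the paper leaves implicit.
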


\begin{proof}
  The permanent polynomial $f=\PERM{n}$ is a homogeneous multilinear
  polynomial of degree $d=n$. Moreover, $\ddeg{f}{l}=(n-l)!$ holds for
  any $1\leq l\leq d$.  Hence,
  \[
  \Factor{f}{r,s} =\frac{r!}{s!(r-s)!}  =\binom{r}{s}\,.
  \]
  But $r_{i+1}\geq \frac{1}{2}r_i$ implies that $\binom{r_i}{r_{i+1}}
  \geq 2^{r_i-r_{i+1}}$. Hence,
  \[
  \prod_{i=0}^{m-1}\Factor{f}{r_i,r_{i+1}}=
  \prod_{i=0}^{m-1}\binom{r_i}{r_{i+1}} \geq 2^{r_0-r_m}=2^{n-1}\,.
  \]
\end{proof}

This lower bound for $f=\PERM{}$ is not surprising, since $\depth{f}$
is always at least logarithmic in $\R[f]$, and we already know
(Theorem~\ref{thm:bounds1}) that $\R[f]$ is exponential for this
polynomial.  More interesting, however, is that the argument above
allows to prove super-logarithmic depth lower bounds even for
polynomials that \emph{have} circuits of polynomial size.

To demonstrate this, consider the following \emph{layered
  $st$-connectivity polynomial} $\SSTCONN{n,d}$. The monomials of this
polynomial correspond to $st$-paths in a layered graph. We have $d+1$
disjoint layers, where the first contains only one node $s$, the last
only one node $t$, and each of the remaining $d-1$ layers contains $n$
nodes.  Monomials of $\SSTCONN{n,d}$ have the form $
x_{s,a_1}x_{a_1,a_2}\cdots x_{a_{d-2},a_{d-1}}x_{a_{d-1},t}$ with
$a_i$ belonging to the $i$-th layer. In other words, this polynomial
corresponds to computing the $(s,t)$-entry of the product of $d-1$
matrices of dimension $n\times n$. Hence, it can be produced by a
circuit of depth $O((\log d)(\log n))$.

\begin{thm}[\cite{shamir,tiwari}]\label{thm:TT2}
  $\depth{\SSTCONN{n,d}}\geq (\log_2 d)(1+\log_2 n)$.
\end{thm}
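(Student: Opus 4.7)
The plan is to apply Lemma~\ref{lem:shamir} to $f = \SSTCONN{n,d}$. Since $f$ is homogeneous of degree $d$ (every monomial encodes an $st$-path with exactly $d$ edges), the main work is to compute the factor-density measure $\ddeg{f}{l}$ and then show that $\Factor{f}{r,s}$ simplifies to a quantity independent of $r$ and $s$.

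First I would compute $\ddeg{f}{l}$. A monomial $p$ of degree $l$ can be a factor of some monomial of $f$ only if the $l$ corresponding edges form a contiguous sub-path between some layer $i$ and layer $i+l$. The number of ways to complete such a sub-path to a full $st$-path equals the product of the number of ways to extend from $s$ to the leftmost endpoint of the sub-path (this is $n^{i-1}$ when $i\geq 1$, and $1$ when $i=0$), and the number of ways to extend from the rightmost endpoint to $t$ (this is $n^{d-i-l-1}$ when $i+l\leq d-1$, and $1$ when $i+l=d$). The maximum over $i$ is clearly attained when the sub-path is anchored at $s$ or at $t$, giving
\[
\ddeg{f}{l} = n^{d-1-l}\quad\text{for every } 0\leq l\leq d-1.
\]

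Next, I would plug this into the definition of $\Factor{f}{r,s}$ from Lemma~\ref{lem:shamir}. For any $1\leq s < r\leq d$, the arguments $d-r$, $d-s$, and $d-r+s$ all lie in the range $[0,d-1]$, so
\[
\Factor{f}{r,s} \;=\; \frac{\ddeg{f}{d-r}}{\ddeg{f}{d-s}\cdot \ddeg{f}{d-r+s}} \;=\; \frac{n^{r-1}}{n^{s-1}\cdot n^{r-s-1}} \;=\; n.
\]
Note that the outcome is a constant $n$ that does not depend on the choice of $r$ or $s$.

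Finally, I would apply Lemma~\ref{lem:shamir} with $m=\log_2 d$: there exists a sequence $d=r_0>r_1>\cdots>r_m=1$ with $r_{i+1}\geq r_i/2$ for which
\[
\depth{f} \;\geq\; m + \log_2\prod_{i=0}^{m-1}\Factor{f}{r_i,r_{i+1}} \;=\; m + \log_2 n^m \;=\; (\log_2 d)(1+\log_2 n),
\]
as desired. The only non-routine step is identifying that the maximum in $\ddeg{f}{l}$ is attained by sub-paths touching $s$ or $t$; once this is observed, the exact cancellation $\Factor{f}{r,s}=n$ drops out and the rest is immediate from Lemma~\ref{lem:shamir}.
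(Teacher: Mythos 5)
Your proof follows the paper's argument essentially verbatim: compute $\ddeg{f}{l}=n^{d-1-l}$, observe that $\Factor{f}{r,s}=n$ for all $1\leq s<r\leq d$, and apply Lemma~\ref{lem:shamir}. One small slip: your claim that a degree-$l$ monomial can be a factor of a monomial of $f$ \emph{only if} its edges form a contiguous sub-path is false (any subset of the edges of an $st$-path, contiguous or not, is a factor of that path's monomial), and since $\ddeg{f}{l}$ is a maximum over all degree-$l$ monomials you must also bound the non-contiguous ones; this is easily repaired by noting, as the paper does, that any $l$ edges extendable to an $st$-path form vertex-disjoint sub-paths constraining at least $l$ inner nodes, hence lie in at most $n^{d-1-l}$ path monomials, so the maximum $n^{d-1-l}$ is unaffected and the rest of your computation goes through unchanged.
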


\begin{proof}
  The polynomial $f=\SSTCONN{n,d}$ is a multilinear homogeneous
  polynomial of degree $d$ with $|f|=n^{d-1}$ monomials. To estimate
  the factor-density $\ddeg{f}{l}$, let us fix a set $E$ of $|E|=l$
  edges. Every edge $e\in E$ constrains either two inner nodes (if
  $s,t\not\in e$) or one inner node.  Thus, if we fix $l$ edges, then
  at least $l$ inner nodes are constrained, implying that only
  $\ddeg{f}{l}\leq n^{d-1-l}$ paths can contain all these edges. In
  fact, we have an equality $\ddeg{f}{l}= n^{d-1-l}$: every monomial
  $x_{s,a_1}x_{a_1,a_2}\cdots x_{a_{l-1},a_{l}}$ consisting of initial
  $l$ edges is a factor of exactly $n^{d-1-l}$ monomials of $f$. Thus,
  the decrease in this case is
  \[
  \Factor{f}{r,s}=\frac{\ddeg{f}{d-r}}{\ddeg{f}{d-s}\cdot
    \ddeg{f}{d-(r-s)}} =\frac{n^{r-1}}{n^{s-1}\cdot n^{r-s-1}}=n
  \]
  for all $1\leq s<r\leq d$. Lemma~\ref{lem:shamir} yields
  $\depth{f}\geq \log_2 d + \log_2 n^{\log_2 d}$, as desired.
\end{proof}

\begin{table}
  \begin{center}
    \begin{tabular*}{0.85\textwidth}{l@{\hskip 1cm}l@{\hskip 1cm}l}
      {\bf Bound} & {\bf Property of $f$} & {\bf Ref.}\\
      \noalign{\smallskip}
      $\B(f) > t$ & $f$ is not $t$-simple (Def.~\ref{def:simple})  & Thm.~\ref{thm:crit}
      \\[0.5ex]
      $\A(f)=\R[f]$ & $f$ is homogeneous & Thm.~\ref{thm:homog}\\[0.5ex]
      $\R[f]\geq |f|$ &  $f$ is separated (Def.~\ref{def:separated})& Thm.~\ref{thm:schnorr}\\[0.5ex]
      $\displaystyle \R[f]\geq \frac{|f|}{2kl^2}$ &  $A\pr B\subseteq f$ implies $|A|\leq l$ or $|B|\leq k$ & Thm.~\ref{thm:GS}\\[0.5ex]
      $\displaystyle \R[f]\geq \frac{|f|}{\ddeg{f}{r}\cdot \ddeg{f}{d-r}}$ & $f$ of minimum degree $d$ & Lem.~\ref{lem:simple}\\
      \noalign{\smallskip}
    \end{tabular*}
    \caption[]{A summary of general lower bounds. Here $\A$ is an
      arbitrary tropical semiring, $\ddeg{f}{r}$ is the maximum
      possible number of monomials of $f$ containing a fixed monomial
      of degree~$r$, and $r$ is some integer $m/3\leq r\leq 2m/3$.}
    \label{tab:best}
  \end{center}
\end{table}

\section{Conclusion and Open Problems}

In this paper we summarized known and presented some new lower-bound
arguments for tropical circuits, and hence, for the dynamic
programming paradigm; Table~\ref{tab:best} gives a short overview.  We
have also shown that these bounds already yield strong (even
exponential) lower bounds for a full row of important polynomials (see
Table~\ref{tab:bounds}).  Still, the known arguments seem to fail for
non-homogeneous polynomials like $\CONN{}$ or $\STCONN{}$.

Almost exact lower bounds on the \emph{depth} circuits computing these
polynomials are known even in the boolean semiring: $\Theta(\log^2 n)$
for $\STCONN{n}$ proved by \Cite{Karchmer and Wigderson}{KW}, and
$\Omega(\ln^2 n/\ln\ln)$ proved by \Cite{Goldmann and
  H{\aa}stad}{hastad} for $\CONN{n}$; \Cite{Yao}{yao} earlier proved
$\Omega(\ln^{3/2} n/\ln\ln)$ for this latter polynomial. By
Lemma~\ref{lem:bool}, these bounds hold also in tropical semirings.

But the situation with estimating the \emph{size} of circuit for these
polynomial remains unclear. We know (Theorem~\ref{thm:upper-bounds})
that both of them have boolean and $\Min$-circuits of size $O(n^3)$,
but no lower bound larger than a trivial quadratic is known.

\begin{probl}\label{prob:stconn}
  Does $\B(f)=\Omega(n^3)$ or at least $\Min(f)=\Omega(n^3)$ hold for
  $f=\STCONN{n}$ and/or $f=\CONN{n}$?
\end{probl}
Note that the lower bound $\Omega(n^3)$ for the all-pairs shortest
paths polynomial $\APSP{}$, given in Corollary~\ref{cor:APSP} does not
automatically imply the same lower bounds for the connectivity
polynomial $\CONN{}$: a circuit for $\CONN{}$ needs \emph{not} to
compute the polynomials of $\APSP{}$ at \emph{separate} gates.

One could show $\Min(\CONN{})=\Omega(n^3)$ by showing that monotone
arithmetic circuits for the following ``multiplicative version'' of
the triangle polynomial $\tR{n}$ require $\Omega(n^3)$ gates. Recall
that $\tR{n}(x,y,z)=\sum_{i,j\in [n]} z_{ij}\sum_{k\in
  [n]}x_{ik}y_{kj}$.  We already know (see Corollary~\ref{cor:APSP})
that $\R[\tR{n}]=\Theta(n^3)$, and hence also
$\Min(\tR{n})=\Theta(n^3)$ since the polynomial is homogeneous.
Replace now the outer sum by product, and consider the polynomial
$\tRP{n} = \prod_{i,j\in [n]} z_{ij}\sum_{k\in [n]}x_{ik}y_{kj}$.

\begin{probl}
  Does $\R[\tRP{n}]=\Omega(n^3)$?
\end{probl}
If true, this would yield $\Min(\CONN{n})=\Omega(n^3)$, because the
polynomial $\tRP{n}$ is homogeneous (of degree $3n^2$).

\subsubsection*{Acknowledgements}
I am thankful to Georg Schnitger and Igor Sergeev for interesting
discussions.

\small

\begin{thebibliography}{10}

\bibitem{AB87}
N.~Alon and R.~Boppana.
\newblock The monotone circuit complexity of boolean functions.
\newblock {\em Combinatorica}, 7(1):1--22, 1987.

\bibitem{AC88}
N.~Alon and Fan~R.K. Chung.
\newblock Explicit constructions of linear sized tolerant networks.
\newblock {\em Discrete Math.}, 72:15--19, 1989.

\bibitem{andr1}
A.E. Andreev.
\newblock On a method for obtaining lower bounds for the complexity of
  individual monotone functions.
\newblock {\em Soviet Math. Dokl.}, 31(3):530--534, 1985.

\bibitem{andr2}
A.E. Andreev.
\newblock A method for obtaining efficient lower bounds for monotone
  complexity.
\newblock {\em Algebra and Logics}, 26(1):1--18, 1987.

\bibitem{BS83}
W.~Baur and V.~Strassen.
\newblock The complexity of partial derivatives.
\newblock {\em Theoret. Comput. Sci.}, 22:317--330, 1983.

\bibitem{bellman}
R.~Bellman.
\newblock On a routing problem.
\newblock {\em Quarterly of Appl. Math.}, 16:87--90, 1958.

\bibitem{floyd}
R.W. Floyd.
\newblock Algorithm 97, shortest path.
\newblock {\em Comm. ACM}, 5:345, 1962.

\bibitem{ford}
L.R. Ford.
\newblock Network flow theory.
\newblock Technical Report P-923, The Rand Corp., 1956.

\bibitem{gashkov}
S.B. Gashkov.
\newblock On one method of obtaining lower bounds on the monotone complexity of
  polynomials.
\newblock {\em Vestnik MGU, Series 1 Mathematics, Mechanics}, 5:7--13, 1987.

\bibitem{GS}
S.B. Gashkov and I.S. Sergeev.
\newblock A method for deriving lower bounds for the complexity of monotone
  arithmetic circuits computing real polynomials.
\newblock {\em Math. Sbornik}, 203(10):33--70, 2012.

\bibitem{hastad}
M.~Goldmann and J.~H{\aa}stad.
\newblock Monotone circuits for connectivity have depth log n to the power
  (2-o(1)).
\newblock {\em SIAM J. Comput.}, 27:1283--1294, 1998.

\bibitem{hyafil}
L.~Hyafil.
\newblock On the parallel evaluation of multivariate polynomials.
\newblock {\em SIAM J. Comput.}, 8(2):120--123, 1979.

\bibitem{jerrum}
M.~Jerrum and M.~Snir.
\newblock Some exact complexity results for straight-line computations over
  semirings.
\newblock {\em J. ACM}, 29(3):874--897, 1982.

\bibitem{JL97}
J.W. Jordan and R.~Livn{\'e}.
\newblock Ramanujan local systems on graphs.
\newblock {\em Topology}, 36(5):1007--1--24, 1997.

\bibitem{Juk99}
S.~Jukna.
\newblock Combinatorics of monotone computations.
\newblock {\em Combinatorica}, 9(1):1--21, 1999.
\newblock Preliminary version: ECCC Report Nr. 26, 1996.

\bibitem{juk2008}
S.~Jukna.
\newblock Expanders and time-restricted branching programs.
\newblock {\em Theoret. Comput. Sci.}, 409(3):471--476, 2008.

\bibitem{myBFC-book}
S.~Jukna.
\newblock {\em Boolean Function Complexity: Advances and Frontiers}.
\newblock Springer-Verlag, 2012.

\bibitem{KW}
M.~Karchmer and A.~Wigderson.
\newblock Monotone circuits for connectivity require super-logarithmic depth.
\newblock {\em SIAM J. Discrete Math.}, 3:255--265, 1990.

\bibitem{oktai1}
O.M. Kasim-Zade.
\newblock On arithmetical complexity of monotone polynomials.
\newblock In {\em Proc. of All-Union Conf. on Theoretical Problems in
  Kybernetics}, volume~1, pages 68--69, 1986.
\newblock (in Russian).

\bibitem{oktai2}
O.M. Kasim-Zade.
\newblock On the complexity of monotone polynomials.
\newblock In {\em Proc. of All-Union Seminar on Discrete Math. and its Appl.},
  pages 136--138, 1986.
\newblock (in Russian).

\bibitem{kerr}
L.R. Kerr.
\newblock {\em The effect of algebraic structure on the computation complexity
  of matrix multiplications}.
\newblock PhD thesis, Cornell Univ., Ithaca, N.Y., 1970.

\bibitem{krieger}
M.P. Krieger.
\newblock On the incompressibility of monotone {DNF}s.
\newblock {\em Theory of Comput. Syst.}, 41(2):211--231, 2007.

\bibitem{kuznetsov}
S.E. Kuznetsov.
\newblock Monotone computations of polynomials and schemes without null-chains.
\newblock In {\em Proc. of 8-th All-Union Conf. on Theoretical Problems in
  Cybernetics}, volume~1, pages 108--109, 1985.
\newblock (in Russian).

\bibitem{LPS}
A.~Lubotzky, R.~Phillips, and P.~Sarnak.
\newblock Ramanujan graphs.
\newblock {\em Combinatorica}, 8(3):261--277, 1988.

\bibitem{Margulis}
G.A. Margulis.
\newblock Explicit constructions of concentrators.
\newblock {\em Problems of Inf. Transm.}, pages 323--332, 1975.

\bibitem{mehlhorn79}
K.~Mehlhorn.
\newblock Some remarks on {B}oolean sums.
\newblock {\em Acta Informatica}, 12:371--375, 1979.

\bibitem{mehlhorn}
K.~Mehlhorn and Z.~Galil.
\newblock Monotone switching circuits and boolean matrix product.
\newblock {\em Computing}, 16(1-2):99--111, 1976.

\bibitem{moore}
E.F. Moore.
\newblock The shortest path through a maze.
\newblock In {\em Proc. Internat. Sympos. Switching Theory}, volume~II, pages
  285--292. Harvard Univ. Press 1959, 1957.

\bibitem{Morgenstern}
M.~Morgenstern.
\newblock Existence and explicit constructions of $q+1$ regular {R}amanujan
  graphs for every prime power $q$.
\newblock {\em J.~Comb. Theory Ser. B}, 62(1):44--62, 1994.

\bibitem{Necipo2}
E.I. Nechiporuk.
\newblock On the topological principles of self-correction.
\newblock {\em Problemy Kibernetiki}, 21:5--102, 1969.
\newblock English translation in: Systems Theory Res. 21 (1970), 1--99.

\bibitem{paterson}
M.~Paterson.
\newblock Complexity of monotone networks for boolean matrix product.
\newblock {\em Theoret. Comput. Sci.}, 1(1):13--20, 1975.

\bibitem{pippenger80}
N.~Pippenger.
\newblock On another {B}oolean matrix.
\newblock {\em Theor. Comput. Sci.}, 11:49--56, 1980.

\bibitem{RY}
R.~Raz and A.~Yehudayoff.
\newblock Multilinear formulas, maximal-partition discrepancy and mixed-sources
  extractors.
\newblock {\em J. Comput. Syst. Sci.}, 77(1):167--190, 2011.
\newblock Preliminary version in: Proc. of 49th FOCS, 2008.

\bibitem{razb2}
A.A. Razborov.
\newblock A lower bound on the monotone network complexity of the logical
  permanent.
\newblock {\em Math. Notes Acad. of Sci. USSR}, 37(6):485--493, 1985.

\bibitem{razb1}
A.A. Razborov.
\newblock Lower bounds for the monotone complexity of some boolean functions.
\newblock {\em Soviet Math. Dokl.}, 31:354--357, 1985.

\bibitem{schnorr}
C.P. Schnorr.
\newblock A lower bound on the number of additions in monotone computations.
\newblock {\em Theor. Comput. Sci.}, 2(3):305--315, 1976.

\bibitem{shamir}
E.~Shamir and M.~Snir.
\newblock On the depth complexity of formulas.
\newblock {\em Math. Syst. Theory}, 13:301--322, 1980.

\bibitem{shpilka}
A.~Shpilka and A.~Yehudayoff.
\newblock Arithmetic circuits: A survey of recent results and open questions.
\newblock {\em Foundations and Trends in Theoret. Comput. Sci.},
  5(3-4):207--388, 2009.

\bibitem{tiwari}
P.~Tiwari and M.~Tompa.
\newblock A direct version of {S}hamir and {S}nir's lower bounds on monotone
  circuit depth.
\newblock {\em Inf. Process. Lett.}, 49(5):243--248, 1994.

\bibitem{VSBC}
L.G. Valiant, S.~Skyum, S.~Berkowitz, and C.~Rackoff.
\newblock Fast parallel computation of polynomials using few processors.
\newblock {\em SIAM J. Comput.}, 12(4):641--644, 1983.

\bibitem{warshall}
S.~Warshall.
\newblock A theorem on boolean matrices.
\newblock {\em J. ACM}, 9:11--12, 1962.

\bibitem{yao}
A.C. Yao.
\newblock A lower bound for the monotone depth of connectivity.
\newblock In {\em Proc. of 35th Ann. Symp. on Foundations of Comput. Sci.},
  pages 302--308, 1994.

\end{thebibliography}

\end{document}